\newcommand{\W}{\tensor{W}}
\newcommand{\R}{\tensor{R}}
\newcommand{\al}{\alpha}
\newcommand{\be}{\beta}
\newcommand{\ga}{\gamma}
\newcommand{\de}{\delta}
\newcommand{\la}{\lambda}
\newcommand{\li}{\mi{L}_\rho}
\newtheorem{theorem}{Theorem} 
\newtheorem{corollary}[theorem]{Corollary} 
\newtheorem{lemma}[theorem]{Lemma} 
\newtheorem{proposition}[theorem]{Proposition} 
\newtheorem{definition}[theorem]{Definition} 
\newtheorem{question}[theorem]{Question} 
\newtheorem{remark}[theorem]{Remark} 
\numberwithin{equation}{section} 
\numberwithin{theorem}{section} 
\newcommand{\mf}[1]{\mathfrak{#1}} 
\newcommand{\mi}[1]{\mathscr{#1}} 
\newcommand{\mc}[1]{\mathcal{#1}} 
\newcommand{\ms}[1]{\mathsf{#1}} 
\newcommand{\N}{\mathbb{N}} 
\newcommand{\paren}[1]{\left(#1\right)} 
\newcommand{\brak}[1]{\left[#1\right]} 
\newcommand{\abs}[1]{\left|#1\right|} 
\newcommand\norm[1]{\left\lVert#1\right\rVert} 
\newcommand{\gv}{\mathsf{g}} 
\newcommand{\hv}{\mathsf{h}} 
\newcommand{\Rv}{\mathsf{R}} 
\newcommand{\Rcv}{\mathsf{Rc}} 
\newcommand{\Rsv}{\mathsf{Rs}} 
\newcommand{\Dv}{\mathsf{D}} 
\newcommand{\wv}{\mathsf{W}} 
\newcommand{\Gammav}{\bar{\Gamma}} 
\newcommand{\kv}{\mathsf{k}} 
\newcommand{\gm}{\mf{g}} 
\newcommand{\sphere}{\mathring{\gamma}} 
\newcommand{\Dm}{\mf{D}} 
\newcommand{\Rm}{\mf{R}} 
\newcommand{\Rcm}{\mf{Rc}} 
\newcommand{\Rsm}{\mf{Rs}} 
\newcommand{\nablam}{\bar{\nabla}} 
\newcommand{\Dvm}{\bar{\Dv}} 
\newcommand{\Boxm}{\bar{\Box}} 
\newcommand{\ix}[1]{\bar{#1}} 
\newcommand{\ixd}[2]{\hat{#1}_{#2}} 
\newcommand{\ixr}[3]{\hat{#1}_{#2}\![#3]} 
\let\oldtocsection=\tocsection
\let\oldtocsubsection=\tocsubsection
\let\oldtocsubsubsection=\tocsubsubsection
\renewcommand{\tocsection}[2]{\hspace{0em}\oldtocsection{#1}{#2}}
\renewcommand{\tocsubsection}[2]{\hspace{1em}\oldtocsubsection{#1}{#2}}
\renewcommand{\tocsubsubsection}[2]{\hspace{2em}\oldtocsubsubsection{#1}{#2}}
\begin{document}

\title{Holographic Characterisation of Locally Anti-de Sitter Spacetimes}
\author{Alex McGill}

\address{School of Mathematical Sciences\\
Queen Mary University of London\\
London E1 4NS\\ United Kingdom}
\email{a.mcgill@qmul.ac.uk}

\begin{abstract}
It is shown that an $(n+1)$--dimensional asymptotically anti-de Sitter solution of the Einstein-vacuum equations is locally isometric to pure anti-de Sitter spacetime near the conformal boundary if and only if the boundary metric is conformally flat and (for $n \neq 4$) the boundary stress-energy tensor vanishes, subject to (i) sufficient (finite) regularity in the metric and (ii) the satisfaction of a geometric criterion on the boundary. A key tool in the proof is the Carleman estimate of \cite{McGill_2020}, which is applied to prove a unique continuation result for the Weyl curvature at the conformal boundary given vanishing to sufficiently high order over a sufficiently long timespan.
\end{abstract}

\maketitle

\tableofcontents


\section{Introduction} \label{sec.intro}

\subsection{Overview}

Asymptotically anti-de Sitter (aAdS) spacetimes play a central role in the conjectured \textit{AdS/CFT correspondence} \cite{Maldacena_1999}, which posits a duality between the bulk gravitational theory in such spacetimes and \textit{conformal field theories} (CFT) living on their lower--dimensional `boundaries'. In this sense, it is a realisation of the more general \textit{holographic principle}.

In order to rigorously determine the validity of this conjecture, one should aim to prove a $1-1$ correspondence between $(n+1)$--dimensional aAdS solutions of the Einstein-vacuum equations with a (conveniently normalised) negative cosmological constant $\Lambda$, \footnote{Here $Rc$ and $Rs$ represent the $g$--Ricci and scalar curvatures respectively.}
\begin{equation}
\label{eq.eins} Rc - \frac{1}{2} Rs \cdot g + \Lambda \cdot g = 0 \text{,} \qquad \Lambda := - \frac{ n (n - 1) }{ 2 } \text{.}
\end{equation}
and suitably-defined `data' which characterises a particular boundary CFT. To this end, significant progress has been made in the \textit{Riemannian} \cite{Biquard_2008} and \textit{stationary Lorentzian} \cite{Chrusciel_2011} cases. The series of articles \cite{Holzegel_2016,Holzegel_2017,McGill_2020} initiated the study of the \textit{non-stationary Lorentzian} setting by considering uniqueness properties of wave equations on fixed aAdS backgrounds near the boundary. The upcoming work \cite{Holzegel_2021} aims to use the theoretical framework developed in these articles to address the `full' problem (i.e. considering solutions of \eqref{eq.eins} rather than wave equations on fixed backgrounds). The present article aims to demonstrate that, even without applying the thus-far developed theory to the Einstein-vacuum equations themselves, this framework is already sufficiently powerful to prove physically interesting correspondence statements.

Specifically, we will provide necessary and sufficient conditions on the boundary data for a given aAdS solution of \eqref{eq.eins} to be locally isometric to the \textit{pure AdS} spacetime \footnote{This is the maximally symmetric solution of \eqref{eq.eins}.} near the boundary. Such spacetimes locally have constant curvature but may have nontrivial global topology- see, for example, the so-called \textit{BTZ black holes} in (2+1) dimensions \cite{Banados_1992,Banados_1993} and their higher-dimensional counterparts \cite{Minneborg_1996,Banados_1997}, which arise via identification of points along orbits of Killing vectors in pure AdS. \footnote{As the name suggests, these \textit{topological black holes} have the global structure of a black hole; they may be characterised by a `mass' and and `angular momentum', they have horizons and it has been shown that they form via gravitational collapse \cite{Mann_1993}.}

The article \cite{Skenderis_2000} identified the boundary data defined by such spacetimes; the contribution of the present article is to prove the converse statement. We therefore have a rigidity statement for vacuum aAdS spacetimes locally isometric to pure AdS near the boundary that is formulated entirely in terms of boundary data. In other words, it is holographic in nature. 

The preceding discussion raises the following questions which will be informally addressed in the remainder of this section:
\begin{itemize}
    \item In what sense does an aAdS spacetime have a boundary?
    \item What constitutes appropriate boundary data in this setting?
    \item What is the precise statement of the holographic rigidity result?
    \item How does this result follow from the framework developed in the above-referenced series of articles?
\end{itemize}

\subsection{aAdS Spacetimes and the Correspondence Problem} \label{subsec.background}

Pure AdS spacetime is the maximally symmetric solution of \eqref{eq.eins}. It has the representation $(\mi{M}_0,g_0)$, where
\begin{equation} \label{eq.global_ads_metric}
    \mi{M}_0=\mathbb{R}^{n+1} \text{,} \qquad g_0=(1+r^2)^{-1}dr^2-(1+r^2)dt^2+r^2 \cdot \sphere \text{.}
\end{equation}
Here, we've covered the manifold using polar coordinates; $\sphere$ is the unit round metric. Under the coordinate transformation $r=\frac{1}{4}\rho^{-1}(2+\rho)(2-\rho)$ it has the representation $(\mi{M} , g)$, where
\begin{equation}\label{eq.ads_fg_metric_intro}
    \mi{M}=(0,2] \times \mathbb{R} \times \mathbb{S}^{n-1} \text{,} \qquad g=\rho^{-2}\left[d\rho^2-dt^2+\sphere-\frac{1}{2} \rho^2\paren{dt^2+\sphere} +\frac{1}{16} \rho^4\paren{-dt^2+\sphere}\right] \text{.}
\end{equation}
Here, $\rho$ is the coordinate for the $(0,2]$ component. In particular, we may attach a timelike \textit{conformal boundary}
\begin{equation}\label{eq.confbdry}
    (\mi{I},\mf{g}) \simeq (\mathbb{R} \times \mathbb{S}^{n-1} , -dt^2+\sphere)
\end{equation}
to $\rho^2 g$ at $\rho=0$. 

An \textit{asymptotically} anti-de Sitter (aAdS) spacetime region $\paren{\mi{M},g}$ has a manifold of the form
\begin{align}
    \mi{M} &= (0,\rho_0] \times \mi{I} \text{,}\qquad \rho_0>0 \text{,}
\end{align}
for smooth manifolds $\mi{I}$. On such $\mi{M}$, it is natural to consider so-called \textit{vertical} tensor fields, which only have components in directions tangential to $\mi{I}$. In other words, a vertical tensor field on an aAdS region $\mi{M}$ can be viewed as a $\rho \in (0,\rho_0]$--parametrised family of tensor fields on $\mi{I}$. They thus admit an intuitive and consistent notion of `boundary limit' as $\rho \searrow 0$.

On these manifolds, we consider metrics of the form
\begin{align}
    g=\rho^{-2}\paren{d\rho^2+\gv_{ab} \, dx^a dx^b} \text{,} \label{eq.FG_gauge}
\end{align}
where $\gv$ is a vertical tensor field ($x^a$ are $\rho$--transported coordinates on $\mi{I}$) and there exists some Lorentzian metric $\gm$ on $\mi{I}$ such that $\gv \to \gm$ as $\rho \searrow 0$. $\paren{\mi{I},\gm}$ is then the conformal boundary associated with $\paren{\mi{M},g,\rho}$ and the form \eqref{eq.FG_gauge} is referred to as \textit{Fefferman-Graham gauge} \cite{Fefferman_1985}. Accordingly, we refer to this class of spacetimes as \textit{FG-aAdS segments}.

\begin{remark}
    Imposing Fefferman-Graham gauge does not result in any loss of generality. One can always transform more general spacetime metrics with the same $\rho \searrow 0$ asymptotics into the form \eqref{eq.FG_gauge} via an appropriate change of coordinates. See, for example, \cite{Graham_1991,Graham_1999_1,Graham_1999_2}.
\end{remark}

In this article, we will restrict our attention to \textit{vacuum solutions}, i.e. to those FG-aAdS segments whose metrics solve \eqref{eq.eins}. In \cite{Shao_2020}, it was demonstrated that such segments admit partial expansions at the conformal boundary $\mi{I}$ even for vertical metrics of only finite regularity:
\begin{align} \label{eq.FG_metric}
    \gv = 
        \begin{cases} 
            \sum_{k=0}^{\frac{n-1}{2}} \rho^{2k} \mf{g}^{(2k)} + \rho^n\mf{g}^{(n)} + \rho^n \ms{r} \qquad &\text{$n$ odd,} \\ 
            \sum_{k=0}^{\frac{n-2}{2}} \rho^{2k} \mf{g}^{(2k)} +\rho^n\log\rho \, \mf{g}^{(\star)}+ \rho^n\mf{g}^{(n)} + \rho^n \ms{r} \qquad &\text{$n$ even,}
        \end{cases}
\end{align}
where $\ms{r}$ is a vertical tensor field with vanishing boundary limit and $\mf{g}^{(0)}=\gm$ is the conformal boundary metric introduced above. Of particular interest is the fact that all terms below order $n$ and the trace/divergence of $\gm^{(n)}$ are formally determined by $\mf{g}^{(0)}$ (itself is \textit{not} formally determined by \eqref{eq.eins}), while all terms beyond order $n$ are formally determined by $\mf{g}^{(0)}$ and the trace/divergence--free part of $\mf{g}^{(n)}$ (also not formally determined by \eqref{eq.eins}) \footnote{In the context of the AdS/CFT correspondence, $\mf{g}^{(n)}$ arises from the expectation value of the boundary CFT stress-energy tensor \cite{DeHaro_2001}.}. This identifies the pair $\paren{\mf{g}^{(0)} , \mf{g}^{(n)}}$ as appropriate data that one could prescribe on $\mi{I}$ in an attempt to characterise the bulk spacetime.

An important observation is that there exist coordinate transformations $\paren{\rho,x^a} \to \paren{\tilde{\rho},\tilde{x}^a}$ preserving the Fefferman-Graham gauge \eqref{eq.FG_gauge}, i.e.
\begin{align}
    g=\tilde{\rho}^{-2}\paren{d\tilde{\rho}^2+\tilde{\gv}_{ab} \, d\tilde{x}^a d\tilde{x}^b} \text{,}
\end{align}
but altering the coefficients in the partial expansion of the corresponding vertical metric $\tilde{\gv}$ (now given in terms of $\tilde{\rho}$). In particular, such a coordinate transformation induces a \textit{conformal transformation} of $\mf{g}^{(0)}$; for the higher-order coefficients, the effect is more complicated \cite{Imbimbo_2000,Skenderis_2001}. Since the underlying physical theory is invariant under such transformations, it therefore only makes sense for us to speak of \textit{gauge-equivalent} classes of boundary data $\brak{\gm^{(0)} , \gm^{(n)}}$ in the above sense.

A naive way to rigorously formulate the above-described AdS/CFT problem would be as a \textit{boundary Cauchy problem} for \eqref{eq.eins} with data $\brak{\gm^{(0)} , \gm^{(n)}}$ prescribed on $\mi{I}$. However, in analogy with the boundary Cauchy problem for the wave equation on a cylinder, one in fact expects this problem to be \textit{ill-posed} given general (i.e. possibly non-analytic) boundary data. Nevertheless, it remains apposite to ask whether solutions, if they exist, remain unique:
\begin{question}\label{q.EVE_UC}
    Assuming that a solution of \eqref{eq.eins} exists, is it uniquely determined in the bulk (up to isometry) by $\brak{ \gm^{(0)} , \gm^{(n)} }$?
\end{question}

\subsection{The Rigidity Result}

Let's consider the above question in the case of pure AdS. In \cite{Skenderis_2000}, the authors proved that if a vacuum FG-aAdS segment ($n>2$) is locally isometric to pure AdS \footnote{\textit{Not} necessarily globally isometric to pure AdS; the boundary topology could differ from that given in \eqref{eq.confbdry}.} then its metric is given near the conformal boundary by
\begin{align}
    g=\rho^{-2}\paren{d\rho^2+\gv_{ab}} \text{,}\qquad \gv=\gm_{ab}-\mf{p}_{ab} \cdot \rho^2 +\frac{1}{4}\,\gm^{cd}\,\mf{p}_{ac}\,\mf{p}_{bd}\cdot\rho^4 \text{,}
\end{align}
where $\gm$ is conformally flat and $\mf{p}$ is the $\gm$--\textit{Schouten tensor}
\begin{align}\label{eq.schouten_def}
    \mf{p}:=\frac{1}{n-2}\paren{\mf{Rc}-\frac{1}{2(n-1)}\mf{Rs} \cdot \gm}\text{,}
\end{align}
in which $\mf{Rc}$ and $\mf{Rs}$ are the $\gm$--Ricci and scalar curvatures respectively.

It is then natural to ask if the converse statement is true, i.e. if a vacuum FG-aAdS segment $\paren{\mi{M},g}$ has boundary data $\brak{\gm^{(0)},\gm^{(n)}}$ in which $\gm$ is conformally flat and
\begin{align} \label{eq.pure_ads_boundary_data_intro}
    \gm^{(n)}=
    \begin{cases}
        \frac{1}{4}\, \gm^{-1}\,\mf{p}\,\mf{p} \qquad&\text{,}\qquad n=4 \text{,} \\
        0 \qquad&\text{,}\qquad n \neq 4 \text{,}
    \end{cases}
\end{align}
must $\paren{\mi{M},g}$ be locally isometric to pure AdS near $\mi{I}$? In this article we provide a positive answer to this question subject to a key geometric condition on the boundary that was first introduced in \cite{McGill_2020}:
\begin{definition} \footnote{This definition is stated in terms of a smooth time function $t$ on the boundary. Formally, this must be a restriction of a \textit{global time} as given in Definition \ref{def.aads_time}.}
    Suppose $\paren{\mi{M},g}$ is a vacuum FG-aAdS segment. We say that the \textbf{null convexity criterion} is satisfied on $\paren{\mi{I} , \gm}$ if the bounds
    \begin{equation}\label{eq.NCC} 
        \mf{p} ( Z, Z ) \geq C ( Z t )^2 \text{,} \qquad
        | \Dm^2_{ Z Z } t | \leq B ( Z t )^2 \text{,}
    \end{equation}
    hold for some constants $0 \leq B < C$, where $Z$ is any vector field on $\mi{I}$ satisfying $\gm ( Z, Z ) = 0$ and $\Dm$ is the $\gm$--Levi-Civita connection.
\end{definition}

The main result of this article may now be stated as follows:
\begin{theorem}\footnote{The rigorous statement of this result in Theorem \ref{th.primary} makes precise sense of `near' using Definition \ref{def.Cweight}.} \label{th.introcorrespondence}
    Fix $n>2$. Suppose $(\mi{M},g)$ is an $(n+1)$--dimensional vacuum FG-aAdS segment for which:
    \begin{itemize}
        \item $\gv$ is sufficiently (finitely) regular.
        \item $\mi{I}$ has compact cross-sections.
        \item The null convexity criterion \eqref{eq.NCC} is satisfied.
        \item $\mi{T} = \mi{I} \cap \{ t_0<t<t_1 \}$ is a sufficiently long timespan on the boundary. \footnote{Determined by the constants $B$ and $C$  featuring in the null convexity criterion.}
    \end{itemize}
    Then $\paren{\mi{M},g}$ is locally isometric to pure AdS near $\mi{T}$ if and only if $\paren{\mi{M},g}$ has boundary data for which the following hold on $\mi{T}$: 
    \begin{itemize}
        \item $\gm^{(0)}$ is conformally flat.
        \item$\gm^{(n)}=
        \begin{cases}
            \frac{1}{4} \, \gm^{-1} \cdot \mf{p} \cdot \mf{p} \text{,}\qquad &n=4\text{,}\\
            0 \text{,}\qquad &n \neq 4\text{.}
        \end{cases}$
    \end{itemize}
\end{theorem}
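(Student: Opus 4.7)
The ``only if'' direction is established in \cite{Skenderis_2000}, so the contribution is the converse. The unifying strategy is to show that the bulk Weyl curvature tensor $W$ of $g$ vanishes in a neighbourhood of $\mi{T}$. Since \eqref{eq.eins} forces $g$ to be Einstein, the vanishing of $W$ upgrades this to constant sectional curvature $-1$, which is exactly the statement that $(\mi{M}, g)$ is locally isometric to pure AdS. The assumption $n > 2$ is essential here, since in lower bulk dimensions $W$ is automatically trivial and a different characterisation would be needed.

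The first step is to derive a closed tensorial wave-type equation for $W$. Combining the second Bianchi identity with the Einstein-vacuum condition gives the once-contracted Bianchi $\nabla^a W_{abcd} = 0$, and taking a further divergence produces a system of the schematic form $\Box_g W = W \ast W + \Lambda W$, where $W \ast W$ is a quadratic contraction. This places $W$ in the class of equations to which the Carleman estimate of \cite{McGill_2020} applies; under the null convexity criterion \eqref{eq.NCC} and the hypothesis that $\mi{T}$ is sufficiently long (with length controlled by $B,C$), that estimate yields unique continuation from the boundary for vertical tensor fields that vanish to sufficiently high order in $\rho$ there.

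The heart of the proof is therefore to verify that $W$, suitably rescaled, does vanish to the required order at $\rho=0$. One exploits the partial Fefferman-Graham expansion \eqref{eq.FG_metric} and the fact that, by \eqref{eq.eins}, all coefficients $\mf{g}^{(2k)}$ with $2k < n$, together with the logarithmic term $\mf{g}^{(\star)}$ in even dimensions, are polynomial expressions in $\gm^{(0)}$ and its covariant derivatives formally determined by the Einstein equations. Conformal flatness of $\gm^{(0)}$ collapses all of these to the values of the pure AdS expansion \eqref{eq.ads_fg_metric_intro}; in particular it forces $\mf{g}^{(\star)}=0$, since $\mf{g}^{(\star)}$ is (a multiple of) the Fefferman-Graham obstruction tensor, which vanishes on conformally flat metrics. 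The prescription \eqref{eq.pure_ads_boundary_data_intro} for $\gm^{(n)}$ is precisely what is required for the order-$\rho^n$ coefficient to also match the pure AdS expansion. Feeding these identifications into the algebraic decomposition of $W$ in Fefferman-Graham gauge then shows that $\rho^2 W$ and its transverse $\rho$-derivatives vanish at $\rho=0$ to an order which, for sufficiently regular $\gv$, is large enough to trigger the Carleman estimate. Combining with the wave equation of the previous step yields $W\equiv 0$ on a neighbourhood of $\mi{T}$, and hence local isometry with pure AdS.

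The principal obstacle is the coefficient-tracking step. It requires, in each parity of $n$ and with special attention to $n=4$ (where the pure AdS value of $\gm^{(n)}$ is itself nontrivial and the logarithmic obstruction must be ruled out separately), a careful inductive accounting of how conformal flatness of $\gm^{(0)}$ and the prescription of $\gm^{(n)}$ cascade through the Fefferman-Graham expansion of $\gv$, and then through the resulting expansion of $W$. A secondary technical point is matching the wave-type system for $W$ with the precise boundary-weighted function spaces and weight exponents of \cite{McGill_2020}, so that the vanishing orders obtained at the boundary are genuinely sufficient to invoke the unique continuation conclusion. Once $W\equiv 0$ is secured, the passage to local isometry with pure AdS is a standard consequence of an Einstein metric having vanishing Weyl curvature.
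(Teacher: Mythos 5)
Your overall architecture matches the paper's: derive the wave equation $\Box_g W = W\ast W - 2nW$ from the Bianchi identity and \eqref{eq.eins}, use the boundary data to force high-order vanishing of (rescalings of) $W$ at $\rho=0$, invoke the Carleman estimate of \cite{McGill_2020} for unique continuation under the null convexity criterion, and conclude local isometry with pure AdS from $W\equiv 0$. The ``only if'' direction and the coefficient-tracking up to order $n$ (conformal flatness killing the sub-$n$ coefficients and the log term, the prescription of $\gm^{(n)}$ killing the order-$n$ Weyl coefficient) are also in line with Lemma \ref{th.conf_flat_implies_vanishing_star}, Corollary \ref{th.conf_flat_met_exp} and Proposition \ref{th.van_iff_bc}.

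The genuine gap is in the step where you claim that ``feeding these identifications into the algebraic decomposition of $W$ in Fefferman--Graham gauge then shows that $\rho^2 W$ and its transverse $\rho$-derivatives vanish at $\rho=0$ to an order which, for sufficiently regular $\gv$, is large enough to trigger the Carleman estimate.'' At the finite regularity assumed, the Fefferman--Graham expansion \eqref{eq.metric_FG_expansion} terminates at order $n$ with an unstructured remainder $\rho^n\ms{r}$, so the boundary conditions only yield $\wv^0,\wv^2 = o(\rho^{n-2})$ and $\wv^1 = o(\rho^{n-1})$ (Proposition \ref{th.van_iff_bc}). The unique continuation argument (Lemma \ref{th.UC_van_rates}) requires vanishing at rate $\rho^{\kappa_i+2}$ with $\kappa_0 \geq n-1$ and, in general, $\kappa_i \gtrsim \tfrac{1}{2}(n-1+C_0)$ where $C_0$ depends on the geometry and can far exceed $n$; so the order-$n$ information encoded in $(\gm^{(0)},\gm^{(n)})$ is strictly insufficient, and ``sufficiently regular $\gv$'' does not by itself produce a longer expansion — the cited expansion results stop at order $n$ regardless of regularity. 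The missing mechanism, which is the substantive new step in the paper, is the iterative integration of the vertical Bianchi transport equations \eqref{eq.ex_bi_2}, \eqref{eq.renorm_bianchi_3}, \eqref{eq.renorm_bianchi_2} from the boundary: each iteration trades one or two degrees of vertical regularity for two extra powers of $\rho$ in the vanishing rates of $\wv^0,\wv^1,\wv^2$, and one iterates $\lceil \tfrac{1}{2}(\kappa_{\mathrm{max}}-n+4)\rceil$ times (this is also what fixes the required $M_0$). Without this bootstrapping step (or an equivalent argument extending the metric expansion past order $n$ under the stated boundary conditions, which you would have to prove), the hypotheses of the Carleman estimate are never met and the proof does not close. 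A secondary omission, though you flag it as technical, is that the Carleman estimate applies to vertical tensor fields, so the spacetime equation for $W$ must first be decomposed into the system of vertical wave equations for $\hat{\wv}^0,\wv^1,\wv^2$ with nonlinearities verified to satisfy the admissible bound \eqref{eq.intro_uc_G}.
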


\begin{remark}
    Proposition \ref{th.gauge_invariance} demonstrates that the stated conditions on $\gm^{(0)}$ and $\gm^{(n)}$ in Theorem \ref{th.introcorrespondence} are `gauge-invariant' in the sense that if one representative of a gauge-equivalent class of boundary data satisfies these conditions (and therefore corresponds to a FG-aAdS segment that is locally isometric to pure AdS) then the same must be true for all other representatives of the class.
\end{remark}

\begin{remark} \label{re.gen_ncc}
    The null convexity criterion as given in \eqref{eq.NCC} is gauge-dependent in the sense that it may hold for one representative of a gauge-equivalent class (in the above-defined sense) but not for another. However, for our purposes it is sufficient to demonstrate that \eqref{eq.NCC} holds for one particular representative of the gauge-equivalent class; once this is obtained, its key role is in the construction of a \textbf{pseudoconvex foliation} \cite{Holzegel_2016,Holzegel_2017,McGill_2020} of the near-boundary region (a gauge-invariant notion).
    
    In the upcoming work \cite{Chatzikaleas_2021} the authors formulate a gauge-invariant version of the null convexity criterion. One may straightforwardly replace the null convexity assumption and associated constructions in Theorem \ref{th.introcorrespondence} with this generalised criterion.
\end{remark}


\subsection{Background and Proof Outline} \label{subsec.proofoutlines}

\subsubsection{Unique Continuation for the Weyl Curvature}

The \textit{Weyl curvature} is the traceless part of the Riemann tensor; if it vanishes for a solution of \eqref{eq.eins} then the solution has constant curvature (i.e. it is \textit{maximally symmetric}) and is thus locally isometric to pure AdS since \eqref{eq.eins} involves a negative cosmological constant \cite{Carroll_2004}. The key observation is that the Weyl curvature of a vacuum FG-aAdS segment satisfies a wave equation. Hence, to prove our result, we expect to be able to apply the theory developed in the sequence of articles \cite{Holzegel_2016,Holzegel_2017,McGill_2020} which considered tensorial solutions to wave equations of the form
\begin{equation} \label{eq.KG}
    (\Box_g+\sigma)u = \mathcal{G}(u,\nabla u) \text{,}\qquad \sigma \in \mathbb{R} \text{,}
\end{equation}
on fixed aAdS backgrounds \footnote{$\mathcal{G}(u,\nabla u)$ represents lower-order, possibly non-linear terms.}. Specifically, these articles aimed to determine if a solution $u$ of \eqref{eq.KG} with vanishing boundary data on $\mi{I}$ must necessarily vanish in the interior. This is what is known as a \textit{unique continuation} problem.

A positive answer to this problem was provided in the first of these articles, \cite{Holzegel_2016}. However, this rested on the assumptions that:
\begin{enumerate}
    \item The boundary metric is static.
    \item $u$ vanishes at a sufficiently fast rate \textit{along a sufficiently long timespan} on $\mi{I}$.
\end{enumerate}
The first assumption constitutes a significant restriction on the class of spacetimes for which the result may be applied. Furthermore, it was not known if the second assumption was strictly necessary.

The subsequent article \cite{Holzegel_2017} weakened assumption (1) by generalising to spacetimes with boundary metrics of only bounded `non-stationarity'. However, this still left open the question of whether assumption (2) could be weakened (or even removed). 

Most recently, \cite{McGill_2020} further weakened assumption (1) by permitting spacetimes with general time functions; the new assumption was then formulated as the null convexity criterion \eqref{eq.NCC}. This article also rigorously justified the necessity of assumption (2) by linking it to the trajectories of 

21 warnings
 near-boundary null geodesics. Specifically, upper and lower bounds (depending on the constants $B,C$ featuring in \eqref{eq.NCC}) were proved on the `time of return' of such geodesics to $\mi{I}$. Since counterexamples to unique continuation were constructed in \cite{Alinhac_1995} using geometric optics methods with these geodesics, \footnote{These counterexamples were only constructed for the case when the mass $\sigma$ is the specific value that makes the corresponding Klein-Gordon equation conformally invariant. In the upcoming work \cite{Guisset_2021} the authors intend to extend the construction to the non-conformal case.} the bounds show that there is:
\begin{itemize}
    \item A maximum timespan across which such counterexamples may be constructed.
    \item A minimum timespan across which we must assume vanishing of our field in order to eliminate the possibility of such counterexamples existing. Crucially, this minimum timespan matches the one in assumption (2).
\end{itemize}

FG-aAdS segments are constructed in such a way that it is natural to view tensor fields as \textit{mixed}, containing vertical components that are treated using $\gv$ and spacetime components that are treated using $g$. \cite{McGill_2020} accordingly developed a mixed covariant formalism which makes sense of higher-order derivatives acting on vertical tensor fields. To this end, an extension $\Dvm$ of the $\gv$--Levi-Civita connection $\Dv$ is constructed \footnote{In exactly such a way that it acts as a tensor derivation and is $\gv$-compatible.} so as to permit covariant derivatives of vertical tensor fields in all $\mi{M}$--directions. A `mixed' connection $\nablam$ is then defined in such a way that it acts as the standard $g$--Levi-Civita connection $\nabla$ on the spacetime components and $\Dvm$ on the vertical components of a given tensor field; the vertical wave operator $\Boxm_g$ is defined as the $g$--trace of $\nablam^2$.

Working in this framework, the article studied vertical tensor field solutions $\ms{u}$ of wave equations of the form
\begin{align}\label{eq.KG_vertical}
    (\Boxm_g+\sigma)\ms{u} = \mathcal{G}(\ms{u},\Dvm\ms{u}) \text{,}\qquad \sigma \in \mathbb{R} \text{.}
\end{align}
The main result was as follows:
\begin{theorem}\cite{McGill_2020}\label{th.carleman_est}
    Assume the following:
    \begin{itemize}
        \item $( \mi{M}, g )$ is a FG-aAdS segment satisfying the null convexity criterion \eqref{eq.NCC}.
    
        \item There is some $p > 0$ such that $\mathcal{G}$ in \eqref{eq.KG_vertical} satisfies the bound
        \begin{equation}\label{eq.intro_uc_G} 
            | \mc{G} ( \ms{u}, \Dvm \ms{u} ) |^2 \lesssim \rho^{ 4 + p } | \Dvm \ms{u} |^2 + \rho^{ 3 p } | \ms{u} |^2 \text{.}
        \end{equation}
        
        \item $\ms{u}$ is a solution of \eqref{eq.KG_vertical} for which---for sufficiently large $\kappa$ depending on $\sigma$, $\gv$, $t$ and the rank of $\ms{u}$---the limit
        \begin{equation}
            \rho^{-\kappa} \ms{u} \to  0 \text{,} \qquad \rho \searrow 0 \text{,}
        \end{equation}
        holds over a sufficiently long timespan $\mi{T} \subseteq \mi{I}$ determined by $B,C$ from \eqref{eq.NCC}.
    \end{itemize}
    Then $\ms{u} \equiv 0$ in some interior neighbourhood of $\mi{T}$. \footnote{In general, one also requires a compact support assumption for $\ms{u}$ on level sets of $(\rho,t)$. In our case this need not be of concern since we assume $\mi{I}$ has compact cross-sections.}
\end{theorem}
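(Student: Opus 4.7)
The plan is to establish the unique continuation property by means of a Carleman estimate tailored to the aAdS asymptotics and the mixed covariant formalism. The first task is to construct a \emph{pseudoconvex foliation} near $\mi{T}$: using the null convexity criterion, I would exhibit a function $f$ defined in a one-sided neighbourhood of $\mi{T}$ in $\mi{M}$, built from $\rho$ and $t$, whose level sets $\{f = c\}$ are pseudoconvex with respect to the wave operator $\Boxm_g$ for sufficiently small $c$. The bounds $\mf{p}(Z,Z) \geq C(Zt)^2$ and $|\Dm^2_{ZZ} t| \leq B(Zt)^2$ on null $Z$ translate directly, after propagating $t$ inward as a $\rho$-independent function and computing Hessians of $f$ against null directions of $g$, into the positive-definiteness of the Hörmander-type symbol condition for $\Boxm_g$ on the level sets; the constants $B<C$ control both how close to $\mi{I}$ the foliation extends and the minimum boundary timespan on which it is defined.

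The central technical step is to derive, for a large positive parameter $\la$, an inequality of the form
\begin{equation*}
    \la \int e^{-2\la f} \rho^{p_1} \brak{ \rho^2 | \Dvm \ms{u} |^2 + | \ms{u} |^2 } \, \mathrm{dVol}_g \lesssim \int e^{-2\la f} \rho^{p_2} | (\Boxm_g + \sigma) \ms{u} |^2 \, \mathrm{dVol}_g,
\end{equation*}
valid for vertical tensor fields $\ms{u}$ supported in the foliated neighbourhood and vanishing to sufficient order at $\mi{I}$, with exponents $p_1,p_2$ dictated by $n$, $\sigma$ and the rank of $\ms{u}$. I would proceed by conjugating $\Boxm_g+\sigma$ with $e^{-\la f}$, splitting the result into its formally self-adjoint and skew-adjoint parts, and pairing the cross-term with $\ms{u}$ in $L^2$. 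Integration by parts converts this into a quadratic form in $(\ms{u},\Dvm\ms{u})$; positivity of the principal part on each level set is exactly the pseudoconvexity of $f$, while the zeroth-order remainder, after careful tracking of $\rho$-weights, acquires the coercive prefactor $\la$. The mixed connection $\nablam$ introduces commutator contributions such as $[\Dvm_\rho,\Dvm_a]$ and curvature couplings of indefinite sign; these must be absorbed into the principal term using positive powers of $\rho$ and of $\la$, and it is precisely this absorption that fixes the threshold $\kappa$ in the statement as a function of $\sigma$, $\gv$, the timespan on $\mi{T}$ and the rank of $\ms{u}$.

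With the Carleman estimate in hand, the unique continuation conclusion follows by a by-now-standard argument. Apply the estimate to $\chi \ms{u}$, where $\chi$ is a smooth cutoff localising to a region $\{f < f_0\}$ inside the foliated neighbourhood; the equation $(\Boxm_g+\sigma)\ms{u} = \mc{G}(\ms{u},\Dvm\ms{u})$ combined with the hypothesis \eqref{eq.intro_uc_G} on $\mc{G}$ converts the right-hand side into an expression involving $\ms{u}$ and $\Dvm\ms{u}$ themselves, and the factors $\rho^{4+p}$, $\rho^{3p}$ there supply the smallness needed to absorb these terms into the left-hand side once $\la$ is large enough. The decay hypothesis $\rho^{-\kappa}\ms{u} \to 0$ ensures that the boundary contributions at $\rho = 0$ produced by the integrations by parts vanish (the compactness of cross-sections of $\mi{I}$ takes care of the tangential boundary). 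What remains are commutator terms with $\chi$, which are supported in a thin shell $\{f \approx f_0\}$; sending $\la \to \infty$ renders these exponentially subdominant relative to the principal integral over any inner region $\{f < f_0'\}$ with $f_0'<f_0$, forcing $\ms{u} \equiv 0$ there.

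The hard part of this plan is the construction of the Carleman weight and the verification that its pseudoconvexity is sharp enough to match both the null convexity constants $B,C$ in \eqref{eq.NCC} and the geodesic return-time lower bound that rules out the Alinhac-type counterexamples of \cite{Alinhac_1995}; the minimum length of $\mi{T}$ in the hypothesis is dictated by this sharpness. A secondary but persistent obstacle is the tensorial character of $\ms{u}$: commutators between $\Boxm_g$ and $\nablam$ generate curvature terms that must be either dominated by the principal part, via the $\rho$-weights and the rank-dependent shift in $\kappa$, or shown to vanish in the boundary limit under the stated finite regularity of $\gv$.
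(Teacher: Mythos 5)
Your proposal is correct and follows essentially the same route as the source: the paper itself only cites \cite{McGill_2020} for this theorem, but the scheme you describe --- a pseudoconvex foliation by level sets of a weight $f$ built from $\rho$ and $t$ via the null convexity constants, a conjugated Carleman inequality with $\rho$-weighted exponents depending on $n$, $\sigma$ and the rank, absorption of $\mc{G}$ using the powers in \eqref{eq.intro_uc_G}, vanishing of the $\rho=0$ boundary terms via the $\rho^{-\kappa}\ms{u}\to 0$ hypothesis, and the cutoff-plus-$\la\to\infty$ argument --- is precisely the structure of the cited proof and is mirrored in the paper's own Theorem \ref{th.cest} and Lemma \ref{th.UC_van_rates}. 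The only cosmetic deviation is the form of the weight (the actual estimate uses $e^{-\la p^{-1}f^p}f^{n-2-p-2\kappa}$ with $f=\rho/\eta(t-t_0)$ rather than $e^{-2\la f}\rho^{p_1}$), which does not affect the argument's architecture.
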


The key tool in the proof of this unique continuation statement was a novel \textit{Carleman estimate}, a form of which we state in Theorem \ref{th.cest} before applying for our own purposes in the present article.

The spacetime wave equation for the Weyl curvature can be decomposed into a system of vertical wave equations for the vertical tensor fields
\begin{align}
    \wv^0_{abcd}:= \rho^2 W_{abcd}\text{,}\qquad \wv^1_{abc}:= \rho^2 W_{\rho abc}\text{,}\qquad \wv^2_{ab}:= \rho^2 W_{\rho a \rho b}\text{,}
\end{align}
that, together, fully determine the spacetime Weyl curvature. Crucially, the lower-order nonlinearities in these equations satisfy \eqref{eq.intro_uc_G}. Theorem \ref{th.carleman_est} is used to demonstrated that the spacetime Weyl curvature identically vanishes near the boundary given vanishing to sufficiently high order on approach to the boundary for $\wv^0$, $\wv^1$ and $\wv^2$. \footnote{These `vanishing rates' should be understood in the context of the Fefferman-Graham expansions for the vertical components of the Weyl curvature as given in Corollary \ref{th.fg_exp_w}.}

\subsubsection{Connection to the Boundary Data}

\cite[Proposition 2.25]{Shao_2020} uses \eqref{eq.eins} to relate each of the vertical components of the Weyl curvature to vertical metric quantities:
\begin{align}
    \wv^0_{abcd}&=\Rv_{abcd}+\frac{1}{2}\li\gv_{a[c}\li \gv_{d]b}+\rho^{-1}\paren{\gv_{a[c}\li\gv_{d]b}-\gv_{b[c}\li \gv_{d]a}} \text{,} \\
    \wv^1_{cab}&=\Dv_{[b}\li\gv_{a]c} \text{,} \\
    \wv^2_{ab}&=-\frac{1}{2}\li ^2\gv_{ab}+\frac{1}{2}\rho^{-1}\li \gv_{ab} +\frac{1}{4}\gv^{cd} \li \gv_{ac} \li \gv_{bd} \text{.}
\end{align} 
Moreover, we have boundary expansions for the right-hand sides in which the leading-order terms feature the $\gm$--Weyl and Cotton tensors $\mf{W}$ and $\mf{C}$, whose vanishing follows from the condition that $\gm$ is conformally flat: \footnote{See Proposition \ref{th.weyl_leading_order} for a precise statements and proof of these leading order expressions.}
\begin{align} \label{eq.weyl_leading_order}
    \wv^0= \mf{W} + o\paren{1} \text{,} \qquad
    \wv^1=\frac{\rho}{2(n-2)} \mf{C} + o\paren{\rho} \text{,} \qquad
    \wv^2= o\paren{1} \text{.}
\end{align}
This yields a base level of vanishing for each of $\wv^0$, $\wv^1$ and $\wv^2$ which we improve by making use of the equations obtained by expressing the second Bianchi identity in terms of vertical objects (see Proposition \eqref{eq.vert_Bianchi_eqns}). Iteratively substituting the partial Fefferman-Graham expansions of $\wv^0$, $\wv^1$ and $\wv^2$ into these equations and applying the condition on $\gm^{(n)}$, we obtain
\begin{align}
    \wv^0 = o(\rho^{n-2}) \text{,}\qquad \wv^1 = o(\rho^{n-1}) \text{,}\qquad \wv^2 = o(\rho^{n-2}) \text{.}
\end{align}
Once this is obtained, it is possible to iteratively integrate the vertical Bianchi equations; at each iteration, degrees of vertical regularity are exchanged for additional orders of vanishing. The process is continued until the vanishing rate required for unique continuation is obtained.


\subsection{Organisation} \label{subsec.organisation}

\begin{itemize}
    \item In Section \ref{sec.aads} we formally define FG-aAdS segments. Vertical tensor fields are introduced and a notion of boundary limits for such objects is established.
    
    \item In Section \ref{sec.aads_mixed} we introduce the mixed tensor calculus which will enable us to make sense of a wave operator acting on a vertical tensor field in a consistent way. We also present formulae used to convert spacetime equations into their mixed counterparts.
    
    \item In Section \ref{sec.aads_vacuum}, we study FG-aAdS segments whose metrics solve \eqref{eq.eins}. The spacetime Weyl curvature for such segments satisfies a wave equation; we use the tools developed in Section \ref{sec.aads_mixed} to decompose this into a system of wave equations for each of the vertical components of the spacetime Weyl curvature.
    
    \item In Section \ref{sec.tf_ce} we introduce global time functions and state the Carleman estimate from \cite{McGill_2020} in a form suited to our purposes.
    
    \item In Section \ref{sec.results2} we prove some preliminary results before stating and proving the main result of this article.
\end{itemize}


\subsection{Acknowledgements}

The author would like to thank Arick Shao for his support via discussions on a number of topics and the provision of notes regarding the computations involved in the proofs of Propositions \ref{thm.prelim_comm} and \ref{thm.prelim_decomp}.


\section{Asymptotically AdS Spacetimes} \label{sec.aads}

We begin by recalling some basic definitions from \cite{McGill_2020} concerning the spacetime manifolds on which we will be working and the natural tensor fields to consider on them.

\begin{definition} \label{def.aads_manifold}
An \textbf{aAdS region} is a manifold of the form 
\begin{equation}
\label{eq.aads_manifold} \mi{M} := ( 0, \rho_0 ] \times \mi{I} \text{,} \qquad \rho_0 > 0 \text{,}
\end{equation}
in which $\mi{I}$ is a smooth $n$-dimensional manifold for some $n \in \N$. Given an aAdS region $\mi{M}$, $\rho$ denotes the coordinate function on $\mi{M}$ projecting onto the $( 0, \rho_0 ]$-component and $\partial_\rho$ denotes the $\mi{M}$--lift of the canonical vector field $d\rho$ on $(0, \rho_0]$.
\end{definition}

\begin{definition} \label{def.aads_vertical}
The \textbf{vertical bundle} $\ms{V}^k_l \mi{M}$ of rank $( k, l )$ over $\mi{M}$ is the manifold of all rank $( k, l )$ tensors on level sets of $\rho$ in $\mi{M}$:
\begin{equation}
\label{eq.aads_vertical} \ms{V}^k_l \mi{M} = \bigcup_{ \sigma \in ( 0, \rho_0 ] } T^k_l \{ \rho = \sigma \} \text{.}
\end{equation}
Sections of $\ms{V}^k_l \mi{M}$ are called \textbf{vertical tensor fields} of rank $( k, l )$.\footnote{A vertical tensor field of rank $( k, l )$ on an aAdS region $\mi{M}$ can be equivalently viewed as a $\rho \in ( 0, \rho_0 ]$-parameterised family of rank $( k, l )$ tensor fields on $\mi{I}$.}
\end{definition}

\begin{definition} \label{def.aads_tensor}
We adopt the following notational conventions and natural identifications:
\begin{itemize}
\item Italicized font (as in $g$) denotes tensor fields on $\mi{M}$.

\item Serif font (as in $\gv$) denotes vertical tensor fields. Any vertical tensor field $\ms{A}$ can be uniquely identified with a tensor field on $\mi{M}$ by demanding that the contraction of any component of $\ms{A}$ with $\partial_\rho$ or $d \rho$ identically vanishes.

\item Fraktur font (as in $\mf{g}$) denotes tensor fields on $\mi{I}$. If $\mf{A}$ is a tensor field on $\mi{I}$ then $\mf{A}$ will also denote the vertical tensor field on $\mi{M}$ obtained by extending $\mf{A}$ as a $\rho$-independent family of tensor fields on $\mi{I}$.
\end{itemize}
\end{definition}

\begin{definition} \label{def.aads_vertical_lie}
Let $\mi{M}$ be an aAdS region, and let $\ms{A}$ be a vertical tensor field.
\begin{itemize}
\item Given any $\sigma \in ( 0, \rho_0 ]$, $\ms{A} |_\sigma$ denotes the tensor field on $\mi{I}$ obtained from restricting $\ms{A}$ to the level set $\{ \rho = \sigma \}$ and then identifying $\{ \rho = \sigma \}$ with $\mi{I}$.

\item The \textbf{$\rho$-Lie derivative of $\ms{A}$}, denoted $\li \ms{A}$, is defined to be the vertical tensor field satisfying
\begin{equation}
\label{eq.aads_vertical_lie} \li \ms{A} |_\sigma = \lim_{ \sigma' \rightarrow \sigma } ( \sigma' - \sigma )^{-1} ( \ms{A} |_{ \sigma' } - \ms{A} |_\sigma ) \text{,} \qquad \sigma \in ( 0, \rho_0 ] \text{.}
\end{equation}
\end{itemize}
\end{definition}

Next, we establish coordinate system conventions on $\mi{I}$ and $\mi{M}$:

\begin{definition}
    Suppose $\mi{M}$ is an aAdS region and $( U, \varphi )$ is a coordinate system on $\mi{I}$. We write $\varphi_\rho := ( \rho, \varphi )$ to denote the corresponding lifted coordinates on $( 0, \rho_0 ] \times U$ and adopt the following notational conventions:
    \begin{itemize}
        \item Latin indices $a, b, c, \dots$ denote $\varphi$-coordinate components.
        \item Greek indices $\alpha, \beta, \mu, \nu, \dots$ denote $\varphi_\rho$-coordinate components.
    \end{itemize}
\end{definition}
\begin{definition} \label{def.aads_coord}
    Suppose $\mi{M}$ is an aAdS region. A coordinate system $( U, \varphi )$ on $\mi{I}$ is called \textbf{compact} iff $\bar{U}$ is a compact subset of $\mi{I}$ and $\varphi$ extends smoothly to an open neighborhood of $\bar{U}$.
\end{definition}

We now define a notion of magnitude for vertical tensor fields with respect to a given coordinate system and use this to make sense of boundary limits in a natural way:

\begin{definition} \label{def.aads_limit}
    Let $\mi{M}$ be an aAdS region and fix some $M \geq 0$.
    Furthermore, let $\ms{A}$ and $\mf{A}$ be a rank $(k, l)$ vertical tensor field and a rank $(k, l)$ tensor field on $\mi{I}$ respectively.
    \begin{itemize}
        \item Given a compact coordinate system $( U, \varphi )$ on $\mi{I}$, we define \footnote{Each $\partial_{a_i}$ denotes a $\varphi$--coordinate derivative.}
        \begin{equation}
            \label{eq.aads_norm} | \ms{A} |_{ M, \varphi } := \sum_{ m = 0 }^M \sum_{ \substack{ a_1, \dots, a_m \\ b_1, \dots, b_k \\ c_1, \dots, c_l } } | \partial^m_{ a_1 \dots a_m } \ms{A}^{ b_1 \dots b_k }_{ c_1 \dots c_l } | \text{.}
        \end{equation}
        
        \item We write $\ms{A} \rightarrow^M \mf{A}$ iff given any compact coordinate system $( U, \varphi )$ on $\mi{I}$,
        \begin{equation}
            \label{eq.aads_limit} \lim_{ \sigma \searrow 0 } \sup_{ \{ \sigma \} \times U } | \ms{A} - \mf{A} |_{ M, \varphi } = 0 \text{.}
        \end{equation}
        
        \item $\ms{A}$ is \textbf{weakly locally bounded} iff for any compact coordinate system $( U, \varphi )$ on $\mi{I}$,
        \begin{equation}\label{eq.metric_weakly_locally_bounded}
            \sup_U\int_0^{\rho_0} \left.\abs{\ms{A}}_{0 \text{,} \varphi} \right|_{\sigma} d\sigma < \infty \text{.}
        \end{equation}

        \item We additionally define a local uniform norm of $\ms{A}$:
        \begin{equation}\label{eq.local_uniform_norm}
            \norm{\ms{A}}_{M \text{,} \varphi}:=\sup_{(0,\rho_0] \times U} \abs{\ms{A}}_{M \text{,} \varphi} \text{.}
        \end{equation}
        
        \item $\ms{A}$ is \textbf{locally bounded in $C^M$} iff for any compact coordinate system $( U, \varphi )$ on $\mi{I}$,
        \begin{equation} \label{eq.metric_locally_bounded}
            \norm{\ms{A}}_{M \text{,} \varphi} < \infty \text{.}
        \end{equation}
    \end{itemize}
\end{definition}

Now let's use the above-defined notion of a boundary limit to rigorously define the class of metrics we're interested in.

\begin{definition} \label{def.aads_metric}
$( \mi{M}, g )$ is called a \textbf{FG-aAdS segment} iff the following hold:
\begin{itemize}
\item $\mi{M}$ is an aAdS region and $g$ is a Lorentzian metric on $\mi{M}$.

\item There exists a rank $( 0, 2 )$ vertical tensor field $\gv$ such that
\begin{equation}\label{eq.aads_metric} 
g := \rho^{-2} ( d \rho^2 + \gv ) \text{.}
\end{equation}

\item There exists a Lorentzian metric $\gm$ on $\mi{I}$ such that
\begin{equation}
\label{eq.aads_metric_limit} \gv \rightarrow^0 \gm \text{.}
\end{equation}
\end{itemize}
Given such a FG-aAdS segment,
\begin{itemize}
\item We refer to the form \eqref{eq.aads_metric} for $g$ as the \textbf{Fefferman--Graham gauge condition}.

\item $( \mi{I}, \gm )$ is the \textbf{conformal boundary} associated with $( \mi{M}, g, \rho )$. \footnote{As noted in Section \ref{sec.intro}, there exist coordinate transformations preserving \eqref{eq.aads_metric} but altering $\gv$. Such transformations induce conformal transformations of $\gm$, so it only makes sense to speak of the conformal boundary metric up to a conformal factor.}
\end{itemize}
\end{definition}

We also define a suitable regularity class for the vertical metrics considered in this article.

\begin{definition}
    We say that a FG-aAdS segment $\paren{\mi{M},g}$ is \textbf{$k$--regular} if $\gv$ is locally bounded in $C^{k+2}$ and $\li\gv$ is weakly locally bounded.
\end{definition}

For the sake of clarity, let's define some further notational conventions before continuing.

\begin{definition} \label{def.aads_covar}
Given a FG-aAdS segment $( \mi{M}, g )$, 
    \begin{itemize}
        \item $g^{-1}$, $\nabla$, $\nabla^\#$, $R$, $Rc$ and $Rs$ respectively denote the metric dual, Levi-Civita connection, gradient, Riemann curvature, Ricci curvature and scalar curvature with respect to $g$. 
    
        \item $\gv^{-1}$, $\Dv$, $\Dv^\#$, $\Rv$, $\Rcv$ and $\Rsv$ respectively denote the above objects with respect to $\gv$.
    
        \item $\gm^{-1}$, $\Dm$, $\Dm^\#$, $\Rm$, $\Rcm$ and $\Rsm$ respectively denote the above objects with respect to $\gm$.
    \end{itemize}
\end{definition}


\section{The Mixed Tensor Calculus} \label{sec.aads_mixed}

\subsection{The Formalism}

In this section (which again follows the presentation of \cite{McGill_2020}), our aim is to make sense of a $g$-wave operator acting on a vertical tensor field---in such a way that it is compatible with standard covariant operations. Our first step in this direction is to construct connections on the vertical bundles which permit covariant derivatives of vertical tensor fields in all directions along $\mi{M}$.

\begin{definition}
    We denote multi-indices by $\ix{\mu} := \mu_1 \dots \mu_k$. Additionally, we write
    \begin{itemize}
        \item $\ixr{\mu}{i}{\al}$ to denote $\ix{\mu}$ with the $i^\text{th}$ component replaced with an $\al$-component.
        
        \item $\ixr{\mu}{i,j}{\al,\be}$ to denote $\ix{\mu}$ with the $i\textsuperscript{th}$ and $j\textsuperscript{th}$ components replaced with $\al$ and $\be$ components respectively.
    \end{itemize}
\end{definition}

\begin{proposition} \label{def.aads_vertical_connection}
    Let $( \mi{M}, g )$ be a FG-aAdS segment. There exists a unique connection $\Dvm$ on $\ms{V}^k_l\mi{M}$ such that the following hold for rank $\paren{k,l}$ vertical tensor fields $\ms{A}$ with respect to any coordinate system $( U, \varphi )$ on $\mi{I}$:
    \begin{align}
        \Dvm_c \ms{A}^{\ix{a}}_{\ix{b}} &= \Dv_c \ms{A}^{\ix{a}}_{\ix{b}} \text{,}\label{eq.c_deriv}\\
        \Dvm_\rho \ms{A}^{\ix{a}}_{\ix{b}} &= \li \ms{A}^{\ix{a}}_{\ix{b}} + \frac{1}{2} \sum_{ i = 1 }^{k} \gv^{a_i c} \li\gv_{cd} \ms{A}^{\ixr{a}{i}{d}}_{\ix{b}} - \frac{1}{2} \sum_{ j = 1 }^{l} \gv^{cd} \li\gv_{b_j c} \ms{A}^{\ix{a}}_{\ixr{b}{j}{d}} \text{.}\label{eq.rho_deriv}
    \end{align}
    Furthermore, for any vector field $X$ on $\mi{M}$,
    \begin{itemize}
        \item The following holds for vertical tensor fields $\ms{A}$ and $\ms{B}$:
    \begin{equation}
        \label{eq.aads_vertical_connection_leibniz} \bar{\Dv}_X ( \ms{A} \otimes \ms{B} ) = \bar{\Dv}_X \ms{A} \otimes \ms{B} + \ms{A} \otimes \bar{\Dv}_X \ms{B} \text{.}
    \end{equation}

    \item The following holds for vertical tensor fields $\ms{A}$ and tensor contraction operations $\mc{C}$:
    \begin{equation}
        \label{eq.aads_vertical_connection_contract} \bar{\Dv}_X ( \mc{C} \ms{A} ) = \mc{C} ( \bar{\Dv}_X \ms{A} ) \text{.}
    \end{equation}

    \item The connection $\Dvm$ is $\gv$--compatible:
    \begin{equation}
        \label{eq.aads_vertical_connection_compat} \bar{\Dv}_X \gv = 0 \text{,} \qquad \bar{\Dv}_X \gv^{-1} = 0 \text{.}
    \end{equation}
    \end{itemize}
\end{proposition}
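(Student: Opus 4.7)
The plan is to establish uniqueness directly from the $\mc{C}^\infty(\mi{M})$-linearity of connections in the direction argument, then construct $\Dvm$ via the prescribed formulas and verify that it is intrinsic and satisfies the three listed properties.

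\textbf{Uniqueness.} Any connection on $\ms{V}^k_l\mi{M}$ is tensorial in its direction slot, so $\Dvm_X\ms{A}$ is determined by $\Dvm_\rho\ms{A}$ and $\Dvm_c\ms{A}$ in any lifted chart $\varphi_\rho = (\rho,\varphi)$. The prescriptions \eqref{eq.c_deriv} and \eqref{eq.rho_deriv} specify both of these on every rank $(k,l)$ vertical tensor, so any two connections satisfying them must coincide.

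\textbf{Existence and intrinsicness.} I define $\Dvm$ in each lifted chart by the right-hand sides of \eqref{eq.c_deriv}--\eqref{eq.rho_deriv}. The first formula is the $\gv$-Levi-Civita connection restricted to level sets of $\rho$, hence intrinsic. The second combines the coordinate-independent $\rho$-Lie derivative $\li\ms{A}$ from Definition \ref{def.aads_vertical_lie} with tensorial $\gv$- and $\gv^{-1}$-contractions of $\li\gv$ against $\ms{A}$, so the sum is again globally well-defined on $\mi{M}$. Standard checks then confirm that the resulting operator is linear over $\mc{C}^\infty(\mi{M})$ in its direction slot and satisfies the Leibniz rule with respect to scalar multiplication of $\ms{A}$.

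\textbf{Verification of \eqref{eq.aads_vertical_connection_leibniz}--\eqref{eq.aads_vertical_connection_compat}.} For $X$ tangent to the level sets of $\rho$ all three identities reduce to the corresponding standard properties of $\Dv$. For $X = \partial_\rho$ the Lie-derivative piece of \eqref{eq.rho_deriv} is already a derivation, and the correction terms are organised slot by slot; writing the index sum for $\ms{A}\otimes\ms{B}$ as the sum of the corresponding sums for $\ms{A}$ and for $\ms{B}$ then gives \eqref{eq.aads_vertical_connection_leibniz}. For \eqref{eq.aads_vertical_connection_contract}, contracting an upper slot $a_i$ with a lower slot $b_j$ annihilates the two associated correction terms: setting $K^a{}_b := \gv^{ac}\li\gv_{cb}$, the upper correction from slot $a_i$ becomes $+\tfrac{1}{2}K^{a_i}{}_f\ms{A}^{\ldots f\ldots}_{\ldots a_i\ldots}$ after contraction, and by symmetry of $\gv^{-1}$ and $\li\gv$ this equals the lower correction from slot $b_j$, which enters with the opposite sign. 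The surviving corrections are precisely those produced by applying $\Dvm_\rho$ to the contracted tensor. Finally, \eqref{eq.aads_vertical_connection_compat} follows by substituting $\ms{A}=\gv$ into \eqref{eq.rho_deriv} and using $\gv^{cd}\gv_{db}=\delta^c_b$ to obtain $\Dvm_\rho\gv_{ab} = \li\gv_{ab} - \tfrac{1}{2}\li\gv_{ab} - \tfrac{1}{2}\li\gv_{ab} = 0$; the identity for $\gv^{-1}$ then follows from \eqref{eq.aads_vertical_connection_contract} applied to $\gv\otimes\gv^{-1}$.

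\textbf{Anticipated difficulty.} None of the steps is genuinely deep, but the index bookkeeping in the Leibniz and contraction verifications for $\Dvm_\rho$ is where care is required: one must track how each upper and lower slot of $\ms{A}$ contributes a correction term and check that these distribute cleanly across tensor products and cancel in pairs under trace. The coefficient $\tfrac{1}{2}$ in \eqref{eq.rho_deriv} is exactly what is needed both for the cancellation in the contraction identity and for the vanishing in the $\gv$-compatibility identity.
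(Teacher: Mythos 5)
Your argument is correct: uniqueness from tensoriality in the direction slot, existence by the coordinate formulas \eqref{eq.c_deriv}--\eqref{eq.rho_deriv} (which are manifestly built from the intrinsic objects $\li$, $\gv$, $\gv^{-1}$, $\li\gv$ and the product splitting of $T\mi{M}$), and the slot-by-slot bookkeeping for the Leibniz, contraction and $\gv$-compatibility checks, including the pairwise cancellation of the $\tfrac{1}{2}$-correction terms under trace. The paper itself gives no proof but cites \cite[Proposition 2.23]{McGill_2020}, and your verification is essentially the standard argument carried out there, so there is nothing to flag.
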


\begin{proof}
See \cite[Proposition 2.23]{McGill_2020}.
\end{proof}

The connections $\bar{\Dv}$ extend the vertical Levi-Civita connections $\Dv$ to permit covariant derivatives of vertical fields in the $\rho$--direction. In order to construct the $g$-wave operator for vertical tensor fields in this spirit, we must first define some further tensorial objects on $\mi{M}$.

\begin{definition} \label{def.aads_mixed}
Let $( \mi{M}, g )$ be a FG-aAdS segment. The \textbf{mixed bundle} of rank $( \kappa, \lambda; k, l )$ over $\mi{M}$ is given by
\begin{equation}
\label{eq.aads_mixed} T^\kappa_\lambda \ms{V}^k_l \mi{M} := T^\kappa_\lambda \mi{M} \otimes \ms{V}^k_l \mi{M} \text{.}
\end{equation}
Sections of $T^\kappa_\lambda \ms{V}^k_l \mi{M}$ are called \textbf{mixed tensor fields} of rank $( \kappa, \lambda; k, l )$. Furthermore, the \textbf{bundle connection} $\bar{\nabla}$ on the mixed bundle $T^\kappa_\lambda \ms{V}^k_l \mi{M}$ is defined as the tensor product connection of $\nabla$ on $T^\kappa_\lambda \mi{M}$ and $\bar{\Dv}$ on $\ms{V}^k_l \mi{M}$. 
\end{definition}

\begin{proposition} \label{thm.aads_mixed_connection}
Let $( \mi{M}, g )$ be a FG-aAdS segment.
Then:
\begin{itemize}
\item For any vector field $X$ on $\mi{M}$ and mixed tensor fields $\mathbf{A}$ and $\mathbf{B}$,
\begin{equation}
\label{eq.aads_mixed_connection_leibniz} \nablam_X ( \mathbf{A} \otimes \mathbf{B} ) = \nablam_X \mathbf{A} \otimes \mathbf{B} + \mathbf{A} \otimes \nablam_X \mathbf{B} \text{.}
\end{equation}

\item For any vector field $X$ on $\mi{M}$,
\begin{equation}
\label{eq.aads_mixed_connection_compat} \nablam_X g = 0 \text{,} \qquad \nablam_X g^{-1} = 0 \text{,} \qquad \nablam_X \gv = 0 \text{,} \qquad \nablam_X \gv^{-1} = 0 \text{.}
\end{equation}
\end{itemize}
\end{proposition}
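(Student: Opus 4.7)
The plan is to exploit the definition of $\nablam$ as the tensor product connection of $\nabla$ on $T^\kappa_\lambda \mi{M}$ and $\Dvm$ on $\ms{V}^k_l \mi{M}$. Both claimed properties should follow from the corresponding properties of the constituent connections, with essentially no new geometric content: the Leibniz rule is inherited from the construction of tensor product connections, while each of the four objects in \eqref{eq.aads_mixed_connection_compat} lives in a pure subbundle on which $\nablam$ restricts to one of the two constituent connections.

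For \eqref{eq.aads_mixed_connection_leibniz}, I would first observe that any mixed tensor field can locally be written as a finite $C^\infty(\mi{M})$-linear combination of \emph{simple} mixed tensors of the form $A \otimes \ms{A}$, where $A$ is a purely spacetime tensor and $\ms{A}$ is a purely vertical tensor. By multilinearity, it then suffices to establish the Leibniz rule for simple products $\mathbf{A} = A \otimes \ms{A}$ and $\mathbf{B} = B \otimes \ms{B}$. Using the canonical isomorphism to regroup factors, $\mathbf{A} \otimes \mathbf{B} \cong (A \otimes B) \otimes (\ms{A} \otimes \ms{B})$, the definition of $\nablam$ as a tensor product connection yields
$$\nablam_X ( \mathbf{A} \otimes \mathbf{B} ) = \nabla_X (A \otimes B) \otimes (\ms{A} \otimes \ms{B}) + (A \otimes B) \otimes \Dvm_X (\ms{A} \otimes \ms{B}) \text{.}$$
Applying the Leibniz rule for $\nabla$ (standard for the Levi-Civita connection) and for $\Dvm$ (established in \eqref{eq.aads_vertical_connection_leibniz}), then regrouping the four resulting terms via the same canonical isomorphism, the right-hand side collapses to $\nablam_X \mathbf{A} \otimes \mathbf{B} + \mathbf{A} \otimes \nablam_X \mathbf{B}$, as required.

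For \eqref{eq.aads_mixed_connection_compat}, the key observation is that $g$ and $g^{-1}$ are purely spacetime tensor fields, naturally identified with mixed tensor fields of trivial vertical rank; on such fields $\nablam$ reduces to $\nabla$, and the vanishing derivatives $\nablam_X g = 0 = \nablam_X g^{-1}$ are immediate from $\nabla$ being the Levi-Civita connection of $g$. Dually, $\gv$ and $\gv^{-1}$ are purely vertical tensor fields (of trivial spacetime rank), so $\nablam$ reduces to $\Dvm$, and the identities $\nablam_X \gv = 0 = \nablam_X \gv^{-1}$ follow directly from \eqref{eq.aads_vertical_connection_compat}. The only real obstacle in the argument is notational bookkeeping: one must keep track of the identifications between pure spacetime/vertical tensor fields and their embeddings as degenerate mixed tensor fields, and handle the canonical reordering of tensor factors carefully. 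No new geometric or analytic input is required beyond the construction of $\Dvm$ already supplied by Proposition \ref{def.aads_vertical_connection}.
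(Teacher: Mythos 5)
Your argument is correct: the Leibniz rule and the four compatibility identities follow exactly as you describe from the tensor-product-connection structure of $\nablam$ together with the corresponding properties of $\nabla$ and of $\Dvm$ from Proposition \ref{def.aads_vertical_connection}, the reduction to simple tensors being legitimate since the Leibniz defect is $C^\infty(\mi{M})$-bilinear and hence determined pointwise on simple elements. The paper itself simply defers to \cite[Proposition 2.28]{McGill_2020}, where the proof proceeds in essentially this same way, so your route matches the intended one.
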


\begin{proof}
See \cite[Proposition 2.28]{McGill_2020}.
\end{proof}

Generally speaking, the mixed connections $\nablam$ behave like $\nabla$ on spacetime components and $\bar{\Dv}$ on vertical components. The properties demonstrated in Proposition \ref{thm.aads_mixed_connection} are analogous to the properties of covariant derivatives that enable the standard integration by parts formulae; we are thus able to extend these directly to mixed tensor fields.

We now define higher covariant derivatives in the context of mixed bundles:

\begin{definition} \label{def.aads_mixed_wave}
Let $( \mi{M}, g )$ be a FG-aAdS segment and $\mathbf{A}$ be a mixed tensor field of rank $( \kappa, \lambda; k, l )$.
\begin{itemize}
\item The \textbf{mixed covariant differential} of $\mathbf{A}$ is the mixed tensor field $\bar{\nabla} \mathbf{A}$, of rank $( \kappa, \lambda + 1; k, l )$, that maps each vector field $X$ on $\mi{M}$ to $\bar{\nabla}_X \mathbf{A}$.

\item The \textbf{mixed Hessian} $\bar{\nabla}^2 \mathbf{A}$ is defined as the mixed covariant differential of $\bar{\nabla} \mathbf{A}$.

\item The \textbf{wave operator} $\bar{\Box} \mathbf{A}$ is the $g$-trace of $\bar{\nabla}^2 \mathbf{A}$.

\end{itemize}
\end{definition}


\subsection{Conversion Formulae}\label{sec.st_vert_conversion}

In this article we will need to convert equations for spacetime quantities to corresponding equations for vertical quantities. In this subsection, we present a systematic method for doing so. The schematic notation and the computations involved in the proofs for this section have been assembled by Arick Shao in preparation for the upcoming work \cite{Holzegel_2021}, and kindly shared with the author for the present article. 

Let's begin by fixing schematic notations for asymptotic error terms:

\begin{definition} \label{def.prelim_O}
    Let $( \mi{M}, g )$ be a FG-aAdS segment, fix an integer $M \geq 0$ and let $h \in C^\infty ( \mi{M} )$. Then $\mi{O}_M ( h )$ refers to any vertical tensor field $\ms{a}$ satisfying
    \begin{equation}
        \label{eq.prelim_OO} | \ms{a} |_{ M, \varphi } \lesssim_\varphi h \text{,}
    \end{equation}
    for any compact coordinate system $( U, \varphi )$ on $\mi{I}$.

    Furthermore, given a vertical tensor field $\ms{B}$, $\mi{O}_M ( h; \ms{B} )$ refers to any vertical tensor field $\ms{A}$ that is expressible in the form
    \begin{equation}
        \label{eq.prelim_O} \ms{A} = \sum_{ k = 1 }^N \mi{C}_k ( \ms{a}_k \otimes \ms{B}^\ast_k ) \text{,}
    \end{equation}
    where $N \geq 0$ and for each $1 \leq k \leq N$,
    \begin{itemize}
        \item $\ms{B}^\ast_k$ is $\ms{B}$ composed with some permutation of its components.

        \item $\ms{a}_k$ is a vertical tensor field satisfying $\ms{a}_k = \mi{O}_M ( h )$.

        \item $\mi{C}_k$ is a composition of zero or more contractions and $\gv$-metric contraction operations.
    \end{itemize}
\end{definition}

Next, we establish some commutation identities for vertical tensor fields:

\begin{proposition} \label{thm.prelim_comm}
Let $( \mi{M}, g )$ be a FG-aAdS segment, fix $M \geq 2$ and assume
\begin{equation}
\label{eq.prelim_regular} \gv = \mi{O}_M ( 1 ) \text{,} \qquad \li \gv = \mi{O}_{ M - 2 } ( \rho ) \text{.}
\end{equation}
Then the following commutation identities hold for any vertical tensor field $\ms{A}$ and $p \in \mathbb{R}$:
\begin{align}
\label{eq.prelim_comm} \Dvm_\rho ( \Dv \ms{A} ) &= \Dv ( \Dvm_\rho \ms{A} ) + \mi{O}_{ M - 2 } ( \rho ; \Dv \ms{A} ) + \mi{O}_{ M - 3 } ( \rho; \ms{A} ) \\
\label{eq.prelim_comm_rho} \Boxm ( \rho^p \ms{A} ) &= \rho^p \Boxm \ms{A} + 2 p \rho^{p+1} \Dvm_\rho \ms{A} - p ( n - p ) \rho^p \ms{A} + \mi{O}_{ M - 2 } ( \rho^2; \rho^p \ms{A} ) \text{.}
\end{align}
\end{proposition}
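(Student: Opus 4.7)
Both identities reduce to direct coordinate computations: for \eqref{eq.prelim_comm} one evaluates the commutator $[\Dvm_\rho, \Dv_c]$ using the formula \eqref{eq.rho_deriv}, while for \eqref{eq.prelim_comm_rho} one first computes the Christoffel symbols of $g = \rho^{-2}(d\rho^2 + \gv)$, decomposes $\Boxm$, and expands via the Leibniz rule when the operator is applied to $\rho^p \ms{A}$. The regularity hypotheses \eqref{eq.prelim_regular} are used at the end to classify each surviving term into the correct $\mi{O}$-class from Definition \ref{def.prelim_O}.

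For \eqref{eq.prelim_comm} I would first treat the vector-field case $\ms{A}^a$. Writing out both $\Dvm_\rho \Dv_c \ms{A}^a$ and $\Dv_c \Dvm_\rho \ms{A}^a$ in coordinates via \eqref{eq.c_deriv}--\eqref{eq.rho_deriv}, the matching $\li(\partial_c \ms{A}^a)$ and Christoffel-times-$\li\ms{A}$ pieces cancel (here one uses $\Dv \gv = 0$ to reabsorb the leftover partial derivatives back into covariant ones). What remains is a linear combination of three pieces: two terms whose coefficients are proportional to $\li\Gamma$ or to $\Dv_c \li\gv_{ed}$, each multiplied by $\ms{A}$; and one term of the form $\gv^{bd}\li\gv_{cd}(\Dv_b \ms{A}^a)$. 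The first two coefficient tensors lie in $\mi{O}_{M-3}(\rho)$ because a single vertical derivative lands on $\li\gv$ and costs one degree of regularity, while the third is in $\mi{O}_{M-2}(\rho)$. The general rank-$(k,l)$ case extends this calculation index-by-index: each upper and lower slot of $\ms{A}$ contributes an analogous pair of $\mi{O}_{M-3}(\rho;\ms{A})$ corrections, whereas the single $\mi{O}_{M-2}(\rho;\Dv\ms{A})$ term is supplied by the new lower index that $\Dv$ adjoins, entering through the final sum in \eqref{eq.rho_deriv}.

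For \eqref{eq.prelim_comm_rho} I would first compute the nonzero spacetime Christoffel symbols of $g$, namely $\Gamma^\rho_{\rho\rho} = -\rho^{-1}$, $\Gamma^\rho_{ab} = \rho^{-1}\gv_{ab} - \tfrac{1}{2}\li\gv_{ab}$, $\Gamma^a_{\rho b} = -\rho^{-1}\delta^a_b + \tfrac{1}{2}\gv^{ac}\li\gv_{bc}$, and $\Gamma^a_{bc}$ equal to the Christoffel symbols of $\Dv$, with the remaining mixed pieces vanishing. Combining with $g^{\rho\rho} = \rho^2$, $g^{ab} = \rho^2 \gv^{ab}$, $g^{\rho a} = 0$ and expanding $\nablam^2_{XY}\ms{A} = \nablam_X \nablam_Y \ms{A} - \nablam_{\nabla_X Y}\ms{A}$ separately along $\partial_\rho$ and the vertical directions yields the exact identity
\[
\Boxm \ms{A} = \rho^2 \Dvm_\rho^2 \ms{A} + (1-n)\rho \Dvm_\rho \ms{A} + \rho^2 \Dv^2 \ms{A} + \tfrac{1}{2}\rho^2 (\gv^{ab}\li\gv_{ab}) \Dvm_\rho \ms{A},
\]
where $\Dv^2$ denotes the $\gv$-trace of the vertical Hessian. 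Applying this operator to $\rho^p \ms{A}$, the only coupling between the scalar factor $\rho^p$ and the tensor $\ms{A}$ is through the Leibniz rule for $\Dvm_\rho$. Collecting, subtracting $\rho^p \Boxm \ms{A}$, and simplifying leaves precisely the leading terms $2p\rho^{p+1}\Dvm_\rho \ms{A} - p(n-p)\rho^p \ms{A}$ together with the single extra term $\tfrac{p}{2}\rho^{p+1}(\gv^{ab}\li\gv_{ab})\ms{A}$, which lies in $\mi{O}_{M-2}(\rho^2; \rho^p \ms{A})$ because $\gv^{-1} = \mi{O}_M(1)$ and $\li\gv = \mi{O}_{M-2}(\rho)$.

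The main obstacle is combinatorial rather than conceptual: keeping track of which $\ms{A}$- or $\Dv\ms{A}$-slot each correction lands on when generalising \eqref{eq.prelim_comm} to arbitrary rank, and verifying throughout that every surviving term indeed fits the schematic form of Definition \ref{def.prelim_O} (i.e.\ a composition of contractions and $\gv$-metric contractions of $\mi{O}(\rho^k)$ vertical factors with some permutation of the free indices of $\ms{A}$ or $\Dv\ms{A}$). One subtlety worth flagging is the asymmetric regularity loss in \eqref{eq.prelim_comm}: both $\li\Gamma$ and $\Dv\li\gv$ carry a vertical derivative of $\li\gv$ and so sit in $\mi{O}_{M-3}(\rho)$, whereas the bare $\gv^{-1}\li\gv$ factor multiplying $\Dv\ms{A}$ involves no such derivative and stays in $\mi{O}_{M-2}(\rho)$.
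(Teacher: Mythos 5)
Your proposal is correct and takes essentially the same route as the paper's Appendix A proof: a direct coordinate computation of the $[\Dvm_\rho,\Dv]$ commutator from \eqref{eq.c_deriv}--\eqref{eq.rho_deriv} (the paper delegates the $[\li,\Dv]$ piece to a cited commutation formula, which is exactly the $\li\Gammav$, i.e.\ $\gv^{-1}\Dv\li\gv$, computation you carry out by hand), with the same regularity bookkeeping --- the differentiated $\Dv\li\gv$ terms landing in $\mi{O}_{M-3}(\rho;\ms{A})$ and the undifferentiated $\gv^{-1}\li\gv$ factor on $\Dv\ms{A}$ staying in $\mi{O}_{M-2}(\rho;\Dv\ms{A})$. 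For \eqref{eq.prelim_comm_rho}, your exact decomposition of $\Boxm$ is equivalent to the paper's product-rule expansion plus the computation $\Box\rho=-(n-1)\rho+\tfrac{1}{2}\rho^2\gv^{ab}\li\gv_{ab}$, and yields the identical error term $\tfrac{p}{2}\rho^{p+1}\gv^{ab}\li\gv_{ab}\,\ms{A}=\mi{O}_{M-2}(\rho^2;\rho^p\ms{A})$.
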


\begin{proof}
    See Appendix \ref{sec.st_vert_conversion_proof1}.
\end{proof}

We now fix some further notation so as to be able to express the conversion formulae in a compact form:

\begin{definition}\label{def.conversion_conventions}
    Suppose $A$ is a tensor field on $\mi{M}$ of rank $( 0, r_1 + r_2 )$, where $r_1, r_2 \geq 0$, and let $\ms{A}$ be the corresponding rank $( 0, r_2 )$ vertical tensor field defined with respect to any coordinates $( U, \varphi )$ on $\mi{I}$ by
    \begin{equation}
        \label{eq.prelim_decomp_phi} \ms{A}_{ \ix{a} } := A_{ \ix{\rho} \ix{a} } \text{,}
    \end{equation}
    where the multi-index $\ix{\rho} := \rho \dots \rho$ represents $r_1$ copies of $\rho$, while $\ix{a} := a_1 \dots a_{ r_2 }$. Then:
    \begin{itemize}
    \item For any $1 \leq i \leq r_1$, the rank $( 0, r_2 + 1 )$ vertical tensor field $\ms{A}^\rho_i$ is given by
    \begin{equation}
        \label{eq.prelim_decomp_phi_rv} ( \ms{A}^\rho_i )_{ b \ix{a} } := A_{ \ixr{\rho}{i}{b} \ix{a} } \text{,}
    \end{equation}
    
    \item For any $1 \leq j \leq r_2$, the rank $( 0, r_2 - 1 )$ vertical tensor field $\ms{A}^v_j$ is given by
    \begin{equation}
        \label{eq.prelim_decomp_phi_vr} ( \ms{A}^v_j )_{ \ixd{a}{j} } := A_{ \ix{\rho} \ixr{a}{j}{\rho} } \text{,}
    \end{equation}

    \item For any $1 \leq i, j \leq r_1$ with $i \neq j$, the rank $( 0, r_2 + 2 )$ vertical field $\ms{A}^{ \rho, \rho }_{ i, j }$ is given by
    \begin{equation}
        \label{eq.prelim_decomp_phi_rrvv} ( \ms{A}^{ \rho, \rho }_{ i, j } )_{ c b \ix{a} } := A_{ \ixr{\rho}{i,j}{c,b} \ix{a} } \text{.}
    \end{equation}

    \item For any $1 \leq i, j \leq r_2$ with $i \neq j$, the rank $( 0, r_2 - 2 )$ vertical field $\ms{A}^{ v, v }_{ i, j }$ is given by
        \begin{equation}
        \label{eq.prelim_decomp_phi_vvrr} ( \ms{A}^{ v, v }_{ i, j } )_{ \ixd{a}{i,j} } := A_{ \ix{\rho} \ixr{a}{i,j}{\rho,\rho} } \text{.}
    \end{equation}

    \item For any $1 \leq i \leq r_1$ and $1 \leq j \leq r_2$, the rank $( 0, r_2 )$ vertical field $\ms{A}^{ \rho, v }_{ i, j }$ is given by
    \begin{equation}
        \label{eq.prelim_decomp_phi_rvvr} ( \ms{A}^{ \rho, v }_{ i, j } )_{ b \ixd{a}{j} } := A_{ \ixr{\rho}{i}{b} \ixr{a}{j}{\rho} } \text{.}
    \end{equation}
    \end{itemize}
\end{definition}

We are now in a position to state the conversion formulae:

\begin{proposition} \label{thm.prelim_decomp}
Let $( \mi{M}, g )$ be a FG-aAdS segment and assume \eqref{eq.prelim_regular} holds for some $M \geq 2$.
Let $A$ be a tensor field on $\mi{M}$ of rank $( 0, r_1 + r_2 )$, where $r_1, r_2 \geq 0$, and let $\ms{A}$ be the associated rank $( 0, r_2 )$ vertical tensor field defined by \eqref{eq.prelim_decomp_phi}. 

Then the following identities hold with respect to any coordinates $( U, \varphi )$ on $\mi{I}$:
\begin{align}
\label{eq.prelim_decomp_rho} \nabla_\rho A_{ \ix{\rho} \ix{a} } &= \rho^{ - r_1 - r_2 } \Dvm_\rho ( \rho^{ r_1 + r_2 } \ms{A} )_{ \ix{a} } \text{,} \\
\label{eq.prelim_decomp_a} \nabla_c A_{ \ix{\rho} \ix{a} } &= \Dvm_c \ms{A}_{ \ix{a} } + ( \rho^{-1} \delta^b_c - \frac{1}{2} \gv^{bd} \li \gv_{dc} ) \sum_{ i = 1 }^{ r_1 } ( \ms{A}^\rho_i )_{ b \ix{a} } - \sum_{ j = 1 }^{ r_2 } ( \rho^{-1} \gv_{c a_j} - \frac{1}{2} \li \gv_{c a_j} ) \, ( \ms{A}^v_j )_{ \ixd{a}{j} } \text{.} \\
\notag &= \Dvm_c \ms{A}_{ \ix{a} } + \rho^{-1} \sum_{ i = 1 }^{ r_1 } ( \ms{A}^\rho_i )_{ c \ix{a} } - \rho^{-1} \sum_{ j = 1 }^{ r_2 } \gv_{ c a_j } \, ( \ms{A}^v_j )_{ \ixd{a}{j} } \\
\notag &\qquad + \sum_{ i = 1 }^{ r_1 } \mi{O}_{ M - 2 } ( \rho; \ms{A}^\rho_i )_{ c \ix{a} } + \sum_{ j = 1 }^{ r_2 } \mi{O}_{ M - 2 } ( \rho; \ms{A}^v_j )_{ c \ix{a} } \text{,} \\
\label{eq.box} \Box A_{ \ix{\rho} \ix{a} } &= \rho^{ - r_1 - r_2 } \Boxm ( \rho^{ r_1 + r_2 } \ms{A} )_{ \ix{a} } + 2 \rho \sum_{ i = 1 }^{ r_1 } \gv^{ b c } \, \Dvm_b ( \ms{A}^\rho_i )_{ c \ix{a} } - 2 \rho \sum_{ j = 1 }^{ r_2 } \Dvm_{ a_j } ( \ms{A}^v_j )_{ \ixd{a}{j} } - ( n r_1 + r_2 ) \, \ms{A}_{ \ix{a} } \\
\notag &\qquad - 2 \sum_{ i = 1 }^{ r_1 } \sum_{ j = 1 }^{ r_2 } ( \ms{A}^{ \rho, v }_{ i, j } )_{ a_j \ixd{a}{j} } + 2 \sum_{ 1 \leq i < j \leq r_1 } \gv^{ b c } \, ( \ms{A}^{ \rho, \rho }_{ i, j } )_{ b c \ix{a} } + 2 \sum_{ 1 \leq i < j \leq r_2 } \gv_{ a_i a_j } \, ( \ms{A}^{ v, v }_{ i, j } )_{ \ixd{a}{i,j} } \\
\notag &\qquad + \sum_{ i = 1 }^{ r_1 } \mi{O}_{ M - 2 } ( \rho^3; \Dvm \ms{A}^\rho_i )_{ \ix{a} } + \sum_{ j = 1 }^{ r_2 } \mi{O}_{ M - 2 } ( \rho^3; \Dvm \ms{A}^v_j )_{ \ix{a} } + \sum_{ i = 1 }^{ r_1 } \mi{O}_{ M - 3 } ( \rho^3; \ms{A}^\rho_i )_{ \ix{a} } \\
\notag &\qquad + \sum_{ j = 1 }^{ r_2 } \mi{O}_{ M - 3 } ( \rho^3; \ms{A}^v_j )_{ \ix{a} } + \mi{O}_{ M - 2 } ( \rho^2; \ms{A} )_{ \ix{a} } + \sum_{ i = 1 }^{ r_1 } \sum_{ j = 1 }^{ r_2 } \mi{O}_{ M - 2 } ( \rho^2; \ms{A}^{ \rho, v }_{ i, j } )_{ \ix{a} } \\
\notag &\qquad + \sum_{ 1 \leq i < j \leq r_1 } \mi{O}_{ M - 2 } ( \rho^2; \ms{A}^{ \rho, \rho }_{ i, j } )_{ \ix{a} } + \sum_{ 1 \leq i < j \leq r_2} \mi{O}_{ M - 2 } ( \rho^2; \ms{A}^{ v, v }_{ i, j } )_{ \ix{a} } \text{,}
\end{align}

\end{proposition}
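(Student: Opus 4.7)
The plan is a direct computation in Fefferman--Graham gauge. From \eqref{eq.aads_metric} one first computes the non-vanishing Christoffel symbols of $g$:
\[
    \Gamma^\rho_{\rho\rho} = -\rho^{-1}\text{,}\qquad \Gamma^\rho_{ab} = \rho^{-1}\gv_{ab} - \tfrac{1}{2}\li\gv_{ab}\text{,}\qquad \Gamma^a_{\rho b} = -\rho^{-1}\delta^a_b + \tfrac{1}{2}\gv^{ac}\li\gv_{cb}\text{,}
\]
together with $\Gamma^c_{ab}$ equal to the Christoffels of $\gv$. Under \eqref{eq.prelim_regular}, the $\li\gv$-corrections are absorbed into $\mi{O}_{M-1}(\rho)$, while the explicit $\rho^{-1}$ pieces drive the leading-order balances that produce the $\rho^{r_1+r_2}$ weightings in the statement.

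For identity \eqref{eq.prelim_decomp_rho}, I would expand $\nabla_\rho A_{\ix{\rho}\ix{a}}$ directly. Because $\Gamma^a_{\rho\rho}$ and $\Gamma^\rho_{\rho a}$ both vanish, differentiation in the $\rho$-direction does not mix index types: each $\rho$-index contributes a factor $\rho^{-1}\ms{A}_{\ix{a}}$ from $-\Gamma^\rho_{\rho\rho}$, and each vertical index contributes exactly the correction that constitutes the definition \eqref{eq.rho_deriv} of $\Dvm_\rho$. The Leibniz rule then repackages the result as $\rho^{-r_1-r_2}\Dvm_\rho(\rho^{r_1+r_2}\ms{A})$. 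For identity \eqref{eq.prelim_decomp_a}, a similar expansion is carried out, but now $\Gamma^\rho_{cb}$ and $\Gamma^b_{c\rho}$ convert index types: each $\rho$-index at position $i$ contributes $(\rho^{-1}\delta^b_c - \tfrac{1}{2}\gv^{bd}\li\gv_{dc})(\ms{A}^\rho_i)_{b\ix{a}}$, each vertical index $a_j$ contributes $-(\rho^{-1}\gv_{ca_j} - \tfrac{1}{2}\li\gv_{ca_j})(\ms{A}^v_j)_{\ixd{a}{j}}$, and the $\gv$-Christoffel pieces combine with $\partial_c A$ via \eqref{eq.c_deriv} to form $\Dvm_c\ms{A}_{\ix{a}}$. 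This gives the first displayed form; the second follows by absorbing the $\tfrac{1}{2}\li\gv$ pieces into $\mi{O}_{M-2}(\rho;\cdot)$ through \eqref{eq.prelim_regular}.

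The wave operator identity \eqref{eq.box} is the main obstacle. I would decompose $\Box = \rho^2\nabla^2_{\rho\rho} + \rho^2\gv^{bc}\nabla^2_{bc}$ and iterate the two previous identities. After the first derivative, $\nabla A$ is a rank $(0, r_1+r_2+1)$ spacetime tensor, and applying \eqref{eq.prelim_decomp_rho} or \eqref{eq.prelim_decomp_a} again requires tracking which of its indices (including the new derivative index itself) are $\rho$-type versus vertical. The double-substitution terms $\ms{A}^{\rho,\rho}_{i,j}$, $\ms{A}^{v,v}_{i,j}$, and $\ms{A}^{\rho,v}_{i,j}$ arise precisely from the cases in which two successive derivatives convert two different indices of $A$; the restriction $1 \leq i < j$ in the sums reflects the symmetry of the contraction with $\gv^{bc}$ together with the torsion-free property of $\nabla$.

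Collecting the leading $\rho^0$ contributions and applying the commutation identity \eqref{eq.prelim_comm_rho} recasts them as $\rho^{-r_1-r_2}\Boxm(\rho^{r_1+r_2}\ms{A})$, with the $-p(n-p)$ and $2p\rho\Dvm_\rho$ contributions from the commutator cancelling against the $-(nr_1+r_2)\ms{A}$ term and the $2\rho$-coefficient derivative terms produced during the iteration. All remaining cross-terms---those from $\nabla$ acting on the Christoffel corrections, together with $[\Dv,\Dvm_\rho]$ commutators arising when the second derivative is a $\Dv$---fit within the stated error structure via \eqref{eq.prelim_comm} and \eqref{eq.prelim_regular}. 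The main difficulty is precisely the combinatorial bookkeeping of these cross-terms; correctness is verified by checking that all spurious $\rho^{-2}$ and $\rho^{-1}$ powers cancel, leaving precisely the leading $\rho^0$ terms exhibited in \eqref{eq.box}.
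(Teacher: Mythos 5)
Your proposal follows essentially the same route as the paper's Appendix A.2: expand in $\varphi_\rho$-coordinates using the Fefferman--Graham Christoffel symbols \eqref{eq.extra_Gamma}, obtain the two first-order identities by comparing $\nabla$ with $\Dvm$, and then derive \eqref{eq.box} from the split $\Box A = \rho^2\paren{\nabla_{\rho\rho}A + \gv^{bc}\nabla_{bc}A}$ while tracking the single- and double-substitution terms $\ms{A}^\rho_i$, $\ms{A}^v_j$, $\ms{A}^{\rho,\rho}_{i,j}$, $\ms{A}^{v,v}_{i,j}$, $\ms{A}^{\rho,v}_{i,j}$. The only differences are cosmetic: the paper carries the weight $\rho^{r_1+r_2}$ through the second-derivative computation (showing $\nabla_{\rho\rho}A_{\ix{\rho}\ix{a}}=\rho^{-r_1-r_2}\nablam_{\rho\rho}(\rho^{r_1+r_2}\ms{A})_{\ix{a}}$ exactly) rather than reinstating it at the end via \eqref{eq.prelim_comm_rho}, and in \eqref{eq.prelim_decomp_rho} note that each \emph{vertical} index also contributes a $\rho^{-1}\ms{A}$ term (since $\Gamma^c_{\rho a}-\Gammav^c_{\rho a}=-\rho^{-1}\delta^c_a$), which is what produces the full weight $\rho^{r_1+r_2}$ rather than $\rho^{r_1}$ as your narration suggests.
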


\begin{proof}
    See Appendix \ref{sec.st_vert_conversion_proof2}.
\end{proof}


\section{Vacuum Spacetimes} \label{sec.aads_vacuum}

\subsection{The Metric}\label{sec.boundary_data}

\begin{definition} \label{def.aads_vacuum}
    An $(n+1)$--dimensional FG-aAdS segment $( \mi{M}, g )$ is called a \textbf{vacuum FG-aAdS segment} iff it satisfies \eqref{eq.eins}.
\end{definition}

The following boundary limits were derived in \cite[Theorem 3.3]{Shao_2020}:
\begin{theorem} \label{th.metric_limits}
    Fix $n>2$ and $M_0 \geq n+2$. Suppose that $(\mi{M},g)$ is an $(n+1)$--dimensional, $M_0$--regular vacuum FG-aAdS segment. Then
    \begin{align} \label{eq.metric_limits}
        \gv \rightarrow ^{M_0} \, \gm \text{,}\qquad \gv^{-1} \rightarrow ^{M_0} \, \gm^{-1} \text{,}
    \end{align}
    and, for $0 \leq k < n$, there exists tensor fields $\gm^{(k)}$ on $\mi{I}$ such that
    \begin{align} \label{eq.g_boundary_limits}
        \li^k\gv \rightarrow^{M_0-k}\, k! \, \gm^{(k)} \text{,} \qquad \rho\li^{k+1}\gv \rightarrow^{M_0-k} \, 0 \text{,}
    \end{align}
    where
    \begin{itemize}
        \item $\gm^{(0)} = \gm$ \text{.}
        \item $\gm^{(2)} = -\mf{p}$, where $\mf{p}$ is the $\gm$--Schouten tensor:
        \begin{align}
            \mf{p} = \frac{1}{n-2} \paren{ \Rcm - \frac{1}{2(n-2)} \Rsm \cdot \gm } \text{.}
        \end{align}
        \item $\gm^{(k)}=0$ if $n$ is odd.
    \end{itemize}
    Furthermore, there exist tensor fields $\gm^{(\star)}$ and $\gm^{(\dagger)}$ \footnote{$\gm^{\dagger}$ is in $C^{M_0-n}$ on $\mi{I}$.} on $\mi{I}$ such that
    \begin{align}
        \rho\li^{n+1}\gv \rightarrow^{M_0-n} \, n! \, \gm^{(\star)} \text{,}\qquad \li^n\gm-n!\,(\log\rho)\gm^{(\star)} \rightarrow^{M_0-n} \, n! \, \gm^{(\dagger)} \text{,}
    \end{align}
    where $\gm^{(\star)}=0$ if $n$ is odd.
\end{theorem}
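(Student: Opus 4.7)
The plan is to extract transport ODEs in $\rho$ from the Einstein--vacuum equations in FG gauge and to integrate them from $\rho = 0$ order by order, using $\gv \rightarrow^{M_0} \gm$ and the assumed regularity as initial data.

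First, I would apply the conversion formulae of Proposition \ref{thm.prelim_decomp} to decompose \eqref{eq.eins}---equivalently, $Rc + n \, g = 0$ in FG gauge---into vertical equations. The $(\rho,\rho)$ and $(\rho,a)$ components yield scalar and vector constraint equations on the $\gv$-trace and the $\Dv$-divergence of $\li \gv$, while the $(a,b)$ component produces a second-order $\rho$-transport equation for $\gv$ of the schematic form $\li^2 \gv + \rho^{-1} \mathcal{L}(\gv,\li\gv) = 2 \, \Rcv + \mathcal{N}(\gv,\li\gv)$, with $\mathcal{L}$ linear and $\mathcal{N}$ quadratic in $\li \gv$. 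The linearisation at the boundary has indicial roots $0$ and $n$, which is the structural source of the Fefferman--Graham expansion.

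Second, I would iteratively apply $\li$ to this transport equation, using the commutator identity \eqref{eq.prelim_comm} to reorder derivatives and collect controlled remainders. Evaluating at $\rho \searrow 0$ algebraically determines each $\gm^{(k)}$ for $0 \leq k < n$ as a universal polynomial in $\gm^{-1}$, $\gm$, and iterated $\Dm$-derivatives of $\gm$. The recurrence is parity-preserving, forcing $\gm^{(k)} = 0$ for odd $k$, while a direct computation of the boundary limit of $\Rcv$ identifies $\gm^{(2)}$ with $-\mf{p}$. With the formal coefficients in hand, I would integrate the transport equation from the boundary to produce the strong limits $\li^k \gv \rightarrow^{M_0 - k} k! \, \gm^{(k)}$ and $\rho \li^{k+1} \gv \rightarrow^{M_0 - k} 0$ by induction on $k$.

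Third, the resonance occurs at $k = n$: the indicial root $n$ coincides with the expansion order, so the inhomogeneity at this level cannot in general be matched by a power $\rho^n$ correction alone. For even $n$, one must introduce a $\rho^n \log \rho \cdot \gm^{(\star)}$ term, where $\gm^{(\star)}$ is a universal obstruction tensor (of Bach type) built from $\gm$ and its $\Dm$-derivatives; parity forces $\gm^{(\star)} = 0$ when $n$ is odd. After subtracting this logarithmic piece from $\li^n \gv$, the smooth remainder admits the finite limit $n! \, \gm^{(\dagger)}$, with the regularity drop $C^{M_0 - n}$ reflecting the $n$ successive $\li$-derivatives taken. The main obstacle I anticipate is not the formal algebra, which is classical, but the finite-regularity bookkeeping: each use of \eqref{eq.prelim_comm} loses two derivatives, and the $\mi{O}_{M_0 - k}$ remainders arising at each step must be shown to decay fast enough in $\rho$ that the limits hold in the sense of Definition \ref{def.aads_limit}. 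The induction must run through $k = 0, 1, \dots, n$ with $M_0 - k \geq 2$ at every step, which is precisely why the hypothesis $M_0 \geq n + 2$ is imposed.
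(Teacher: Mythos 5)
The paper does not prove this theorem itself---it is imported directly from \cite[Theorem 3.3]{Shao_2020}---and your outline reproduces the strategy of that cited proof: decomposing the Einstein-vacuum equations in Fefferman--Graham gauge into $\rho$-transport and constraint equations for $\gv$ and $\li\gv$, inducting on $\li$-derivatives with indicial roots $0$ and $n$, identifying $\gm^{(2)}=-\mf{p}$ from the boundary limit of the curvature terms, and inserting the $\rho^n\log\rho$ obstruction at the resonance for even $n$. Your own caveat is the correct one: the real content is the finite-regularity bookkeeping (including the role of the weak local boundedness of $\li\gv$ in the $M_0$-regularity hypothesis, which is what gets the induction started), and that is precisely what the cited reference carries out.
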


Note that the above result implies that any vacuum FG-aAdS segment is a `strongly' FG-aAdS segment, as defined in \cite[Definition 2.13]{McGill_2020}; this was the condition required for the Carleman estimate (and hence the unique continuation result) of that article to hold.

Using the above limits, the following precise statement of the Fefferman-Graham expansion \eqref{eq.FG_metric} for sufficiently regular vacuum FG-aAdS segments was given in \cite[Theorem 3.6]{Shao_2020}: \footnote{Similar expansions were also derived for $\Rv$ and $\Dv\li\gv$. For the sake of brevity, we do not reproduce these here.}
\begin{corollary}\label{th.metric_expansion_for_vac_fgaads}
    Fix $n>2$ and $M_0 \geq n+2$. Suppose that $(\mi{M},g)$ is an $(n+1)$--dimensional, $M_0$--regular vacuum FG-aAdS segment. Let $\gm^{(k)}$ for $0 \leq k < n$ and $\gm^{(\star)}$ be as in Theorem \ref{th.metric_limits}. Then there exists a $C^{M_0-n}$ tensor field $\gm^{(n)}$ on $\mi{I}$ and a vertical tensor field $\ms{r}$ such that 
    \begin{align}\label{eq.metric_FG_expansion}
        \gv=
        \begin{cases}
            \sum_{k=0}^{\frac{n-1}{2}} \rho^{2k} \, \gm^{(2k)} + \rho^n \, \gm^{(n)} + \rho^n \, \ms{r} \text{,} \qquad &\text{$n$ odd}\text{,} \\
            \sum_{k=0}^{\frac{n-2}{2}} \rho^{2k} \, \gm^{(2k)} + \rho^n\log\rho \, \gm^{(\star)} + \rho^n \, \gm^{(n)} + \rho^n \, \ms{r} \text{,} \qquad &\text{$n$ even}\text{,}
        \end{cases}
    \end{align}
    where the `remainder' $\ms{r}$ satisfies
    \begin{align}\label{eq.metric_expansion_remainder}
        \ms{r} \to^{M_0-n} \,\, 0 \text{.}
    \end{align}
\end{corollary}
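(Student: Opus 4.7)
The plan is to construct the Fefferman--Graham expansion by iteratively integrating the $\rho$-Lie derivatives of $\gv$ from the boundary, using the limit data established in Theorem \ref{th.metric_limits}. Set $L^k := k! \, \gm^{(k)}$, so $\li^k \gv \to^{M_0 - k} L^k$ for $0 \le k \le n-1$, and define the polynomial approximation $\ms{p}(\rho) := \sum_{k=0}^{n-1} (\rho^k / k!) \, L^k$. By the parity vanishing of the intermediate coefficients supplied by Theorem \ref{th.metric_limits}, $\ms{p}$ collapses to $\sum_{k=0}^{\lfloor (n-1)/2 \rfloor} \rho^{2k} \gm^{(2k)}$, which matches the polynomial part of the claimed expansion.

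Next, define the difference $\ms{s} := \gv - \ms{p}$. By linearity, $\li^k \ms{s} \to^{M_0 - k} 0$ for $0 \le k \le n-1$, while $\li^n \ms{s} = \li^n \gv$ since $\ms{p}$ has polynomial degree $n-1$ in $\rho$. By Theorem \ref{th.metric_limits} we may further write $\li^n \gv = n! (\log \rho) \gm^{(\star)} + n! \gm^{(\dagger)} + n! \ms{e}$ with $\ms{e} \to^{M_0 - n} 0$ (with $\gm^{(\star)} = 0$ when $n$ is odd). Applying the fundamental theorem of calculus in $\rho$ a total of $n$ times, with all constants of integration pinned by the vanishing boundary conditions on $\li^k \ms{s}$, one recovers $\ms{s}$ as the iterated integral of $\li^n \gv$. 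The elementary identity $\int_0^\rho \log \sigma \, d\sigma = \rho \log \rho - \rho$, iterated, yields
\[
\underbrace{\int_0^\rho \!\! \cdots \! \int_0^{\sigma_{n-1}}}_{n \text{ times}} \log \sigma_n \, d\sigma_n \cdots d\sigma_1 = \frac{\rho^n}{n!} \log \rho - \frac{H_n}{n!} \rho^n,
\]
where $H_n$ is the $n$-th harmonic number. The constant and $\ms{e}$ pieces contribute $\rho^n \gm^{(\dagger)}$ and a term $\rho^n \ms{r}'$ with $\ms{r}' \to^{M_0 - n} 0$ respectively (the iterated integral of a quantity tending to zero in $C^{M_0-n}$ is bounded pointwise by $C \rho^n$ times its supremum norm on $[0,\rho]$). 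Collecting terms and setting $\gm^{(n)} := \gm^{(\dagger)} - (H_n / n!) \, \gm^{(\star)}$ delivers the stated expansion.

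The main obstacle is propagating coordinate regularity through these iterated $\rho$-integrals, since one needs not merely $\ms{r} \to^0 0$ but rather $\ms{r} \to^{M_0 - n} 0$. This amounts to commuting each $\varphi$-coordinate derivative $\partial_{a_i}$ with the integrals $\int_0^\rho (\cdot) \, d\sigma$; the exchange is justified because $\partial_{a_i}$ commutes with $\partial_\rho = \li$ on the level of Lie derivatives, and because Theorem \ref{th.metric_limits} provides the uniform $C^{M_0 - n}$ control on $\ms{e}$ required to move the supremum inside the integrals via dominated convergence. The $C^{M_0 - n}$ regularity of $\gm^{(n)}$ on $\mi{I}$ is then immediate from the regularity of $\gm^{(\dagger)}$ and $\gm^{(\star)}$ asserted in the same theorem.
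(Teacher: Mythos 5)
Your proof is correct and follows essentially the same route as the source of this statement: the paper simply quotes \cite[Theorem 3.6]{Shao_2020}, which is obtained precisely by this kind of iterated $\rho$-integration of the limits in Theorem \ref{th.metric_limits}, with uniformity on compact coordinate charts giving the $C^{M_0-n}$ control of the remainder. The only slip is the constant: since the logarithmic term enters as $n!\,(\log\rho)\,\gm^{(\star)}$, its iterated integral is $\rho^n\log\rho\,\gm^{(\star)}-H_n\,\rho^n\,\gm^{(\star)}$, so you should set $\gm^{(n)}=\gm^{(\dagger)}-H_n\,\gm^{(\star)}$ rather than $\gm^{(\dagger)}-(H_n/n!)\,\gm^{(\star)}$ --- a bookkeeping detail immaterial to the statement.
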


\begin{remark}
    The above shows that if $\paren{\mi{M},g}$ is a $M_0$--regular vacuum FG-aAdS segment then \eqref{eq.prelim_regular} holds for $M=M_0$.
\end{remark}


\subsection{The Weyl Curvature}

\begin{proposition} \label{thm.einstein_ex}
Suppose that $( \mi{M}, g )$ is an $(n+1)$--dimensional vacuum FG-aAdS segment. Let $( U, \varphi )$ be a coordinate system on $\mi{I}$.
Then the following hold with respect to $\varphi_\rho$-coordinates:
\begin{equation}\label{eq.einstein_ex} 
    Rc_{ \alpha \beta } = - n \cdot g_{ \alpha \beta } \text{,} \qquad Rs = - n (n + 1) \text{,} \qquad W_{ \alpha \beta \gamma \delta } = R_{ \alpha \beta \gamma \delta } + g_{ \alpha \gamma } g_{ \beta \delta } - g_{ \alpha \delta } g_{ \beta \gamma } \text{,}
\end{equation}
where $W$ is the $g$--Weyl curvature. \footnote{That is, the $g$--traceless part of the $g$--Riemann curvature.} Furthermore, the Weyl curvature satisfies the wave equation
\begin{align}\label{eq.stwave}
    \paren{\Box_g+2n}\W{_\al_\be_\ga_\de}=4\W{^\la_\al^\mu_{[\de|}} \W{_\la_\be_\mu_{|\ga]}}-\W{^\la^\mu_\ga_\de}\W{_\al_\be_\la_\mu}\text{.}
\end{align}
\end{proposition}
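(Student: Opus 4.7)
The statement has two distinct pieces: the pointwise algebraic identities in \eqref{eq.einstein_ex}, and the wave equation \eqref{eq.stwave}. The first piece is essentially an algebraic consequence of the Einstein-vacuum equation \eqref{eq.eins}. Taking the $g$-trace of \eqref{eq.eins} yields $Rs - \frac{n+1}{2} Rs + (n+1)\Lambda = 0$, so that $Rs = \frac{2(n+1)\Lambda}{n-1} = -n(n+1)$, and substituting back gives $Rc = \frac{1}{2} Rs \cdot g - \Lambda \cdot g = -n \cdot g$. The claimed expression for $W$ then follows from the standard decomposition of the Riemann tensor into its Weyl, traceless Ricci, and scalar parts in dimension $n+1$; substituting $Rc = -n g$ and $Rs = -n(n+1)$, the coefficients $-\frac{2}{n-1}$ and $\frac{2}{n(n-1)}$ collapse precisely to produce the single curvature term $g_{\alpha\gamma}g_{\beta\delta} - g_{\alpha\delta}g_{\beta\gamma}$.

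For the wave equation, the plan is to start from the second Bianchi identity $\nabla_{[\mu} R_{\alpha\beta]\gamma\delta} = 0$ and contract with $g^{\mu\sigma}$ after differentiating once more. The twice-contracted Bianchi identity, combined with $Rc = -ng$, immediately gives $\nabla^\mu R_{\mu\nu\gamma\delta} = 0$. Now apply $\nabla^\mu$ to the full second Bianchi $\nabla_\mu R_{\alpha\beta\gamma\delta} + \nabla_\alpha R_{\beta\mu\gamma\delta} + \nabla_\beta R_{\mu\alpha\gamma\delta} = 0$ to obtain
\begin{equation*}
\Box R_{\alpha\beta\gamma\delta} = -\nabla^\mu \nabla_\alpha R_{\beta\mu\gamma\delta} - \nabla^\mu \nabla_\beta R_{\mu\alpha\gamma\delta} \text{.}
\end{equation*}
Commuting $\nabla^\mu$ past $\nabla_\alpha$ and $\nabla_\beta$ produces commutator terms of the form $[\nabla^\mu,\nabla_\alpha]R_{\beta\mu\gamma\delta}$, each of which expands to four contractions of the Riemann tensor with itself; the commuted ``pure-divergence'' pieces vanish by the contracted Bianchi identity established above.

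The remaining task is to process these four-fold products. Using $Rc = -ng$ one trace of Riemann contracts into $-n \cdot R$, producing an additional linear term $2n R_{\alpha\beta\gamma\delta}$ on the left; the other contractions are genuinely quadratic in Riemann. At this point one rewrites every Riemann in the resulting identity using the substitution $R_{\alpha\beta\gamma\delta} = W_{\alpha\beta\gamma\delta} - (g_{\alpha\gamma}g_{\beta\delta} - g_{\alpha\delta}g_{\beta\gamma})$ coming from \eqref{eq.einstein_ex}. Because $\nabla g = 0$, the pure-metric part contributes nothing to $\Box$ on the left; on the right, the cross terms $W \cdot (g \wedge g)$ simplify to linear expressions in $W$ (using the tracelessness of $W$), and the $(g \wedge g) \cdot (g \wedge g)$ terms collect into a multiple of the pure-metric piece that cancels algebraically. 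The net linear coefficient on the left should combine to $2n$, and the quadratic remainder on the right should organise itself into exactly the skew-symmetrised combination $4 W^\lambda{}_\alpha{}^\mu{}_{[\delta|} W_{\lambda\beta\mu|\gamma]} - W^{\lambda\mu}{}_{\gamma\delta} W_{\alpha\beta\lambda\mu}$.

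The main obstacle will be the last step: correctly bookkeeping the quadratic curvature terms. One must keep careful track of the symmetries $W_{\alpha\beta\gamma\delta} = W_{[\alpha\beta][\gamma\delta]} = W_{\gamma\delta\alpha\beta}$ and the algebraic Bianchi $W_{\alpha[\beta\gamma\delta]} = 0$, because several ostensibly distinct contractions coming from the two commutators must be shown to agree up to these symmetries in order to produce the antisymmetrisation bracket over $\gamma,\delta$ on the right-hand side. The cancellation of all residual pure-metric terms provides a useful consistency check, as $\Box W$ must itself be $g$-traceless. The computation is tedious but entirely algebraic; the only real input beyond the second Bianchi identity is the Einstein condition together with the algebraic identities \eqref{eq.einstein_ex} that have already been established.
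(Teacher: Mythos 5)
Your proposal is correct and follows essentially the same route as the paper: trace the Einstein equation to get \eqref{eq.einstein_ex}, take a divergence of the second Bianchi identity, commute covariant derivatives, kill the pure-divergence piece using the (contracted Bianchi plus Einstein) divergence-free property, and then replace the remaining Riemann factors by Weyl plus metric terms via \eqref{eq.einstein_ex}. The only cosmetic difference is that the paper phrases the Bianchi and divergence identities directly for $W$ rather than for $R$ (equivalent here, since $R-W$ is a parallel combination of the metric), so the final quadratic-term bookkeeping you describe is exactly the paper's concluding simplification.
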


\begin{proof}
    The first two identities in \eqref{eq.einstein_ex} follow by taking the trace of \eqref{eq.eins}. The third identity is a substitution of these expressions into the Weyl curvature,
    \begin{align}
        W_{ \al \be \ga \de } = R_{ \al \be \ga \de } - \frac{2}{n-1} \paren{g_{ \al [\ga } Rc_{\de ] \be} - g_{ \be [\ga } Rc_{\de] \al}} + \frac{2}{n(n-1)} Rs \cdot g_{\al[\ga} g_{\de]\be}\text{.}
    \end{align}
    
    Next, consider the Bianchi equation
    \begin{align}\label{eq.bianchi}
        \nabla_{[\mu} W_{\al\be]\ga\de}=0 \text{,}
    \end{align}
    and its trace
    \begin{align}\label{eq.divwey}
        \nabla^\mu\W{_\al_\be_\de_\mu}=0 \text{.}
    \end{align}
    Taking the divergence of \eqref{eq.bianchi} gives
    \begin{align}
        \Box_g\W{_\al_\be_\ga_\de}+g^{\nu\mu}\nabla_\nu\nabla_\ga\W{_\al_\be_\de_\mu}+g^{\nu\mu}\nabla_\nu\nabla_\delta\W{_\al_\be_\mu_\ga}=0 \text{.}
    \end{align}
    The covariant derivatives are permuted using
    \begin{align}
        \left[\nabla_\nu , \nabla_\ga\right]\W{_\al_\be_\de_\mu}=-\R{^\la_\al_\nu_\ga}\W{_\la_\be_\de_\mu}-\R{^\la_\be_\nu_\gamma}\W{_\al_\la_\de_\mu}-\R{^\la_\de_\nu_\ga}\W{_\al_\be_\la_\mu}-\R{^\la_\mu_\nu_\ga}\W{_\al_\be_\de_\la} \text{,}
    \end{align}
    to give
    \begin{align}
    &\Box_g\W{_\al_\be_\ga_\de}+2\nabla_{[\ga|} \nabla^\mu\W{_\al_\be_{|\de]}_\mu}-2g^{\nu\mu}\left(\R{^\la_\al_\nu_{[\ga|}}\W{_\la_\be_{|\de]}_\mu}\right. \\
    &\left.\qquad+\R{^\la_\be_\nu_{[\ga|}}\W{_\al_\la_{|\de]}_\mu}+\R{^\la_{[\de|}_\nu_{|\ga]}}\W{_\al_\be_\la_\mu}+\R{^\la_\mu_\nu_{[\ga|}}\W{_\al_\be_{|\de]}_\la}\right)=0 \text{,} \notag
    \end{align}
    in which the second term vanishes by \eqref{eq.divwey}. For the remaining terms, one uses \eqref{eq.einstein_ex} to replace the Riemann curvature with the Weyl curvature and the metric. After simplifying, one finds
    \begin{align}
        \Box_g\W{_\al_\be_\ga_\de}+2n\W{_\al_\be_\ga_\de}+2\W{^\la_\al^\mu_\ga}\W{_\la_\be_\mu_\de}+\W{^\la^\mu_\ga_\de}\W{_\al_\be_\la_\mu}-2\W{^\la_\al^\mu_\de}\W{_\la_\be_\mu_\ga} = 0 \text{,}
    \end{align}
    from which \eqref{eq.stwave} follows immediately.
\end{proof}

Our goal is to use the mixed covariant formalism to convert the spacetime wave equation \eqref{eq.stwave} into a system of wave equations for each of the independent vertical components of the Weyl curvature, defined below.

\begin{definition} \label{def.vertweylfields}
    Suppose $\paren{\mi{M}, g}$ is a FG-aAdS segment and $\paren{U,\varphi}$ is a coordinate system on $\mi{I}$. With respect to $\varphi_\rho$--coordinates, we define the following independent vertical components of the Weyl curvature:
    \begin{align}
        \wv^0_{abcd}:=\rho^2W_{abcd} \text{,}\qquad \wv^1_{abc}:=\rho^2W_{\rho abc}\text{,}\qquad \wv^2_{ab}:=\rho^2W_{\rho a \rho b}\text{.}
    \end{align}
\end{definition}

\begin{remark}
    Since the spacetime Weyl curvature is trace-free, one has that
    \begin{align} \label{eq.w2_tr_w0}
        \wv^2_{ab}=g^{\rho\rho}W_{\rho a \rho b}=-g^{cd}W_{cadb}=-\gv^{cd}\wv^0_{cadb}.
    \end{align}
    In other words, $-\wv^2$ is a $\gv$--trace of $\wv^0$.
\end{remark}

\begin{definition}
    Suppose $\paren{\mi{M}, g}$ is a FG-aAdS segment and $\paren{U,\varphi}$ is a coordinate system on $\mi{I}$. With respect to $\varphi_\rho$--coordinates, the $\gv$-traceless part of $\wv^0$ is 
    \begin{equation}\label{eq.w0_traceless}
        \hat{\wv}^0_{abcd}:=\wv^0_{abcd}+\frac{2}{n-2}\paren{\gv_{a[c}\wv^2_{d]b}+\gv_{b[d}\wv^2_{c]a}}\text{.}
    \end{equation}
\end{definition}

In \cite{Shao_2020} it was demonstrated that the vertical components of the Weyl curvature for a vacuum FG-aAdS segment can be expressed in terms of $\gv$, $\gv^{-1}$, $\li\gv$ and $\Rv$:

\begin{proposition}\cite[Proposition 2.25]{Shao_2020} \label{th.w_g_rels}
    Suppose $\paren{\mi{M},g}$ is a vacuum FG-aAdS segment and $\paren{U,\varphi}$ is a coordinate system on $\mi{I}$. Then the following relations hold with respect to $\varphi$--coordinates:
    \begin{align}
        \wv^0_{abcd}&=\Rv_{abcd}+\frac{1}{2}\li\gv_{a[c}\li \gv_{d]b}+\rho^{-1}\paren{\gv_{a[c}\li\gv_{d]b}-\gv_{b[c}\li \gv_{d]a}} \text{,} \label{eq.w0.rel}\\
        \wv^1_{cab}&=\Dv_{[b}\li\gv_{a]c} \text{,} \label{eq.w1.rel}\\
        \wv^2_{ab}&=-\frac{1}{2}\li ^2\gv_{ab}+\frac{1}{2}\rho^{-1}\li \gv_{ab} +\frac{1}{4}\gv^{cd} \li \gv_{ac} \li \gv_{bd} \text{.} \label{eq.w2.rel}
    \end{align} 
\end{proposition}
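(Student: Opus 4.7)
The proof is a direct local computation in the Fefferman--Graham gauge. Proposition \ref{thm.einstein_ex} provides the identity $W_{\alpha\beta\gamma\delta}=R_{\alpha\beta\gamma\delta}+g_{\alpha\gamma}g_{\beta\delta}-g_{\alpha\delta}g_{\beta\gamma}$, so it suffices to compute each vertical component of $R$, multiply by $\rho^2$, and add the explicit metric correction $\rho^2(g_{\alpha\gamma}g_{\beta\delta}-g_{\alpha\delta}g_{\beta\gamma})$. No genuinely wave-equation machinery is needed for these purely algebraic identities; the wave equation \eqref{eq.stwave} comes in only later, when the formulae are used.

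The first step is to compute the $\varphi_\rho$--coordinate Christoffel symbols of $g$. Using $g_{\rho\rho}=\rho^{-2}$, $g_{\rho a}=0$, $g_{ab}=\rho^{-2}\gv_{ab}$, the purely horizontal block $\Gamma^c_{ab}$ coincides with the $\gv$--Christoffel, while the mixed Christoffels are explicit first-order polynomials in $\li\gv$ and $\rho^{-1}$:
\begin{equation*}
\Gamma^\rho_{\rho\rho}=-\rho^{-1},\qquad \Gamma^\rho_{ab}=\rho^{-1}\gv_{ab}-\tfrac{1}{2}\li\gv_{ab},\qquad \Gamma^a_{\rho b}=-\rho^{-1}\delta^a_b+\tfrac{1}{2}\gv^{ac}\li\gv_{bc},
\end{equation*}
with $\Gamma^a_{\rho\rho}$ and $\Gamma^\rho_{\rho a}$ vanishing. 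Substituting these into the standard formula for $R^\delta{}_{\alpha\beta\gamma}$ and lowering an index amounts to the Gauss--Codazzi--Ricci computation for the foliation of $\mi{M}$ by level sets of $\rho$: $R_{abcd}$ reduces to $\rho^{-2}\Rv_{abcd}$ plus a bilinear combination of the `extrinsic pair' $\rho^{-1}\gv-\tfrac{1}{2}\li\gv$ against itself; $R_{\rho abc}$ collapses, after antisymmetrisation in the appropriate pair of indices, to a $\Dv$--derivative of $\li\gv$; and $R_{\rho a\rho b}$ produces a Riccati--type expression containing $\li^2\gv_{ab}$ together with $\gv^{cd}\li\gv_{ac}\li\gv_{bd}$ and a $\rho^{-1}\li\gv_{ab}$ piece.

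Finally, I multiply each expression by $\rho^2$ and incorporate the metric correction. For $\wv^0$ the singular $\rho^{-2}\gv_{ac}\gv_{bd}$ terms arising from the Christoffel products cancel exactly against the metric correction, leaving the finite expression \eqref{eq.w0.rel}; for $\wv^2$ an analogous cancellation of a $\rho^{-2}\gv_{ab}$ piece yields \eqref{eq.w2.rel}; for $\wv^1$ the metric correction vanishes identically, so that \eqref{eq.w1.rel} is read off directly. The main obstacle is not conceptual but algebraic: several $\rho^{-k}$ singularities appear in intermediate expressions and must cancel precisely upon assembly into the Weyl tensor, which makes careful bookkeeping of the symmetrisations, index permutations, and sign conventions essential. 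No new ideas beyond the foliation-theoretic Gauss--Codazzi--Ricci framework are required.
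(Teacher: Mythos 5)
Your plan is correct: the paper itself gives no proof of Proposition \ref{th.w_g_rels}, deferring to \cite[Proposition 2.25]{Shao_2020}, and the argument there is precisely what you describe — compute the $\varphi_\rho$--Christoffel symbols of the Fefferman--Graham metric (they agree with \eqref{eq.extra_Gamma} in the appendix), obtain the components $R_{abcd}$, $R_{\rho abc}$, $R_{\rho a\rho b}$ via the resulting Gauss--Codazzi--Ricci-type identities, and then use the vacuum relation $W_{\alpha\beta\gamma\delta}=R_{\alpha\beta\gamma\delta}+g_{\alpha\gamma}g_{\beta\delta}-g_{\alpha\delta}g_{\beta\gamma}$ of Proposition \ref{thm.einstein_ex} so that the $\rho^{-2}$ singular products of the extrinsic terms cancel against the metric correction. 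So your proposal is essentially the same computation as the cited proof, with the only remaining work being the sign/normalisation bookkeeping you already flag.
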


The limits proved in \cite[Theorem 3.6]{Shao_2020} may be straightforwardly applied to derive similar limits for derivatives of the vertical components of the Weyl curvature, and hence to write down Fefferman-Graham expansions similar to \eqref{eq.metric_FG_expansion}:

\begin{theorem} \label{th.w_limits}
    Fix $n>2$ and $M_0 \geq n+2$. Suppose that $(\mi{M},g)$ is an $(n+1)$--dimensional, $M_0$--regular vacuum FG-aAdS segment. Then, for $0 \leq k <n-2$ and $i=0,1,2$, there exist tensor fields $\mf{W}_{i}^{k}$ on $\mi{I}$ such that
    \begin{align}
        \li^k\wv^0 &\rightarrow^{M_0-k-2} \, k! \, \mf{W}_0^{(k)} \text{,}\qquad \rho\li^{k+1}\wv^0 \rightarrow^{M_0-k-2} \, 0 \text{,}\qquad 0 \leq k < n-2 \text{,} \label{eq.w0_lims}\\
        \li^k\wv^1 &\rightarrow^{M_0-k-2} \, k! \, \mf{W}_1^{(k)} \text{,}\qquad \rho\li^{k+1}\wv^1 \rightarrow^{M_0-k-2} \, 0 \text{,}\qquad 0 \leq k < n-1 \text{,}\label{eq.w1_lims}\\
        \li^k\wv^2 &\rightarrow^{M_0-k-2} \, k! \, \mf{W}_2^{(k)} \text{,}\qquad \rho\li^{k+1}\wv^2 \rightarrow^{M_0-k-2} \, 0 \text{,}\qquad 0 \leq k < n-2 \text{,}\label{eq.w2_lims}
    \end{align}
    where $\mf{W}_0^{(k)}=\mf{W}_2^{(k)}=0$ if $k$ is odd and $\mf{W}_1^{(k)}=0$ if $k$ is even. Furthermore, there exist tensor fields $\mf{W}_i^{(\star)}$ and $\mf{W}_i^{(\dagger)}$ on $\mi{I}$ such that
    \begin{align}
        &\begin{cases}
            \rho\li^{n-1}\wv^0 \rightarrow^{M_0-n} \, (n-2)! \, \mf{W}_0^{(\star)} \text{,}\\
            \li^{n-2}\wv^0-(n-2)!\,(\log\rho)\mf{W}_0^{(\star)} \rightarrow^{M_0-n} \, (n-2)! \, \mf{W}_0^{(\dagger)} \text{,}
        \end{cases}\\
        &\begin{cases}
            \rho\li^{n}\wv^1 \rightarrow^{M_0-n-1} \, (n-1)! \, \mf{W}_1^{(\star)} \text{,}\\
            \li^{n-1}\wv^1-(n-1)!\,(\log\rho)\mf{W}_1^{(\star)} \rightarrow^{M_0-n-1} \, (n-1)! \, \mf{W}_1^{(\dagger)} \text{,}
        \end{cases}\\
        &\begin{cases}
            \rho\li^{n-1}\wv^2 \rightarrow^{M_0-n} \, (n-2)! \, \mf{W}_2^{(\star)} \text{,}\\
            \li^{n-2}\wv^2-(n-2)!\,(\log\rho)\mf{W}_2^{(\star)} \rightarrow^{M_0-n} \, (n-2)! \, \mf{W}_2^{(\dagger)} \text{,}
        \end{cases}
    \end{align}
    where $\mf{W}_i^{(\star)}=0$ if $n$ is odd.
\end{theorem}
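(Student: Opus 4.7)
The plan is to push the boundary limits and Fefferman--Graham expansion of the vertical metric (Theorem \ref{th.metric_limits} and Corollary \ref{th.metric_expansion_for_vac_fgaads}) through the algebraic identities \eqref{eq.w0.rel}--\eqref{eq.w2.rel} of Proposition \ref{th.w_g_rels}. Since $\li$ is a derivation and those identities express $\wv^0, \wv^1, \wv^2$ as polynomials in $\gv$, $\gv^{-1}$, $\li \gv$, and $\Rv$ (with $\rho^{-1}$ factors in \eqref{eq.w0.rel} and \eqref{eq.w2.rel}), each $\li^k \wv^i$ becomes a polynomial in $\gv$, $\gv^{-1}$ and iterated Lie derivatives $\li^j \gv$ for $0 \leq j \leq k+2$, together with $\li^k \Rv$. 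The leading boundary behaviour of these terms is supplied by \eqref{eq.g_boundary_limits} and, for $\li^k \Rv$, by the parallel expansion for $\Rv$ from \cite{Shao_2020} referenced in the footnote to Corollary \ref{th.metric_expansion_for_vac_fgaads}, or equivalently derived by commuting $\li^k$ past $\Dv$ using Proposition \ref{thm.prelim_comm} and inserting \eqref{eq.metric_FG_expansion}.

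The key subtlety is handling the $\rho^{-1}$ factors: since the Fefferman--Graham expansion of $\gv$ contains only even powers of $\rho$ below order $n$, in particular $\gm^{(1)}=0$, we have $\li \gv = O(\rho)$ and $\rho^{-1}\li \gv$ is bounded with limit $-2\mf{p}$. The same boundedness propagates to all $\li^j$-derivatives, so no singular terms appear in $\li^k \wv^i$ below the relevant threshold, and $\rho \li^{k+1}\wv^i \to 0$ follows automatically. Collecting leading-order contributions defines the tensor fields $\mf{W}_i^{(k)}$ on $\mi{I}$. The parity vanishings then follow from a simple parity count: $\wv^0$ and $\wv^2$ are built from even-parity building blocks (namely $\gv$, $\gv^{-1}$, $\Rv$, $\rho^{-1}\li\gv$, $\li^2\gv$, and $\li\gv \otimes \li\gv$), while $\wv^1 = \Dv_{[b}\li\gv_{a]c}$ is odd-parity, so their boundary expansions contain only even, respectively odd, powers of $\rho$ below order $n$, forcing $\mf{W}_0^{(k)} = \mf{W}_2^{(k)} = 0$ for $k$ odd and $\mf{W}_1^{(k)} = 0$ for $k$ even.

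The logarithmic coefficients $\mf{W}_i^{(\star)}$ and subleading boundary values $\mf{W}_i^{(\dagger)}$ arise, only for $n$ even, from the $\rho^n \log\rho \cdot \gm^{(\star)}$ term in \eqref{eq.metric_FG_expansion}. Using the identity $\li^j(\rho^n \log\rho) = \tfrac{n!}{(n-j)!}\rho^{n-j}\log\rho + (\text{non-log})$ for $j \leq n$, the log coefficient first survives to the boundary at order $k=n-2$ for $\wv^0$ and $\wv^2$ (which effectively carry two $\li$-derivatives of $\gv$, via $\li^2\gv$ or $\Dv \cdot \rho^{-1}\li\gv$) and at order $k=n-1$ for $\wv^1$ (which carries an extra $\Dv$ in place of an $\li$); extracting these yields the $\mf{W}_i^{(\star)}$, and the remaining non-log boundary limits give the $\mf{W}_i^{(\dagger)}$. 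The regularity count $M_0-k-2$ follows from the fact that $\gv$ is locally in $C^{M_0+2}$, with each application of $\li$ or $\Dv$ costing one order. I expect the principal bookkeeping obstacle to be verifying that the cancellation of the $\rho^{-1}$ factors against the vanishing of $\li\gv$ remains stable under repeated application of $\li$, so that no spurious singular or parity-violating terms enter at higher orders and so that the $\log\rho$ powers first appear precisely at the claimed orders.
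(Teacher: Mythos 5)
Your proposal is correct and follows essentially the same route as the paper, which gives no detailed argument beyond noting that the limits of \cite[Theorem 3.6]{Shao_2020} (for $\li^k\gv$, and the parallel expansions for $\Rv$ and $\Dv\li\gv$) are pushed through the algebraic relations \eqref{eq.w0.rel}--\eqref{eq.w2.rel} of Proposition \ref{th.w_g_rels}. Your handling of the $\rho^{-1}$ factors via $\li\gv=\mi{O}(\rho)$, the parity bookkeeping, and the appearance of the $\log\rho$ terms at orders $n-2$ (for $\wv^0,\wv^2$) and $n-1$ (for $\wv^1$) supplies precisely the details the paper leaves implicit.
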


\begin{corollary}\label{th.fg_exp_w}
    Fix $n>2$ and $M_0 \geq n+2$. Suppose $\paren{\mi{M},g}$ is an $(n+1)$--dimensional, $M_0$--regular vacuum FG-aAdS segment. Then, with respect to any compact coordinate system on $\mi{I}$,
    \begin{align}
        \wv^0&=
        \begin{cases}
            \sum_{k=0}^{\frac{n-3}{2}} \rho^{2k} \cdot \mf{W}_0^{(2k)}+\rho^{n-2}\cdot\mf{W}_0^{(n-2)}+\rho^{n-2}\cdot\ms{r}_0 \text{,} &\text{$n$ odd,} \\
            \sum_{k=0}^{\frac{n-4}{2}} \rho^{2k} \cdot \mf{W}_0^{(2k)}+\rho^{n-2}\log\rho \cdot \mf{W}_0^{(\star)}+\rho^{n-2}\cdot\mf{W}_0^{(n-2)}+\rho^{n-2}\cdot\ms{r}_0 \text{,} &\text{$n$ even,}
        \end{cases} \label{eq.w0_exp}\\
        \wv^1&=
        \begin{cases}
            \sum_{k=0}^{\frac{n-3}{2}} \rho^{2k+1} \cdot \mf{W}_1^{(2k+1)}+\rho^{n-1}\cdot\mf{W}_1^{(n-1)}+\rho^{n-1}\cdot\ms{r}_1 \text{,} &\text{$n$ odd,} \\
            \sum_{k=0}^{\frac{n-4}{2}} \rho^{2k+1} \cdot \mf{W}_1^{(2k+1)}+\rho^{n-1}\log\rho \cdot \mf{W}_1^{(\star)}+\rho^{n-1}\cdot\mf{W}_1^{(n-1)}+\rho^{n-1}\cdot\ms{r}_1 \text{,} &\text{$n$ even,}
        \end{cases} \label{eq.w1_exp}\\
        \wv^2&=
        \begin{cases}
            \sum_{k=0}^{\frac{n-3}{2}} \rho^{2k} \cdot \mf{W}_2^{(2k)}+\rho^{n-2}\cdot\mf{W}_2^{(n-2)}+\rho^{n-2}\cdot\ms{r}_2 \text{,} &\text{$n$ odd,} \\
            \sum_{k=0}^{\frac{n-4}{2}} \rho^{2k} \cdot \mf{W}_2^{(2k)}+\rho^{n-2}\log\rho \cdot \mf{W}_2^{(\star)}+\rho^{n-2}\cdot\mf{W}_2^{(n-2)}+\rho^{n-2}\cdot\ms{r}_2 \text{,} &\text{$n$ even,}
        \end{cases} \label{eq.w2_exp}
    \end{align}
    where:
    \begin{itemize}
        \item $\mf{W}_0^{(n-2)}$, $\mf{W}_1^{(n-1)}$ and $\mf{W}_2^{(n-2)}$ are $C^{M_0-n}$, $C^{M_0-n-1}$ and $C^{M_0-n}$ $\mi{I}$--tensor fields respectively.
        \item $\ms{r}_0$, $\ms{r}_1$ and $\ms{r}_2$ are vertical tensor fields for which
        \begin{align}
            \ms{r}_0 \to^{M_0-n} 0 \text{,}\qquad \ms{r}_1 \to^{M_0-n-1} 0 \text{,}\qquad \ms{r}_2 \to^{M_0-n} 0 \text{.}
        \end{align}
    \end{itemize}
\end{corollary}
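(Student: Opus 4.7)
The plan is to derive each of the three expansions via a Taylor-with-remainder argument in $\rho$, iteratively integrating the $\li$-derivative limits supplied by Theorem \ref{th.w_limits}. This parallels the proof of Corollary \ref{th.metric_expansion_for_vac_fgaads} (based on \cite[Theorem 3.6]{Shao_2020}) by which the metric expansion \eqref{eq.metric_FG_expansion} is obtained from the limits in Theorem \ref{th.metric_limits}. I would treat $\wv^0$ in detail; the cases $\wv^1$ and $\wv^2$ follow by the same method.

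First I would use \eqref{eq.w0_lims} to write a Taylor expansion of $\wv^0$ about $\rho = 0$ to order $n-2$. Iteratively applying the identity $\li^j \wv^0|_\rho = \lim_{\sigma \searrow 0} \li^j \wv^0|_\sigma + \int_0^\rho \li^{j+1} \wv^0 \, d\sigma$ and substituting $\li^k \wv^0 \to k! \, \mf{W}_0^{(k)}$ for each $0 \leq k < n-2$ yields
\begin{align*}
\wv^0 = \sum_{k=0}^{n-3} \rho^k \cdot \mf{W}_0^{(k)} + \mi{R}(\rho) \text{,}
\end{align*}
where the remainder $\mi{R}$ is an iterated $\rho$-integral of $\li^{n-2} \wv^0$. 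Since $\mf{W}_0^{(k)} = 0$ for odd $k$, this sum collapses to even indices, producing the polynomial portion of \eqref{eq.w0_exp}.

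Next I would analyse $\mi{R}$ using the top-order limits in Theorem \ref{th.w_limits}. When $n$ is odd, $\mf{W}_0^{(\star)} = 0$ and $\li^{n-2} \wv^0 \to (n-2)! \, \mf{W}_0^{(\dagger)}$; iterated integration then yields a $\rho^{n-2} \cdot \mf{W}_0^{(n-2)}$ term (defining $\mf{W}_0^{(n-2)} := \mf{W}_0^{(\dagger)}$) plus a remainder $\ms{r}_0$ of order $o(\rho^{n-2})$. When $n$ is even, the expansion $\li^{n-2} \wv^0 = (n-2)!(\log \rho) \cdot \mf{W}_0^{(\star)} + (n-2)! \cdot \mf{W}_0^{(\dagger)} + o(1)$ is integrated instead; since $\int_0^\rho \sigma^j \log \sigma \, d\sigma$ produces $\rho^{j+1} \log \rho$ plus a polynomial in $\rho$, $n-2$ iterated integrations of the $\log$ piece generate the expected $\rho^{n-2} \log \rho \cdot \mf{W}_0^{(\star)}$ contribution together with polynomial corrections of order exactly $\rho^{n-2}$ (lower-order polynomial corrections are forced to vanish by the parity structure of the lower-order limits) which are absorbed into $\mf{W}_0^{(n-2)}$.

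The main technical obstacle will be bookkeeping the coordinate regularity through this iterative integration: the limit at level $k$ holds only in the $|\cdot|_{M_0 - k - 2, \varphi}$ sense, so one must verify that the $\varphi$-coordinate derivatives commute through all $n-2$ integrations, yielding a final remainder $\ms{r}_0$ with exactly the stated $\to^{M_0-n}$ convergence. The parallel arguments for $\wv^1$ (odd-power polynomial part up to $\rho^{n-1}$, since $\mf{W}_1^{(k)}$ vanishes for even $k$, with the log correction appearing one level deeper at $\li^{n-1}$) and $\wv^2$ (structurally identical to $\wv^0$) require no new ideas beyond tracking the slightly different orders and parities.
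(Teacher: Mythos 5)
Your proposal is correct and follows essentially the same route as the paper, which gives no separate argument for this corollary precisely because it regards it as the immediate consequence of Theorem \ref{th.w_limits} obtained by the same iterated $\rho$-integration (Taylor-with-remainder) procedure that turns Theorem \ref{th.metric_limits} into the metric expansion \eqref{eq.metric_FG_expansion} in \cite{Shao_2020}. One cosmetic remark: in the even-$n$ case the polynomial corrections produced by integrating the $\log$ piece all appear at order exactly $\rho^{n-2}$ (there are no lower-order ones to cancel by parity), and they are simply absorbed into $\mf{W}_0^{(n-2)}$, as you ultimately conclude.
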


In fact, one can use Proposition \ref{th.w_g_rels} to compute the leading--order terms in each of these boundary expansions:

\begin{proposition} \label{th.weyl_leading_order}
    Fix $n > 2$ and $M_0 \geq n+2$. Suppose $\paren{\mi{M},g}$ is a $(n+1)$--dimensional, $M_0$--regular vacuum FG-aAdS segment. Then
    \begin{align}
        \mf{W}_{0 \, abcd}^{(0)} = \mf{W}_{abcd} \text{,}\qquad \mf{W}_{1 \, abc}^{(1)} = \frac{1}{n-2} \mf{C}_{abc} \text{,}\qquad \mf{W}_{2 \, ab}^{(0)} = 0 \text{,}
    \end{align}
    where $\mf{W}$ and $\mf{C}$ are the $\gm$--Weyl curvature and Cotton tensors respectively, and
    \begin{align}
        &n=3: \qquad \mf{W}_{2 \, ab}^{(1)} = -\frac{3}{2} \gm^{(3)}_{ab} \text{,} \label{eq.w2_1}\\
        &n=4: \qquad \mf{W}_{2 \, ab}^{(\star)} = -4\gm^{(\star)}_{ab} \text{,}\qquad \mf{W}_{2 \, ab}^{(2)} = -3\gm^{(\star)}_{ab} - 4 \gm^{(4)}_{ab} + \gm^{cd} \mf{p}_{ac} \mf{p}_{bd} \text{,} \label{eq.w2_2}\\
        &n>4: \qquad\mf{W}_{2 \, ab}^{(2)} = -4 \gm^{(4)}_{ab} + \gm^{cd} \mf{p}_{ac} \mf{p}_{bd} \text{,} \label{eq.w2_3}\\
        &n>6: \qquad \mf{W}_{2 \, ab}^{(4)} = -12\gm^{(6)}_{ab}-2\gm^{cd}\paren{\mf{p}_{ac}\gm^{(4)}_{bd}+\gm^{(4)}_{ac}\mf{p}_{bd}} + \gm^{ce}\gm^{df}\mf{p}_{ef}\mf{p}_{ac}\mf{p}_{bd} \text{.} \label{eq.w2_4}
    \end{align}
\end{proposition}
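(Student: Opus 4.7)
My plan is to substitute the Fefferman--Graham expansion of $\gv$ from Corollary \ref{th.metric_expansion_for_vac_fgaads} directly into the algebraic identities \eqref{eq.w0.rel}--\eqref{eq.w2.rel} of Proposition \ref{th.w_g_rels}, then match powers of $\rho$ (and of $\rho \log \rho$, in the even-dimensional cases) against the expansions of Corollary \ref{th.fg_exp_w}. The key inputs from Theorem \ref{th.metric_limits} are $\gm^{(0)} = \gm$ and $\gm^{(2)} = -\mf{p}$.

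First I would compute the boundary asymptotics of $\li \gv$, $\rho^{-1} \li \gv$, and $\li^2 \gv$: one has $\li \gv = -2 \rho \, \mf{p} + O(\rho^2)$, while both $\rho^{-1} \li \gv$ and $\li^2 \gv$ tend to $-2\mf{p}$ as $\rho \searrow 0$. For $\wv^0$, the middle term of \eqref{eq.w0.rel} is $O(\rho^2)$, the last term tends to $-\gm_{ac} \mf{p}_{bd} + \gm_{ad} \mf{p}_{bc} + \gm_{bc} \mf{p}_{ad} - \gm_{bd} \mf{p}_{ac}$, and the first tends to $\Rm_{abcd}$; invoking the standard decomposition $\Rm_{abcd} = \mf{W}_{abcd} + \gm_{ac} \mf{p}_{bd} + \gm_{bd} \mf{p}_{ac} - \gm_{ad} \mf{p}_{bc} - \gm_{bc} \mf{p}_{ad}$ yields $\mf{W}^{(0)}_0 = \mf{W}$. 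For $\wv^1$, plugging $\li \gv = -2\rho \mf{p} + O(\rho^2)$ and $\Dv \to \Dm$ into \eqref{eq.w1.rel} produces $\rho (\Dm_a \mf{p}_{bc} - \Dm_b \mf{p}_{ac})$ at order $\rho$, which is $\tfrac{\rho}{n-2} \mf{C}_{cab}$ in the convention $\mf{C}_{cab} = (n-2) (\Dm_a \mf{p}_{bc} - \Dm_b \mf{p}_{ac})$. For $\wv^2$ at the boundary, the first two terms of \eqref{eq.w2.rel} cancel (both reducing to $\mp \gm^{(2)}$) and the quadratic term is $O(\rho^2)$, hence $\mf{W}^{(0)}_2 = 0$.

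The remaining identities are subleading $\wv^2$ computations which I split by dimension. For each case I would expand $\li \gv$ and $\li^2 \gv$ to one order beyond the target using the FG expansion, and control the quadratic piece $\tfrac{1}{4} \gv^{cd} \li \gv_{ac} \li \gv_{bd}$, whose leading contribution is $\rho^2 \gm^{cd} \mf{p}_{ac} \mf{p}_{bd}$. For $n = 3$ the $\rho$-coefficient is $-\tfrac{3}{2} \gm^{(3)}_{ab}$, arising from the combination of $-\tfrac{1}{2} \cdot 6\rho\gm^{(3)}$ in $\li^2 \gv$ with $\tfrac{1}{2} \cdot 3\rho\gm^{(3)}$ in $\rho^{-1} \li \gv$, with the quadratic term contributing only at $O(\rho^2)$. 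For $n > 4$ the $\rho^2$-coefficient assembles into $-4 \gm^{(4)}_{ab} + \gm^{cd} \mf{p}_{ac} \mf{p}_{bd}$ by the same mechanism. For $n > 6$ the $\rho^4$-coefficient additionally requires the second-order expansion $\gv^{cd} = \gm^{cd} + \rho^2 \gm^{ce} \gm^{df} \mf{p}_{ef} + \ldots$, and collects the three contributions $-12 \gm^{(6)}$, $-2 \gm^{cd}(\mf{p}_{ac} \gm^{(4)}_{bd} + \gm^{(4)}_{ac} \mf{p}_{bd})$, and $\gm^{ce} \gm^{df} \mf{p}_{ef} \mf{p}_{ac} \mf{p}_{bd}$.

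The main obstacle I anticipate is the $n = 4$ case, which introduces logarithmic terms. One must differentiate $\rho^4 \log \rho \cdot \gm^{(\star)}$ twice, obtaining $\li (\rho^4 \log \rho) = \rho^3 (4 \log \rho + 1)$ and $\li^2 (\rho^4 \log \rho) = \rho^2 (12 \log \rho + 7)$, and then carefully separate the $\rho^2 \log \rho$ and $\rho^2$ contributions across the three terms of \eqref{eq.w2.rel}. The logarithmic coefficients should combine into $\mf{W}^{(\star)}_2 = -4 \gm^{(\star)}$, while the non-logarithmic residues together with the contributions from $\gm^{(4)}$ and from the quadratic piece should assemble into $\mf{W}^{(2)}_2 = -3 \gm^{(\star)} - 4 \gm^{(4)} + \gm^{cd} \mf{p}_{ac} \mf{p}_{bd}$. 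This is a bookkeeping issue rather than a conceptual difficulty; all further subleading terms fall into remainders controlled by the convergence estimates of Corollary \ref{th.fg_exp_w}.
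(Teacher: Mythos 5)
Your proposal is correct and follows essentially the same route as the paper: substitute the Fefferman--Graham expansion of $\gv$ (and, for the quadratic term in \eqref{eq.w2.rel}, the induced expansion of $\gv^{-1}$, which the paper obtains from $\gv^{ab}\gv_{bc}=\delta^a_c$) into the relations \eqref{eq.w0.rel}--\eqref{eq.w2.rel} and match coefficients of $\rho$ and $\rho\log\rho$ against Corollary \ref{th.fg_exp_w}. Your stated coefficients, including the $n=4$ logarithmic bookkeeping and the $n>6$ case using $\gv^{cd}=\gm^{cd}+\rho^2\gm^{ce}\gm^{df}\mf{p}_{ef}+\dots$, all check out.
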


\begin{proof}
    Substituting the appropriate expansions into the right-hand sides of \eqref{eq.w0.rel} and \eqref{eq.w1.rel} yield
    \begin{align}
        \mf{W}_{0 \, abcd} ^{(0)}&= \Rm_{abcd}+2\gm_{a[c}\mf{p}_{d]b}-2\gm_{b[c}\mf{p}_{d]a} = \mf{W}_{abcd} \text{,} \\
        \mf{W}_{1 \, abc}^{(1)} &= 2 \, \Dm_{[c}\mf{p}_{b]a} = \frac{1}{n-2} \mf{C}_{abc} \text{,} \nonumber
    \end{align}
    as required. To prove the remaining expressions, we first require the leading-order terms in the boundary expansion of $\gv^{-1}$. These are obtained by taking derivatives of
    \begin{align}
        \gv^{ab} \gv_{bc} = \de^a_c \text{,}
    \end{align}
    and computing boundary limits by applying Theorem \ref{th.metric_limits}. In particular, one finds that \footnote{Additional limits are computed for the case when $\gm$ is conformally flat in Corollary \ref{th.conf_flat_met_exp}.}
    \begin{align}
        \gm^{-1} \rightarrow^{M_0} \, \gm^{-1} \text{,}\qquad  \li\gm^{-1} \rightarrow^{M_0-1} \, 0 \text{,}\qquad \li^2\gm^{-1} \rightarrow^{M_0-2} \, 2 \, \gm^{-1} \gm^{-1} \mf{p} \text{.}
    \end{align}
    Substituting this along with the boundary expansions of $\li\gv$ and $\li^2\gv$ (on a dimensional case-by-case basis) into the right-hand side of \eqref{eq.w2.rel} yields the required expressions.
\end{proof}

The second Bianchi identity for the spacetime Weyl curvature can be decomposed to obtain a series of first-order equations involving only vertical objects:

\begin{proposition} \label{eq.vert_Bianchi_eqns}
   Fix $n>2$ and $M_0 \geq n+2$. Suppose $\paren{\mi{M},g}$ is an $(n+1)$--dimensional, $M_0$--regular vacuum FG-aAdS segment. If $\paren{U , \varphi}$ is a coordinate system on $\mi{I}$ then the following \textbf{vertical Bianchi equations} hold for the vertical Weyl fields with respect to $\varphi_\rho$--coordinates:
   \begin{align}
       \rho \Dvm_\rho \wv^0_{abcd} &= 2\rho \Dv_{[a}\wv^1_{b]cd} -2\gv_{c[a}\wv^2_{b]d}+2\gv_{d[a}\wv^2_{b]c} \label{eq.Bi3}\\
       &\qquad \underbrace{ +\rho \gv^{ef}\li\gv_{e[a}\wv^0_{b]fcd}+\rho\li\gv_{c[a}\wv^2_{b]d}-\rho\li\gv_{d[a}\wv^2_{b]c} }_{\mi{O}_{M_0-2}(\rho^2;\wv^0)} \text{,} \nonumber \\
       \rho^{n-2} \Dvm_\rho \paren{ \rho^{-(n-2)} \wv^1_{abc} } &= - \gv^{de} \Dv_d \wv^0_{eabc} - \underbrace{ \gv^{de} \paren{ \li \gv_{e[b} \wv^1_{c]da} + \li \gv_{d[e} \wv^1_{a]bc} } }_{\mi{O}_{M_0-2}(\rho ; \wv^1)} \text{,} \label{eq.Bi2} \\
       \rho \Dvm_\rho \paren{ \rho^{-1} \wv^1_{abc} } &= 2 \Dv_{[b} \wv^2_{c]a} \underbrace{ - \gv^{de} \paren{ \li \gv_{be} \wv^1_{(a|d|c)} - \li \gv_{ce} \wv^1_{(a|d|b)} } }_{\mi{O}_{M_0-2}(\rho ; \wv^1)} \text{,} \label{eq.Bi1}\\
       \rho^{n-2} \Dvm_\rho \paren{ \rho^{-(n-2)} \wv^2_{bd} } &= - 2 \gv^{ac} \Dv_{[a} \wv^1_{b]cd} \label{eq.Bi4}\\
       &\qquad - \underbrace{ \gv^{ac} \paren{ \gv^{ef} \li \gv_{e[a} \wv^0_{b]fcd} + \li \gv_{c[a} \wv^2_{b]d} - \frac{1}{2} \li \gv_{da} \wv^2_{bc} } }_{\mi{O}_{M_0-2}( \rho ; \wv^0 )} \text{.} \nonumber
   \end{align}
\end{proposition}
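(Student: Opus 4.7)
The plan is to obtain all four vertical Bianchi equations by decomposing the spacetime second Bianchi identity for the Weyl curvature --- the statement $\nabla_{[\mu} W_{\alpha\beta]\gamma\delta} = 0$ together with its contracted (divergence-free) consequence $\nabla^\mu W_{\alpha\beta\gamma\mu} = 0$, which holds because of the vacuum Einstein equations --- into statements about the vertical components $\wv^0$, $\wv^1$, $\wv^2$, via the conversion formulae of Proposition \ref{thm.prelim_decomp}. The assumption that $(\mi{M},g)$ is $M_0$--regular with $M_0 \geq n+2$ guarantees \eqref{eq.prelim_regular}, so these formulae apply with $M = M_0$, and any $\li\gv$-dependent remainder automatically lands in $\mi{O}_{M_0-2}(\rho; \cdot)$.

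To derive \eqref{eq.Bi3} I would take the Bianchi identity with $\mu = \rho$ and all other free indices tangential, obtaining $\nabla_\rho W_{abcd} = 2 \nabla_{[a} W_{|\rho|b]cd}$, then apply \eqref{eq.prelim_decomp_rho} on the left to $W_{abcd} = \rho^{-2}\wv^0_{abcd}$ and \eqref{eq.prelim_decomp_a} on the right to $W_{\rho bcd} = \rho^{-2}\wv^1_{bcd}$. The key accounting is that the $\rho^{-1}\ms{A}^\rho$ contribution on the right, once antisymmetrised in $(a,b)$, exactly absorbs the singular $\rho^{-3}\wv^0$ term produced by $\Dvm_\rho(\rho^2 \wv^0)$ on the left, while the $\rho^{-1}\gv \cdot \ms{A}^v$ pieces yield the $\gv\wv^2$ algebraic terms in \eqref{eq.Bi3} (using that the reduction $\ms{A}^v_1$ vanishes by the first-pair antisymmetry of $W$, and that the other two reductions evaluate to $\pm \rho^{-2}\wv^2$). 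Equations \eqref{eq.Bi2} and \eqref{eq.Bi4} are obtained analogously, now starting from the divergence equation $g^{\mu\nu}\nabla_\mu W_{\alpha\beta\gamma\nu} = 0$ with $(\alpha,\beta,\gamma)$ chosen as $(a,b,c)$ or $(\rho, b, d)$ respectively, splitting $g^{\mu\nu}$ into its $\rho\rho$ and tangential parts, and tracking the precise $\rho$-powers to produce the $\rho^{n-2}\Dvm_\rho(\rho^{-(n-2)}\wv^i)$ combinations on the left. Equation \eqref{eq.Bi1} is instead derived from the uncontracted Bianchi identity with one pair of free indices placed in the slot $(\rho, a)$, so that the right-hand side naturally produces $\Dv_{[b}\wv^2_{c]a}$ rather than a $\gv$-divergence.

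The main obstacle is the meticulous bookkeeping of singular terms and signs. Each application of \eqref{eq.prelim_decomp_a} to a Weyl component carrying a $\rho$-index generates several algebraic $\rho^{-1}$ contributions, and one has to verify that all of them either cancel against analogous $\rho$-power shifts arising from $\Dvm_\rho$ on the other side or combine into the announced $\gv\wv^j$ terms, with only the $\li\gv$-remainders surviving as $\mi{O}_{M_0-2}(\rho; \cdot)$. The Weyl symmetries (antisymmetry in each pair, pair-swap symmetry, and the trace identity \eqref{eq.w2_tr_w0}) are used repeatedly to identify each of the reductions $\ms{A}^\rho_i$, $\ms{A}^v_j$, $\ms{A}^{\rho,\rho}_{i,j}$, $\ms{A}^{v,v}_{i,j}$ of Definition \ref{def.conversion_conventions} with one of $\wv^0$, $\wv^1$, $\wv^2$ (or with zero); keeping these identifications and their signs consistent across the four equations is what takes the bulk of the computational work.
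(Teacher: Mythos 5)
Your proposal is correct and follows essentially the same route as the paper: selecting components of the second Bianchi identity for $W$ and converting them with \eqref{eq.prelim_decomp_rho}--\eqref{eq.prelim_decomp_a}, using the Weyl symmetries and \eqref{eq.w2_tr_w0} to identify the reductions with $\wv^0$, $\wv^1$, $\wv^2$, and the $M_0$--regularity to place all $\li\gv$--remainders in $\mi{O}_{M_0-2}(\rho;\cdot)$; your cancellation bookkeeping for \eqref{eq.Bi3} (the $\rho^{-1}\ms{A}^\rho$ terms absorbing the $\rho^{-3}\wv^0$ singularity, the $\ms{A}^v$ reductions giving the $\gv\,\wv^2$ terms) matches the paper's computation. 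The only minor deviation is that for \eqref{eq.Bi2} and \eqref{eq.Bi4} you invoke the spacetime divergence identity $\nabla^\mu W_{\alpha\beta\gamma\mu}=0$ directly and split $g^{\mu\nu}$, whereas the paper takes $\gv$--traces of the already converted equations (using \eqref{eq.w2_tr_w0} and \eqref{eq.Bi1}); since the divergence identity is just the trace of the Bianchi identity and the $g^{\rho\rho}$ piece is exactly what \eqref{eq.w2_tr_w0} accounts for, the two are equivalent.
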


\begin{remark}
    One can apply \eqref{eq.rho_deriv} to \eqref{eq.Bi3}--\eqref{eq.Bi4} to express them in terms of $\li$--derivatives:
    \begin{align}
        \rho\li\wv^0_{abcd}&=2\rho\Dv_{[a}\wv^1_{b]cd}-2\gv_{c[a}\wv^2_{b]d}+2\gv_{d[a}\wv^2_{b]c} \label{eq.renorm_bianchi_3}\\
        &\qquad \underbrace{ - \rho \, \gv^{ef}\li\gv_{e[c}\wv^0_{d]fab}+\rho\li\gv_{c[a}\wv^2_{b]d}-\rho\li\gv_{d[a}\wv^2_{b]c} }_{\mi{O}_{M_0-2}(\rho^2;\wv^0)} \text{,} \nonumber\\
        \rho^{n-2} \li \paren{ \rho^{-(n-2)} \wv^1_{abc} } &= -\gv^{de} \Dv_d \wv^0_{eabc} - \frac{1}{2} \gv^{de} \big( \li \gv_{de} \wv^1_{abc} - 2 \li \gv_{da} \wv^1_{ebc} \label{eq.ex_bi_1_a}\\
        &\qquad\qquad\qquad\qquad \underbrace{\qquad\qquad- \li \gv_{bd} \wv^1_{eac} - \li \gv_{cd} \wv^1_{cba} \big)}_{\mi{O}_{M_0-2}(\rho ; \wv^1)} \text{,} \nonumber \\
        \rho \li ( \rho^{-1} \wv^1_{abc} ) &= 2 \Dv_{[b} \wv^2_{c]a} \underbrace{ - \frac{1}{2} \gv^{de} \paren{ \li \gv_{be} \wv^1_{cda} - \li \gv_{ce} \wv^1_{bda} + \li \gv_{ae} \wv^1_{dcb} } }_{\mi{O}_{M_0-2}(\rho ; \wv^1)} \text{,} \label{eq.renorm_bianchi_2}\\
        \rho^{n-2} \li \paren{ \rho^{-(n-2)} \wv^2_{bd} } &= - 2 \gv^{ac} \Dv_{[a} \wv^1_{b]cd} - \underbrace{ \frac{1}{2} \gv^{ac} \gv^{ef} \li \gv_{ea} \wv^0_{bfcd} + \gv^{ef} \li \gv_{e[b} \wv^2_{f]d} }_{\mi{O}_{M_0-2}( \rho ; \wv^0 )} \text{.} \label{eq.ex_bi_2}
    \end{align}
\end{remark}

\begin{proof}
    Consider the following components of the spacetime Bianchi equation \eqref{eq.bianchi}:
    \begin{align}
        \nabla_b W_{\rho c \rho a}+\nabla_\rho W_{cb \rho a}+\nabla_c W_{b \rho\rho a}&=0\text{,} \label{eq.st_bi_1}\\
        \nabla_a W_{\rho bcd}+\nabla_\rho W_{bacd}+\nabla_b W_{a \rho cd}&=0 \text{,}\label{eq.st_bi_3}\\
        \nabla_d W_{eabc}+\nabla_b W_{eacd}+\nabla_c W_{eadb}&=0\text{.}\label{eq.st_bi_2}
    \end{align}
    \begin{itemize}
        \item \eqref{eq.Bi1} follows by application of \eqref{eq.prelim_decomp_rho} and \eqref{eq.prelim_decomp_a} to convert all spacetime quantities in \eqref{eq.st_bi_1} to vertical quantities.
        
        \item \eqref{eq.Bi3} follows by application of \eqref{eq.prelim_decomp_rho} and \eqref{eq.prelim_decomp_a} to \eqref{eq.st_bi_3}.
        
        \item \eqref{eq.Bi4} follows by taking the $\gv$-trace of \eqref{eq.Bi3}.
        
        \item \eqref{eq.Bi2} follows by taking the $\gv$-trace of the expression obtained upon application of \eqref{eq.prelim_decomp_a} to \eqref{eq.st_bi_2}, then applying \eqref{eq.w2_tr_w0} and \eqref{eq.Bi1}. \qedhere
    \end{itemize}
\end{proof}

Similarly, the wave equation satisfied by the spacetime Weyl curvature can be decomposed into a series of wave equations involving only vertical objects:

\begin{proposition} \label{th.vert_wave_eqns}
    Fix $n>2$ and $M_0 \geq n+2$. Suppose $\paren{\mi{M},g}$ is an $(n+1)$--dimensional, $M_0$--regular vacuum FG-aAdS segment. Then
    \begin{align}
        \Boxm_g\hat{\wv}^0 &=\mi{O}_{M_0-2}(\rho^2;\hat{\wv}^0)+\mi{O}_{M_0-3}(\rho^3;\wv^1)+\mi{O}_{M_0-2}(\rho^2;\wv^2) \label{eq.vert_w0} \\
        &\qquad +\mi{O}_{M_0-2}(\rho^3;\Dvm\wv^1) \text{,} \nonumber\\
        \paren{\Boxm_g+(n-1)}\wv^1 &= \mi{O}_{M_0-3}(\rho^3;\hat{\wv}^0) + \mi{O}_{M_0-2}(\rho^2;\wv^1) + \mi{O}_{M_0-3}(\rho^3;\wv^2) \label{eq.vert_w1} \vspace{2mm}\\
        &\qquad+\mi{O}_{M_0-2}(\rho^3;\Dvm\hat{\wv}^0)+\mi{O}_{M_0-2}(\rho^3;\Dvm\wv^2) \text{,} \nonumber\\
        \paren{\Boxm_g+2(n-2)}\wv^2 &= \mi{O}_{M_0-2}(\rho^2;\hat{\wv}^0)+\mi{O}_{M_0-3}(\rho^3;\wv^1)+\mi{O}_{M_0-2}(\rho^2;\wv^2) \label{eq.vert_w2}\\
        &\qquad+\mi{O}_{M_0-2}(\rho^3;\Dvm\wv^1) \text{.}\nonumber
    \end{align}
\end{proposition}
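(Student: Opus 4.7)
The strategy is to decompose the spacetime Weyl wave equation \eqref{eq.stwave} from Proposition \ref{thm.einstein_ex} into three coupled vertical wave equations by applying the conversion formula \eqref{eq.box} of Proposition \ref{thm.prelim_decomp} to each of the three independent component-types of $W$, extracting the correct effective mass via the commutator identity \eqref{eq.prelim_comm_rho}, and then cleaning up stray first-order couplings using the vertical Bianchi equations of Proposition \ref{eq.vert_Bianchi_eqns} together with the trace relation \eqref{eq.w2_tr_w0}.

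Concretely, I would apply \eqref{eq.box} to the components $W_{abcd}$ (with $r_1=0$, $r_2=4$), $W_{\rho abc}$ ($r_1=1$, $r_2=3$) and $W_{\rho a\rho b}$, the last treated by invoking the Weyl pair-swap symmetry $W_{\alpha\beta\gamma\delta}=W_{\gamma\delta\alpha\beta}$ and the pair antisymmetries to recast it in the canonical $(\ix{\rho},\ix{a})$ ordering assumed by the conversion formula. In each case, the cross-terms $\ms{A}^\rho_i$, $\ms{A}^v_j$, $\ms{A}^{\rho,\rho}_{i,j}$, $\ms{A}^{v,v}_{i,j}$, $\ms{A}^{\rho,v}_{i,j}$ in \eqref{eq.box} reduce, again via the Weyl symmetries, to contributions from the other two vertical Weyl fields (plus terms that vanish outright, notably $W_{\rho\rho\,\cdot\,\cdot}=0$). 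I would then apply \eqref{eq.prelim_comm_rho} with $p=2$ to the $\Boxm(\rho^{r_1+r_2}\ms{A})=\Boxm(\rho^{2}\wv^i)$ factor, which produces a $-2(n-2)$ mass shift together with a first-order $4\rho\Dvm_\rho\wv^i$ remainder. Combining with the $-(nr_1+r_2)$ factor from \eqref{eq.box} and the $+2n$ from the left-hand side of \eqref{eq.stwave} gives the raw mass on each $\wv^i$.

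The surviving first-order $\Dvm_\rho\wv^i$ and tangential cross-terms are eliminated using the vertical Bianchi equations \eqref{eq.Bi3}--\eqref{eq.Bi4}: each $\Dvm_\rho\wv^i$ may be traded, modulo a $\rho^{-1}\wv^i$ correction that feeds back into the effective mass and an $\mi{O}$ remainder, for a tangential divergence of another $\wv^j$, while the stray $\gv^{\cdot\cdot}\Dv_\cdot\wv^j$ cross-terms arising from the $\ms{A}^\rho_i$ sums are recycled using the same identities. The quadratic right-hand side of \eqref{eq.stwave}, once the factors of $g^{-1}=\rho^{2}(\partial_\rho\otimes\partial_\rho+\gv^{-1})$ are accounted for, becomes a sum of products of two vertical Weyl fields weighted by powers of $\rho$ which fit naturally into the $\mi{O}_{M_0-2}(\rho^{\cdot};\wv^j)$ and $\mi{O}_{M_0-2}(\rho^{\cdot};\Dvm\wv^j)$ buckets of Definition \ref{def.prelim_O}. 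Finally, \eqref{eq.vert_w0} is obtained from the equation for $\wv^0$ by adding the trace correction prescribed in \eqref{eq.w0_traceless}; since $\Boxm_g\gv=0$ by \eqref{eq.aads_mixed_connection_compat}, this amounts schematically to adding $\tfrac{2}{n-2}\gv\cdot\Boxm_g\wv^2$, and the $+2(n-2)\wv^2$ mass from \eqref{eq.vert_w2} precisely cancels the residual $\gv\cdot\wv^2$ coupling inherited from the $\ms{A}^{v,v}_{i,j}$ terms of the $\wv^0$ equation, yielding the stated massless form.

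The main obstacle is bookkeeping: the numerical coefficients from all the cross-terms must conspire so that the final masses come out to exactly $0$, $n-1$ and $2(n-2)$, and the choice of which Bianchi identity to use in eliminating $\Dvm_\rho\wv^i$ (for $\wv^1$, neither \eqref{eq.Bi1} nor \eqref{eq.Bi2} alone gives the right coefficient; one needs the particular linear combination corresponding to the decomposition built into \eqref{eq.box}) introduces quite different residual contributions. The $\hat{\wv}^0$ case is especially delicate, since the zeroth-order $\ms{A}^{v,v}_{i,j}$ cross-terms in \eqref{eq.box} couple $\wv^0$ to $\wv^2$ with specific numerical factors that must match the trace correction exactly to produce the massless equation \eqref{eq.vert_w0}.
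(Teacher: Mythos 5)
Your proposal is correct and takes essentially the same route as the paper: convert the components of \eqref{eq.stwave} via \eqref{eq.box}, extract the $\rho^{2}$ weight with \eqref{eq.prelim_comm_rho} to get the $-2(n-2)$ shift and $4\rho\Dvm_\rho\wv^i$ remainder, trade the leftover first-order terms using the vertical Bianchi equations \eqref{eq.Bi1}--\eqref{eq.Bi4} together with \eqref{eq.w2_tr_w0} (and the algebraic Bianchi identity for $\wv^1$), absorb the quadratic right-hand side into the $\mi{O}$ buckets, and derive \eqref{eq.vert_w0} from the definition \eqref{eq.w0_traceless}, $\gv$-compatibility of $\Dvm$, and the already-established \eqref{eq.vert_w2}. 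One cosmetic quibble: the pair-swap and pair antisymmetries of $W$ cannot literally move both $\rho$'s of $W_{\rho a\rho b}$ into the leading slots (that component would be $W_{\rho\rho ab}=0$); instead one applies \eqref{eq.box} with the $\rho$-slots taken to be the first and third (equivalently, to the slot-permuted tensor, with which $\Box$ commutes), which is what the paper does implicitly.
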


\begin{proof}
    \eqref{eq.stwave} gives
    \begin{gather}
        \rho^2\Box_g W_{\rho a \rho b}=-2n\wv^2_{ab}+\rho^2 \cdot \ms{Q}^2_{ab} \label{eq.LHS2}\\
        \rho^2\Box_g W_{\rho abc}=-2n\wv^1_{ab}+\rho^2 \cdot \ms{Q}^1_{abc} \label{eq.LHS1}\\
        \rho^2\Box_g W_{abcd}=-2n\wv^0_{abcd} + \rho^2 \cdot \ms{Q}^0_{abcd} \label{eq.LHS0}
    \end{gather}
    in which $\ms{Q}^0$, $\ms{Q}^1$, $\ms{Q}^2$ are vertical tensor fields consisting only of terms quadratic in $\hat{\wv}^0$, $\wv^1$, $\wv^2$, the precise form of which are found in Appendix \ref{sec.quadratic_terms}; in particular, one finds
    \begin{align}
        \ms{Q}^2 &= \mc{O}_{M_0-2} \paren{ 1; \hat{\wv}^0 } + \mc{O}_{M_0-3} \paren{ \rho; \wv^1 } + \mc{O}_{M_0-2} \paren{ 1; \wv^2 } \text{,} \\
        \ms{Q}^1 &= \mc{O}_{M_0-3} \paren{ \rho; \hat{\wv}^0 } + \mc{O}_{M_0-2} \paren{ 1; \wv^1 } + \mc{O}_{M_0-3} \paren{ \rho; \wv^2 } \text{,} \\
        \ms{Q}^0 &= \mc{O}_{M_0-2} \paren{ 1; \hat{\wv}^0 } + \mc{O}_{M_0-3} \paren{ \rho; \wv^1 } + \mc{O}_{M_0-2} \paren{ 1; \wv^2 } \text{.}
    \end{align}
    Using \eqref{eq.box} to replace the left-hand side of \eqref{eq.LHS2} with vertical objects, one obtains
    \begin{align} \label{eq.st_vert_box_w2}
        \rho^2\Box_g W_{\rho a \rho b}=&\rho^{-2}\Boxm_g(\rho^2\wv^2_{ab})+2\rho\gv^{cd}\paren{\Dv_c\wv^1_{bda}+\Dv_c\wv^1_{adb}} \\ 
        &\qquad -2(n-1)\wv^2_{ab}+2\gv^{cd}\wv^0_{cadb} \notag\\
        &\qquad +\mi{O}_{M_0-2}(\rho^2;\hat{\wv}^0)+\mi{O}_{M_0-2}(\rho^2;\wv^2) \notag\\
        &\qquad +\mi{O}_{M_0-3}(\rho^3;\wv^1)+\mi{O}_{M_0-2}(\rho^3;\Dvm\wv^1) \text{.}\notag
    \end{align}
    \eqref{eq.prelim_comm_rho} is used to extract the $\rho^2$ factor from the $\Boxm_g$-terms:
    \begin{align}
        \rho^{-2}\Boxm_g(\rho^2\wv^2_{ab})=\Boxm_g\wv^2_{ab}+4\rho\Dvm_\rho\wv^2_{ab}-2(n-2)\wv^2_{ab}+\mi{O}_{M_0-2}(\rho^2;\wv^2)\text{.}
    \end{align}
    Finally, one applies \eqref{eq.w2_tr_w0} and \eqref{eq.Bi4} to \eqref{eq.st_vert_box_w2} in order to deal with the terms involving $\wv^0$ and $\Dvm\wv^1$. This yields \eqref{eq.vert_w2}, as required.
    
    Using \eqref{eq.box} to replace the left-hand side of \eqref{eq.LHS1} with vertical objects, one obtains
    \begin{align} \label{eq.st_vert_box_w1}
        \rho^2\Box_gW_{\rho abc}=&\rho^{2}\Boxm_g(\rho^2\wv^1_{abc})+2\rho\gv^{de}\Dv_d\wv^0_{eabc}-2\rho\paren{\Dv_b\wv^2_{ac}-\Dv_c\wv^2_{ab}} \\
        &\qquad-(n+3)\wv^1_{ab}+2(\wv^1_{abc}+\wv^1_{cab}+\wv^1_{bca})\notag\\
        &\qquad+\mi{O}_{M_0-2}(\rho^2;\wv^1)+\mi{O}_{M_0-3}(\rho^3;\hat{\wv}^0)+\mi{O}_{M_0-3}(\rho^3;\wv^2) \notag\\
        &\qquad+\mi{O}_{M_0-2}(\rho^3;\Dvm\hat{\wv}^0)+\mi{O}_{M_0-2}(\rho^3;\Dvm\wv^2) \text{.}\notag
    \end{align}
    As before, \eqref{eq.prelim_comm_rho} is used to extract the $\rho^2$ factor from the $\Boxm_g$-terms. In \eqref{eq.st_vert_box_w1}, one uses the first Bianchi identity
    \begin{equation}
        \wv^1_{abc}+\wv^1_{cab}+\wv^1_{bca}=0 \text{,}
    \end{equation}
    along with \eqref{eq.Bi1} and \eqref{eq.Bi2} to deal with the terms involving $\Dvm\wv^0$ and $\Dvm\wv^2$. This yields \eqref{eq.vert_w1}, as required.
    
    Using the definition of $\hat{\wv}^0$ and the fact that $\Dvm$ is compatible with $\gv$, 
    \begin{align}
         \Boxm_g\hat{\wv}^0_{abcd}=&\Boxm_g\wv^0_{abcd}+\frac{1}{n-2}\paren{\gv_{ac}\Boxm_g\wv^2_{bd}-\gv_{ad}\Boxm_g\wv^2_{bc}+\gv_{bd}\Boxm_g\wv^2_{ac}-\gv_{bc}\Boxm_g\wv^2_{ad}}\text{.}
    \end{align}
    The equation \eqref{eq.vert_w2} is used to replace the $\Boxm_g\wv^2$ terms on the right hand side with lower-order terms:
    \begin{align}
        \Boxm_g\hat{\wv}^0_{abcd}=&\Boxm_g\wv^0_{abcd}-2\paren{\gv_{ac}\wv^2_{bd}-\gv_{ad}\wv^2_{bc}+\gv_{bd}\wv^2_{ac}-\gv_{bc}\wv^2_{ad}}\\
        &\qquad+\mi{O}_{M_0-2}(\rho^2;\hat{\wv}^0)+\mi{O}_{M_0-3}(\rho^3;\wv^1)+\mi{O}_{M_0-2}(\rho^2;\wv^2)+\mi{O}_{M_0-2}(\rho^3;\Dvm\wv^1) \text{.} \nonumber
    \end{align}
    Next, \eqref{eq.prelim_comm_rho} and \eqref{eq.box} imply that
    \begin{align}
        \Boxm_g\wv^0_{abcd}=&\rho^{-2}\Boxm_g(\rho^2 \wv^0_{abcd})-4\rho\Dvm_\rho \wv^0_{abcd}+2(n-2)\wv^0_{abcd}+\mi{O}_{M_0-2}(\rho^2;\wv^0) \\
        =&\rho^2\Box_gW_{abcd}+2\rho\paren{\Dv_a\wv^1_{bcd}-\Dv_b\wv^1_{acd}+\Dv_c\wv^1_{dab}-\Dv_d\wv^1_{cab}} \nonumber\\
        &\qquad+4\wv^0_{abcd}-2\paren{\gv_{ac}\wv^2_{bd}-\gv_{ad}\wv^2_{bc}+\gv_{bd}\wv^2_{ac}-\gv_{bc}\wv^2_{ad}} \nonumber\\
        &\qquad-4\rho\Dvm_\rho\wv^0_{abcd}+2(n-2)\wv^0_{abcd}+\mi{O}_{M_0-2}(\rho^2;\wv^0) \text{.} \nonumber
    \end{align}
    Finally, one applies \eqref{eq.LHS0} and \eqref{eq.Bi3} to exchange the $\Box_gW$ and $\Dv\wv^1$ terms respectively for terms involving $\wv^0$, $\Dvm_\rho\wv^0$ and $\gv \cdot \wv^2$. This yields \eqref{eq.vert_w0}, as required.
\end{proof}


\section{Time Foliations and the Carleman Estimate} \label{sec.tf_ce}

\begin{definition} \label{def.aads_time}
Let $( \mi{M}, g )$ be a FG-aAdS segment.
A smooth function $t: \mi{I} \rightarrow \mathbb{R}$ is called a \textbf{global time} for $( \mi{M}, g )$ iff
\begin{itemize}
\item The nonempty level sets of $t$ are Cauchy hypersurfaces of $( \mi{I}, \gm )$.

\item There exists some $C > 1$ such that
\footnote{Here $t$ is extended in a $\rho$-independent manner to $\mi{M}$.}
\begin{equation}
\label{eq.aads_time} C^{-1} \leq - \gv ( \Dv^\sharp t, \Dv^\sharp t ) \leq C
\end{equation}
(in other words, $\Dv^\sharp t$ is uniformly timelike).
\end{itemize}
For such a global time we additionally define the shorthands
\footnote{Standard results in Lorentzian geometry \cite{ONeill_1983,Wald_1984} imply $\mi{I}$ is diffeomorphic to $( t_-, t_+ ) \times \mc{S}$ for some $( n - 1 )$-dimensional manifold $\mc{S}$, where $t$ is the projection onto the $( t_-, t_+ )$ component.}
\begin{equation}
\label{eq.aads_timespan} t_+ := \sup_{ \mi{I} } t \text{,} \qquad t_- := \inf_{ \mi{I} } t \text{.}
\end{equation}
\end{definition}

Our results will specifically pertain to spacetimes with the following class of boundaries:

\begin{definition}
    Let $(\mi{M},g)$ be a FG-aAdS segment with a global time $t$. We say that $(\mi{M},g)$ has \textbf{compact cross-sections} if and only if the sets
    \begin{align}
        \mi{I} \cap \{ t=\tau \} \text{,}\qquad t_- < \tau < t_+ \text{,}
    \end{align}
    are compact.
\end{definition}

In order to state our Carleman estimate, we require a coordinate-independent notion of the `size' of a tensor field. To achieve this, we use a global time function to define a corresponding \textit{Riemannian} metric:

\begin{definition} \label{def.aads_riemann}
Let $( \mi{M}, g )$ be a FG-aAdS segment and $t$ a global time. The \textbf{vertical Riemannian metric} associated with $( \gv, t )$ is defined by
\begin{equation}
\label{eq.aads_riemann_vertical} \hv := \gv - \frac{2}{ \gv ( \Dv^\sharp t, \Dv^\sharp t ) } \, dt^2 \text{.}
\end{equation}

Furthermore, given a rank $(k,l)$ vertical tensor field $\ms{A}$, we define its $\hv$--norm $| \ms{A} |_{ \hv }^2$ in terms of coordinates on $\mi{I}$ by
\begin{align}
    | \ms{A} |_{ \hv }^2:=\Pi_{i=1}^{k} \hv_{a_i c_i} \cdot \Pi_{j=1}^{l} \hv^{b_j d_j} \cdot \ms{A}^{a_1 \cdots a_k}_{b_1 \dots b_l} \ms{A}^{c_1 \cdots c_k}_{d_1 \dots d_l} \text{.}
\end{align}
\end{definition}

We now define the weight that will feature in the Carleman estimate.

\begin{definition} \label{def.Cweight}
    Suppose $\paren{\mi{M},g}$ is a FG-aAdS segment and $t$ is a global time. With respect to some $0 \leq b < c$, let
    \begin{align} \label{eq.bigT}
        \mc{T}_{b,c}:=(c^2-b^2)^{-\frac{1}{2}}\underbrace{\tan^{-1}\brak{-b^{-1}(c^2-b^2)^{\frac{1}{2}}}}_{\in [\frac{\pi}{2} , \pi)} \text{,}
    \end{align}
    and suppose $t_0 \in \mathbb{R}$ is such that
    \begin{align}\label{eq.5.5analogue}
        t_0 \pm \mc{T}_{b,c} \in (t_- , t_+) \text{.}
    \end{align}
    We then define the domain
    \begin{align} \label{eq.Omega}
        \Omega := \Omega_{t_0,b,c}:=\{p \in \mi{M} \, | \, \abs{t(p) - t_0} < \mc{T}_{b,c}\} \text{,}
    \end{align}
    and the function $f:=f_{t_0,b,c}: \Omega \to \mathbb{R}$ by
    \begin{align} \label{eq.f}
        f:=\frac{\rho}{\eta(t-t_0)} \text{,}
    \end{align}
    where $\eta:=\eta_{b,c}: \mathbb{R} \to \mathbb{R}$ is given by
    \begin{align} \label{eq.eta}
        \eta(\tau)=e^{-b\abs{\tau}} \cdot \sin\bigg[ \underbrace{\tan^{-1}\paren{-b^{-1}(c^2-b^2)^{\frac{1}{2}}}}_{\in [\frac{\pi}{2} , \pi)} - (c^2-b^2)^{\frac{1}{2}} \cdot \abs{\tau} \bigg] \text{.}
    \end{align}
    Finally, given some $f_\ast > 0$, we define the region
    \begin{align}\label{e.Omega_ast}
        \Omega(f_\ast) := \Omega \cap \{ f < f_\ast \} \text{.}
    \end{align}
\end{definition}

The Carleman estimate of \cite[Theorem 5.11]{McGill_2020} may now be stated in a form suitable for our purposes as follows.

\begin{theorem}\label{th.cest}
    Suppose $\paren{\mi{M},g}$ is a vacuum FG-aAdS segment and $t$ is a global time. Furthermore, suppose
    \begin{itemize}
        \item $\mi{I}$ has compact cross-sections.
        \item The null convexity criterion \eqref{eq.NCC} holds on $\mi{I}$ with associated constants $0 \leq B < C$.
        \item $\Omega$, $f$ are defined with respect to constants $B < b <c <C$ for which \eqref{eq.5.5analogue} holds.
    \end{itemize}
    Fix $\sigma \in \mathbb{R}$. Then, for any integers $k,l \geq 0$, there exist constants $C_0 \geq 0$, $C_1 > 0$ (both depending on $\gv$, $t$, $B$, $C$, $b$, $c$, $k$, $l$) such that for all
    \begin{itemize}
        \item $\kappa \in \mathbb{R}$ satisfying
        \begin{align}\label{eq.kappa_bounds}
            2\kappa \geq n-1+C_0 \text{,}\qquad \kappa^2-(n-2)\kappa+\sigma - (n-1) - C_0 \geq 0 \text{,}
        \end{align}
        \item constants $f_\ast, \lambda, p$ satisfying
        \begin{align}
            0 < f_\ast \ll 1 \text{,}\qquad \lambda \gg \abs{\kappa}+\abs{\sigma}\text{,}\qquad 0 < p < \frac{1}{2},
        \end{align}
        \item rank  $(k,l)$ vertical tensor fields $\ms{u}$ on $\mi{M}$ for which $\ms{u}$ and $\nablam\ms{u}$ vanish on $\Omega \cap \{ f=f_\ast \}$,
    \end{itemize}
    the following Carleman estimate holds:
    \begin{align} \label{eq.Cests}
        \int_{\Omega(f_\ast)} &e^{-\lambda p^{-1} f^p} \cdot f^{n-2-p-2\kappa} \cdot \abs{(\Boxm+\sigma)\ms{u}}_\hv^2 dg \\
        &+C_1 \lambda^3 \limsup_{\rho' \searrow 0} \int_{\Omega(f_\ast) \cap \{ \rho = \rho' \}} \abs{\Dvm_\rho \paren{\rho^{-\kappa} \ms{u}}}_\hv^2 + \abs{\Dv_{\Dv^\sharp t} \paren{\rho^{-\kappa}\ms{u}}}_\hv^2 + \abs{\rho^{-\kappa-1} \ms{u}}_\hv^2 d\gv |_{\rho'} \notag\\
        & \qquad \geq \lambda \int_{\Omega(f_\ast)} e^{-\lambda p^{-1} f^p} \cdot f^{n-2-2\kappa} \cdot \paren{f\rho^3 \abs{\Dvm_\rho \ms{u}}_\hv^2 + f\rho^3 \abs{\Dv \ms{u}}_\hv^2 + f^{2p} \abs{\ms{u}}_\hv^2} dg \text{.} \notag
    \end{align}
\end{theorem}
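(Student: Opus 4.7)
The plan is to derive Theorem \ref{th.cest} as a direct application of \cite[Theorem 5.11]{McGill_2020}, with the main work going into verifying that the hypotheses of that theorem are satisfied in our vacuum setting and translating its conclusion into the notation used here. The original Carleman estimate was established for general vertical tensor fields on `strongly' FG-aAdS segments (in the sense of \cite[Definition 2.13]{McGill_2020}) satisfying the null convexity criterion, with a pseudoconvex foliation constructed out of $f$ and multiplier estimates producing the weight $e^{-\lambda p^{-1} f^p}$. None of that analytic machinery needs to be reproduced; the aim is only to check that we are inside its scope.

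The first step is to check the regularity hypothesis. Here I would invoke the remark following Corollary \ref{th.metric_expansion_for_vac_fgaads}: the vacuum assumption, combined with the boundary limits of Theorem \ref{th.metric_limits} and the Fefferman--Graham expansion \eqref{eq.metric_FG_expansion}, ensures that $\gv = \mi{O}_{M_0}(1)$ and $\li \gv = \mi{O}_{M_0 - 2}(\rho)$ in the sense of Definition \ref{def.prelim_O}. This is precisely the regularity required for any vacuum FG-aAdS segment to qualify as a `strongly' FG-aAdS segment in the sense of \cite{McGill_2020}, and hence to fall within the hypotheses of the Carleman estimate there.

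The second step is to match the geometric data. The null convexity criterion \eqref{eq.NCC}, the compact cross-section hypothesis, the global time function $t$, and the construction of $\Omega = \Omega_{t_0, b, c}$ and $f = f_{t_0, b, c}$ (Definition \ref{def.Cweight}) with respect to constants $B < b < c < C$ are all imported verbatim from \cite{McGill_2020}; no translation is necessary beyond confirming this. The vanishing of $\ms{u}$ and $\nablam \ms{u}$ on $\Omega \cap \{f = f_\ast\}$ corresponds to the boundary assumption on the pseudoconvex front in the original statement. The constants $C_0, C_1$ are determined by $\gv, t, B, C, b, c$, and additionally by the rank $(k, l)$ of $\ms{u}$; the rank dependence appears because the error terms generated by the commutator identities of Proposition \ref{thm.prelim_comm}, when applied to a rank $(k, l)$ field, contribute $(k, l)$-dependent lower-order perturbations to the coercivity conditions on $\kappa$ in \eqref{eq.kappa_bounds}. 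The conditions on $\lambda$ and $p$ are transcribed directly.

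The main `obstacle' is therefore not analytic but bookkeeping: one must confirm that the weights $e^{-\lambda p^{-1} f^p}$, the powers $f^{n - 2 - p - 2\kappa}$ and $f^{n - 2 - 2\kappa}$, the pointwise $\hv$--norms (Definition \ref{def.aads_riemann}), the volume form $dg$, and the boundary term at $\rho = \rho'$ all match their counterparts in \cite[Theorem 5.11]{McGill_2020} once the latter is specialised to the setup of Definition \ref{def.Cweight}. Once this identification is made, the estimate \eqref{eq.Cests} is precisely the conclusion of that theorem, and no further argument is required.
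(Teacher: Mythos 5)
Your proposal matches the paper's own proof, which likewise consists of citing \cite[Theorem 5.11]{McGill_2020} and noting that the vacuum and regularity assumptions make $(\mi{M},g)$ automatically a strongly FG-aAdS segment in the sense of that article (as recorded after Theorem \ref{th.metric_limits}), so that the estimate applies verbatim. No further argument is needed, and your verification of the regularity hypothesis and notational matching is exactly the content of the paper's proof.
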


\begin{proof}
See \cite[Theorem 5.11]{McGill_2020}. Note that our spacetime is automatically a \textit{strongly} FG-aAdS segment (in the terminology used therein) due to our regularity assumptions and the fact that \eqref{eq.eins} is satisfied.
\end{proof}


\section{Local AdS Rigidity via the Boundary Data} \label{sec.results2}

\subsection{Preliminary Results}

We begin by applying our Carleman estimate to prove unique continuation for the Weyl curvature given sufficiently fast vanishing along a sufficiently long timespan on the boundary. 

\begin{lemma} \label{th.UC_van_rates}
    Fix $n>2$ and $M_0 \geq n+2$. Suppose $\paren{\mi{M},g}$ is an $(n+1)$--dimensional, $M_0$--regular vacuum FG-aAdS segment. Furthermore, suppose $\paren{\mi{M},g}$ has a global time $t$ for which $\mi{I}$ has compact cross-sections and the null convexity criterion holds on $\mi{I}$ with associated constants $0 \leq B <C$. Fix constants $B<b<c<C$ and $t_0 \in \mathbb{R}$ such that \eqref{eq.5.5analogue} holds. Let $C_0$ be the corresponding constant featuring in Theorem \ref{th.cest} and define
    \begin{align} \label{eq.kappa_vanishing}
        \kappa_i:=\max \left\{ \frac{1}{2}\brak{n-1+C_0} \text{,}\,\, \frac{1}{2}\brak{n-2+\sqrt{n^2-4i(n-i)+4C_0}} \right\} \text{,}\qquad i=0,1,2 \text{.}
    \end{align}
    Suppose that, on $\mi{I} \cap \Omega$,
    \begin{align} 
        \hat{\wv}^0 &= \mc{O}_{1}\paren{\rho^{\kappa_0+2}}\text{,}\qquad \Dvm_\rho \hat{\wv}^0=\mc{O}_{0}\paren{\rho^{\kappa_0+1}}\text{,} \label{eq.tilde_van_cond}\\
        \wv^1 &= \mc{O}_{1}\paren{\rho^{\kappa_1+2}}\text{,}\qquad \Dvm_\rho \wv^1 = \mc{O}_{0}\paren{\rho^{\kappa_1+1}}\text{,} \label{eq.tilde_van_cond_1}\\
        \wv^2 &= \mc{O}_{1}\paren{\rho^{\kappa_2+2}}\text{,}\qquad \Dvm_\rho \wv^2 = \mc{O}_{0}\paren{\rho^{\kappa_2+1}}\text{.} \label{eq.tilde_van_cond_2}
    \end{align}
    Then the spacetime Weyl curvature identically vanishes on $\Omega \cap \{f < \frac{1}{2} f_\ast \}$. \footnote{$f_\ast$ is the corresponding constant in Theorem \ref{th.cest}.}
\end{lemma}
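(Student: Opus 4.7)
The plan is to apply the Carleman estimate (Theorem~\ref{th.cest}) separately to each of the vertical Weyl fields $\hat{\wv}^0$, $\wv^1$, $\wv^2$, taking the parameter $\sigma$ to be $0$, $n-1$, $2(n-2)$ respectively---these being the effective masses appearing in the wave system of Proposition~\ref{th.vert_wave_eqns}---and choosing the weight $\kappa = \kappa_i$ as defined in \eqref{eq.kappa_vanishing}. A direct quadratic-formula computation shows that each $\kappa_i$ is precisely the smallest real number satisfying both inequalities in \eqref{eq.kappa_bounds} for the associated $\sigma_i$, so the estimate is applicable.

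Before applying the estimate, I would multiply each field by a smooth cutoff $\chi$ supported in $\Omega(f_\ast)$ and equal to $1$ on $\Omega(\tfrac{1}{2} f_\ast)$; this furnishes the required vanishing on $\{f = f_\ast\}$. The cutoff commutator terms are supported in $\{\tfrac{1}{2} f_\ast \leq f < f_\ast\}$, where $e^{-\lambda p^{-1} f^p}$ is dominated by a constant times $e^{-\lambda p^{-1} (f_\ast/2)^p}$, so these errors decay exponentially in $\lambda$ and are harmless at the end. The boundary contribution $C_1 \lambda^3 \limsup_{\rho' \searrow 0} (\cdots)$ is eliminated outright by the hypotheses \eqref{eq.tilde_van_cond}--\eqref{eq.tilde_van_cond_2}: vanishing of $\ms{u}_i$ to order $\kappa_i + 2$ controls $\rho^{-\kappa_i-1} \ms{u}_i$, and vanishing of $\Dvm_\rho \ms{u}_i$ to order $\kappa_i + 1$ controls both $\Dvm_\rho(\rho^{-\kappa_i} \ms{u}_i)$ and $\Dv_{\Dv^\sharp t}(\rho^{-\kappa_i} \ms{u}_i)$ in the $\rho' \searrow 0$ limit.

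The heart of the argument is then the absorption step. Using Proposition~\ref{th.vert_wave_eqns}, each $|(\Boxm + \sigma_i) \ms{u}_i|_\hv^2$ appearing on the left-hand side of the Carleman estimate is pointwise dominated by a sum of $\rho^{4}$ (times squared pointwise norms of $\hat{\wv}^0$, $\wv^1$, $\wv^2$) and $\rho^{6}$ (times squared pointwise norms of $\Dvm \hat{\wv}^0$, $\Dvm \wv^1$, $\Dvm \wv^2$). Summing the three resulting Carleman estimates and comparing with their right-hand sides, namely $\lambda f^{n-2-2\kappa_i}(f \rho^3 |\Dvm_\rho \ms{u}_i|_\hv^2 + f \rho^3 |\Dv \ms{u}_i|_\hv^2 + f^{2p} |\ms{u}_i|_\hv^2)$, the extra factors of $\rho^{2}$ (for undifferentiated cross-terms) and $\rho^{3}$ (for the $\Dvm$-cross-terms) from the wave-equation remainders are converted to factors of $f$ via $\rho = f \cdot \eta(t - t_0)$; these furnish the margin required to dominate every coupling term once $\lambda$ is chosen sufficiently large, even in the presence of the discrepancies between the three weights $f^{-2\kappa_i}$.

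Letting $\lambda \to \infty$ then forces $\hat{\wv}^0$, $\wv^1$, $\wv^2$ to vanish identically on $\Omega(\tfrac{1}{2} f_\ast)$. Since $\wv^0$ is recovered from $\hat{\wv}^0$ and $\wv^2$ via \eqref{eq.w0_traceless}, and since $\wv^0$, $\wv^1$, $\wv^2$ between them exhaust the independent vertical components of $W$, the spacetime Weyl tensor vanishes on that region. The main obstacle I anticipate is the bookkeeping around the three distinct weights $\kappa_i$: one must verify, for each cross-term arising from Proposition~\ref{th.vert_wave_eqns}, that the differential $\rho$-gain strictly exceeds the weight mismatch $|\kappa_i - \kappa_j|$ in the sense that the combined exponent of $f$ on the left exceeds that on the right by a positive amount absorbable by $\lambda$; this should be a routine but careful pointwise comparison of exponents.
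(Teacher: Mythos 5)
Your proposal is correct and follows essentially the same route as the paper: cut off near $\{f=f_\ast\}$, apply the Carleman estimate of Theorem \ref{th.cest} to each of $\hat{\wv}^0$, $\wv^1$, $\wv^2$ with $\sigma_i=i(n-i)$ and $\kappa=\kappa_i$ (the minimal admissible choice from \eqref{eq.kappa_bounds}), sum the three estimates, absorb the interior wave-equation error terms (including the cross-terms with their $\rho^2$/$\rho^3$ gains) by taking $\lambda$ large, discard the boundary terms via \eqref{eq.tilde_van_cond}--\eqref{eq.tilde_van_cond_2}, and let $\lambda\to\infty$ to force $\hat{\wv}^0,\wv^1,\wv^2\equiv 0$ on $\Omega\cap\{f<\tfrac{1}{2}f_\ast\}$. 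The weight-mismatch bookkeeping you flag is exactly the point the paper also treats tersely ("$L_2$ can be absorbed by taking $\lambda$ sufficiently large"), so your account matches the paper's proof in both structure and level of detail.
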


\begin{remark}\label{rem.kappa}
    The following bounds hold for each of the above-defined $\kappa_i$:
    \begin{align}
        \kappa_0 \geq n-1 \text{,}\qquad \kappa_1 \geq n-2 \text{,}\qquad \kappa_2 \geq n-3 \text{.}
    \end{align}
\end{remark}

\begin{proof}
    Take some smooth cutoff function
    \begin{align}
        \bar{\chi}:[0,f_\ast]\to[0,1] \text{,}\qquad
        \bar{\chi}(s)=
        \begin{cases}
            1\text{,}\qquad 0\leq s\leq\frac{1}{2}f_\ast \text{,}\\
            0\text{,}\qquad \frac{3}{4}f_\ast \leq s \text{.}
        \end{cases} \nonumber
    \end{align}
    Let $\chi:=\bar{\chi}\circ f$ and, for convenience, define
    \begin{align}
        \ms{w}^0 := \hat{\wv}^0 \text{,}\qquad
        \ms{w}^1 := \wv^1 \text{,}\qquad
        \ms{w}^2 := \wv^2 \text{.}
    \end{align}
    Let $\sigma_i := i(n-i)$ for $i=0,1,2$. Then
    \begin{align}
        ( \Boxm_g + \sigma_i ) ( \chi \cdot \ms{w}^i ) =& \chi ( \Boxm_g + \sigma_i ) \ms{w}^i + \chi'' \, \Dvm^\alpha f \Dvm_\alpha f \, \ms{w}^i \\
        &\qquad +\chi'\paren{2 \Dvm_\alpha f \Dvm^\alpha \ms{w}^i + \Boxm_g f \cdot \ms{w}^i} \text{,} \nonumber
    \end{align}
    where $\chi'$ denotes the derivative of $\chi$ with respect to $f$. Using that $\chi',\chi''$ are supported in $[ \frac{1}{2} f_\ast , \frac{3}{4} f_\ast ]$ and applying the relations derived in \cite[Proposition 5.7]{McGill_2020}, one finds
    \begin{align}
        \sum_{i=0}^2 \abs{(\Boxm_g+\sigma_i)(\chi\cdot\ms{w}^i)}\lesssim
        \begin{cases}
            \sum_{i=0}^2\left(\abs{(\Boxm_g+\sigma_i)\ms{w}^i}_\hv^2+\rho^2f^2\abs{\Dvm_\rho \ms{w}^i}_\hv^2 \right. \\
            \qquad\qquad\qquad\left.+\rho^2f^4\abs{\Dv\ms{w}^i}_\hv^2+f^2\abs{\ms{w}^i}_\hv^2\right)\text{,}\qquad \frac{1}{2}f_\ast \leq f \leq \frac{3}{4}f_\ast \\
            \sum_{i=0}^2\abs{(\Boxm_g+\sigma_i)\ms{w}^i}_\hv^2\text{,}\qquad\qquad\qquad\qquad\qquad\quad 0 \leq f \leq \frac{1}{2}f_\ast \text{.}
        \end{cases}
    \end{align}
    \eqref{eq.vert_w0}, \eqref{eq.vert_w1} and \eqref{eq.vert_w2} in addition to the fact that $f \simeq 1$ in $[ \frac{1}{2} f_\ast , \frac{3}{4} f_\ast ]$ implies that
    \begin{align}
        \sum_{i=0}^2 \abs{(\Boxm_g+\sigma_i)(\chi\cdot\ms{w}^i)}\lesssim
        \begin{cases}
            \sum_{i=0}^2\paren{\rho^2\abs{\Dvm_\rho \ms{w}^i}_\hv^2+\rho^2\abs{\Dv\ms{w}^i}_\hv^2+\abs{\ms{w}^i}_\hv^2}\text{,}\qquad \frac{1}{2}f_\ast \leq f \leq \frac{3}{4}f_\ast \\
            \sum_{i=0}^2\paren{\rho^6\abs{\Dv \ms{w}^i}_\hv^2+\rho^4\abs{\ms{w}^i}_\hv^2}\text{,}\qquad\qquad\qquad\qquad 0 \leq f \leq \frac{1}{2}f_\ast \text{.}
        \end{cases}
    \end{align}
    We define regions
    \begin{align}
        \Omega_i:=&\Omega(f_\ast) \cap \{ f < \frac{1}{2} f_\ast \} \\
        \Omega_e:=&\Omega(f_\ast) \cap \{ \frac{1}{2} f_\ast < f < \frac{3}{4} f_\ast \}
    \end{align}
    and sum the Carleman estimates \eqref{eq.Cests}, as applied to $\bar{\ms{w}^i}:=\chi\cdot\ms{w}^i$. The left hand side $L$ of the sum of these Carleman estimates can be estimated by
    \begin{align}
        L \lesssim& \sum_{i=0}^2 \int_{\Omega_e} e^{-\lambda p^{-1} f^p}f^{n-2-2\kappa_i-p}\brak{\rho^2\abs{\Dvm_\rho \ms{w}^i}_\hv^2+\rho^2\abs{\Dv\ms{w}^i}_\hv^2+\abs{\ms{w}^i}_\hv^2} dg \\
        &\qquad+\sum_{i=0}^2 \int_{\Omega_i} e^{-\lambda p^{-1} f^p}f^{n-2-2\kappa_i-p}\brak{\rho^6\abs{\Dv \ms{w}^i}_\hv^2+\rho^4\abs{\ms{w}^i}_\hv^2} dg \nonumber\\
        &\qquad+\mc{C}\lambda^3\sum_{i=0}^2 \limsup_{ \rho_\ast \searrow 0 } \int_{ \Omega_{ t_0 } ( f_\ast ) \cap \{ \rho = \rho_\ast \} } [ | \Dvm_\rho ( \rho^{ - \kappa_i } \bar{\ms{w}}^i ) |_\hv^2 + | \Dv ( \rho^{ - \kappa_i } \bar{\ms{w}}^i ) |_\hv^2 + | \rho^{ - \kappa_i - 1 } \bar{\ms{w}}^i |_\hv^2 ] \, d \gv |_{ \rho_\ast } \nonumber\\
        :=& L_1 + L_2 + L_3 \text{.} \nonumber
    \end{align}
    $L$ is bounded below by
    \begin{align}
        L \gtrsim \lambda \sum_{i=0}^2 \int_{ \Omega_i } e^{ - \lambda p^{-1} f^p } f^{ n - 2 - 2 \kappa_i } ( f \rho^3 | \Dvm_\rho \ms{w}^i |_\hv^2 + f \rho^3 | \Dv \ms{w}^i |_\hv^2 + f^{ 2 p } | \ms{w}^i |_\hv^2 ) \, dg
    \end{align}
    $L_2$ can be absorbed by taking $\la$ sufficiently large. By \eqref{eq.kappa_bounds}, the vanishing assumptions \eqref{eq.tilde_van_cond} and the fact that $|\partial_\rho \chi|+|\partial_a \chi| \lesssim \rho^{-1}$, we also have that $L_3 \to 0$ as $\rho_\ast \searrow 0$.

    The $e^{-\lambda p^{-1} f^p}f^{n-2-2\kappa_i}$ factors can be bounded above in $\Omega_e$ and bounded below in $\Omega_i$:
    \begin{align}
        e^{-\lambda p^{-1}f^p}f^{n-2-2\kappa_i}
        \begin{cases}
            \leq e^{-\lambda p^{-1}(\frac{f_\ast}{2})^p} \paren{\frac{f_\ast}{2}}^{n-2-2\kappa_i} \text{,}\qquad \text{in $\Omega_e$} \\
            \geq e^{-\lambda p^{-1}(\frac{f_\ast}{2})^p} \paren{\frac{f_\ast}{2}}^{n-2-2\kappa_i} \text{,}\qquad \text{in $\Omega_i$.}
        \end{cases}
    \end{align}
    Hence, for large $\lambda$,
    \begin{align} \label{eq.uceqn}
        \sum_{i=0}^2 \int_{\Omega_e} \abs{\ms{w}^i}_\hv^2 + \rho^2 \abs{\Dvm_\rho \ms{w}^i}_\hv^2 + \rho^2\abs{\Dv \ms{w}^i}_\hv^2 dg \gtrsim \lambda \sum_{i=0}^2 \int_{\Omega_i} f^{2p} \abs{\ms{w}^i}_\hv^2 dg
    \end{align}
    The left hand side of \eqref{eq.uceqn} is bounded above by
    \begin{align}
        \lesssim \sum_{i=0}^2 \int_{\Omega_e} \abs{\rho^{-\kappa_i-1}\ms{w}^i}_\hv^2 + \abs{\Dvm_\rho(\rho^{-\kappa_i} \ms{w}^i)}_\hv^2 + \abs{\Dv(\rho^{-\kappa_i} \ms{w}^i)}_\hv^2 dg \text{,}
    \end{align}
    and so is finite by the vanishing assumptions \eqref{eq.tilde_van_cond}. As a result, taking $\lambda \to \infty$ in \eqref{eq.uceqn} yields $\ms{w}^0, \ms{w}^1, \ms{w}^2 \equiv 0$ (i.e. $\hat{\wv}^0, \wv^1, \wv^2 \equiv 0$) on $\Omega \cap \{f < \frac{1}{2} f_\ast \}$. In other words, the full spacetime Weyl curvature $W \equiv 0$ on $\Omega \cap \{f < \frac{1}{2} f_\ast \}$ as required.
\end{proof}

\begin{lemma} \label{th.conf_flat_implies_vanishing_star}
    Fix $n>2$ and $M_0 \geq n+2$. Suppose $\paren{\mi{M},g}$ is an $(n+1)$--dimensional, $M_0$--regular vacuum FG-aAdS segment. If $\gm$ is conformally flat then
    \begin{align}
        \wv^0 &= \rho^{n-2} \cdot \mf{W}_{0}^{(n-2)} + \rho^{n-2} \cdot \ms{r}_{0}\text{,}\qquad \ms{r}_0 \to^{M_0-n} 0 \text{,}\label{eq.improved_w0_exp}\\
        \wv^1 &= \rho^{n-1} \cdot \mf{W}_{1}^{(n-1)} + \rho^{n-1} \cdot \ms{r}_{1}\text{,}\qquad \ms{r}_1 \to^{M_0-n-1} 0 \text{,}\label{eq.improved_w1_exp}\\
        \wv^2 &= \rho^{n-2} \cdot \mf{W}_{2}^{(n-2)} + \rho^{n-2} \cdot \ms{r}_{2}\text{,}\qquad \ms{r}_2 \to^{M_0-n} 0 \text{.} \label{eq.improved_w2_exp}
    \end{align}
\end{lemma}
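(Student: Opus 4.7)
The plan is to proceed by induction on the order of the Fefferman--Graham polynomial expansions for $\wv^0$, $\wv^1$, and $\wv^2$ given in Corollary \ref{th.fg_exp_w}, using the vertical Bianchi equations \eqref{eq.renorm_bianchi_3}, \eqref{eq.ex_bi_1_a}, \eqref{eq.renorm_bianchi_2}, \eqref{eq.ex_bi_2} to propagate vanishing from one order to the next. The base case is supplied by Proposition \ref{th.weyl_leading_order}: since $\gm$ is conformally flat, the $\gm$--Weyl tensor $\mf{W}$ vanishes (vacuously when $n = 3$) and the $\gm$--Cotton tensor $\mf{C}$ vanishes, so $\mf{W}^{(0)}_0 = 0$, $\mf{W}^{(1)}_1 = 0$, and $\mf{W}^{(0)}_2 = 0$ (the last unconditionally).

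For the inductive step, suppose every $\mf{W}^{(m)}_i$ with $m < j$ (in the appropriate parity) has already been shown to vanish. Substituting the resulting partial expansions into each Bianchi equation and reading off the coefficient at order $\rho^{j}$ produces an algebraic identity of the form
\begin{equation}
    c_{i,j} \cdot \mf{W}^{(j)}_i = (\text{sums of lower-order coefficients, already zero by induction}) \text{,} \notag
\end{equation}
where $c_{i,j}$ is a nonzero rational number for all orders $j$ strictly below the respective top order. More specifically, from \eqref{eq.ex_bi_2} one reads off $c_{2,j} = j - (n-2)$, which is nonzero for $j < n-2$; from \eqref{eq.renorm_bianchi_3} one obtains a factor of $j$ multiplying $\mf{W}^{(j)}_0$, combined with a contribution from $\mf{W}^{(j)}_2$ (already killed at the same step) and from $\mf{W}^{(j-1)}_1$ (already zero by induction hypothesis); from \eqref{eq.renorm_bianchi_2} or \eqref{eq.ex_bi_1_a} one similarly extracts $\mf{W}^{(j)}_1$. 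Ordering the induction so that $\wv^2$--coefficients are determined first, then $\wv^0$--coefficients, then $\wv^1$--coefficients at each level, every non-top coefficient is forced to zero.

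The log terms $\mf{W}^{(\star)}_i$, which appear only for even $n$ at the critical order $\rho^{n-2}$ (or $\rho^{n-1}$ for $\wv^1$), are handled separately but by the same mechanism. Substituting the full partial expansion---including the $\rho^{n-2}\log \rho$ term---into the Bianchi equation at its critical order produces both a $\log\rho$ piece (through $\li$ acting on $\rho^{n-2}\log\rho$) and a polynomial piece at order $\rho^{n-2}$. Since all strictly-lower-order coefficients already vanish by the previous step, matching the $\log\rho$ coefficient separately forces $\mf{W}^{(\star)}_i = 0$. The surviving non-log coefficient $\mf{W}^{(n-2)}_i$ (resp. $\mf{W}^{(n-1)}_1$) is not determined by this procedure---consistent with the fact that it encodes undetermined boundary data---and is carried through the remainder $\ms{r}_i$ as stated.

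The main obstacle is purely combinatorial bookkeeping: one must verify that the Bianchi relations at each order admit a consistent ordering in which the coefficient $c_{i,j}$ multiplying the unknown is nonzero, and separately handle the parity mismatch that arises for odd $n$ (where $\rho^{n-2}$ has opposite parity to the lower-order polynomial part). The log-term extraction is the most delicate calculation, since $\li$ applied to $\rho^{n-2}\log\rho$ produces the mixed expression $\rho^{n-3}[(n-2)\log\rho + 1]$, and care is required to isolate the $\log\rho$ contribution from the non-log contribution at the same order; however, once the induction of Step 2 is complete, all terms at this order that could compete with the log piece are already zero, so the argument closes cleanly.
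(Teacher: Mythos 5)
Your proposal follows essentially the same route as the paper: base case from conformal flatness via Proposition \ref{th.weyl_leading_order}, then an order-by-order iteration through the vertical Bianchi equations in exactly the ordering the paper uses (kill the $\wv^2$ coefficient first via \eqref{eq.ex_bi_2}, then $\wv^0$ via \eqref{eq.renorm_bianchi_3}, then $\wv^1$ via \eqref{eq.renorm_bianchi_2}), with the nonzero factors $2k-(n-2)$ and $2k$ driving the vanishing, log coefficients extracted by matching the $\log\rho$ part at the critical order, and the top-order coefficients left undetermined. This matches the paper's proof in structure and in the key ordering observation, so it is correct.
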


\begin{proof}
    Recall the leading-order expressions for $\wv^0$, $\wv^1$ and $\wv^2$ in Proposition \ref{th.weyl_leading_order}. If $n=3$ then $\mf{W}$ identically vanishes and conformal flatness of $\gm$ implies that $\mf{C}$ vanishes. Corollary \ref{th.fg_exp_w} then yields \eqref{eq.improved_w0_exp}--\eqref{eq.improved_w2_exp} as required.
    
    If $n > 3$ then conformal flatness of $\gm$ implies that $\mf{W}$ vanishes, which in turn implies that $\mf{C}$ vanishes. Hence
    \begin{align} \label{eq.in_step}
        \mf{W}_{0}^{(0)} = \mf{W}_{1}^{(1)} = \mf{W}_{2}^{(0)} = 0 \text{.}
    \end{align}
    
    If $n=4$ then this implies
    \begin{align}
        \wv^0 &= \rho^{2} \log \rho \cdot \mf{W}_{0}^{(\star)} + \rho^{2} \cdot \mf{W}_{0}^{(2)} + \rho^{2} \cdot \ms{r}_{0}\text{,} \label{eq.almost_0_4}\\
        \wv^1 &= \rho^{3} \log \rho \cdot \mf{W}_{1}^{(\star)} + \rho^{3} \cdot \mf{W}_{1}^{(3)} + \rho^{3} \cdot \ms{r}_{1}\text{,} \label{eq.almost_1_4}\\
        \wv^2 &= \rho^{2} \log \rho \cdot \mf{W}_{2}^{(\star)} + \rho^{2} \cdot \mf{W}_{2}^{(2)} + \rho^{2} \cdot \ms{r}_{2}\text{.} \label{eq.almost_2_4}
    \end{align}
    Given \eqref{eq.almost_2_4}, the left-hand side of the vertical Bianchi equation \eqref{eq.ex_bi_2} reads
    \begin{align}
        \rho^2 \li \paren{ \rho^{-2} \cdot \wv^2 } = \rho \cdot \mf{W}_2^{(\star)} + o(\rho) \text{,}
    \end{align}
    where $o(\rho)$ denotes a vertical tensor field $\ms{t}$ for which $\rho^{-1} \cdot \ms{t} \rightarrow^0 0$. The factors of $\gv^{-1}$ and $\li \gv$ present in the right-hand side of \eqref{eq.ex_bi_2} are $\mi{O}(1)$ and $\mi{O}(\rho)$ respectively; given \eqref{eq.almost_0_4} and \eqref{eq.almost_1_4}, the right-hand side thus only contains $\mi{O}(\rho^3 \log\rho)$ terms. Collecting strictly order $\rho$ terms in \eqref{eq.ex_bi_2}, one finds
    \begin{align}
        \mf{W}_{2}^{(\star)} = 0 \text{,}
    \end{align}
    which yields \eqref{eq.improved_w2_exp} as required.
    
    Given \eqref{eq.almost_0_4}, the left-hand side of \eqref{eq.renorm_bianchi_3} reads
    \begin{align}
        \rho \li \wv^0 = \rho^2 ( 1 + 2 \log \rho ) \cdot \mf{W}_0^{(\star)} + 2 \rho^2 \cdot \mf{W}_0^{(2)} + o(\rho^2) \text{.} \nonumber
    \end{align}
    As above, the factors of $\gv^{-1}$ and $\li \gv$ present in the right-hand side of \eqref{eq.renorm_bianchi_3} are $\mi{O}(1)$ and $\mi{O}(\rho)$ respectively; given \eqref{eq.almost_0_4} and \eqref{eq.improved_w2_exp}, the right-hand side thus only contains $\mi{O}(\rho^2)$ terms. Collecting strictly order $\rho^2 \log \rho$ terms in \eqref{eq.renorm_bianchi_3}, one finds
    \begin{align}
        \mf{W}_{0}^{(\star)} = 0 \text{,}
    \end{align}
    which yields \eqref{eq.improved_w0_exp} as required. Note that is was essential for us to derive \eqref{eq.improved_w2_exp} \textit{before} completing this step; otherwise, the right-hand side of \eqref{eq.renorm_bianchi_3} would still contain strictly order $\rho^2 \log \rho$ terms.
    
    Finally, given \eqref{eq.almost_1_4}, the left-hand side of \eqref{eq.renorm_bianchi_2} reads
    \begin{align}
        \rho \li \paren{ \rho^{-1} \cdot \wv^1 } = \rho^2 ( 1 + 2 \log \rho ) \cdot \mf{W}_1^{(\star)} + 2 \rho^2 \cdot \mf{W}_1^{(3)} + o(\rho^2) \text{.} \notag
    \end{align}
    Once more, the factors of $\gv^{-1}$ and $\li \gv$ present in the right-hand side of \eqref{eq.renorm_bianchi_2} are $\mi{O}(1)$ and $\mi{O}(\rho)$ respectively. Given \eqref{eq.almost_1_4} and \eqref{eq.improved_w2_exp}, the right-hand side thus only contains $\mi{O}(\rho^2)$ terms; collecting strictly order $\rho^2 \log \rho$ terms in \eqref{eq.renorm_bianchi_2}, one finds
    \begin{align}
        \mf{W}_{1}^{(\star)} = 0 \text{,}
    \end{align}
    which yields \eqref{eq.improved_w1_exp} as required. Again, it was crucial to derive \eqref{eq.improved_w2_exp} before completing this step to remove all order $\rho^2 \log \rho$ terms from the right-hand side of \eqref{eq.renorm_bianchi_3}. This completes the proof of the $n=4$ case.
    
    For $n>4$, \eqref{eq.in_step} yields
    \begin{align}
        \wv^0&=
        \begin{cases}
            \sum_{k=1}^{\frac{n-3}{2}} \rho^{2k} \cdot \mf{W}_0^{(2k)}+\rho^{n-2}\cdot\mf{W}_0^{(n-2)}+\rho^{n-2}\cdot\ms{r}_0 \text{,} &\text{$n$ odd,} \\
            \sum_{k=1}^{\frac{n-4}{2}} \rho^{2k} \cdot \mf{W}_0^{(2k)}+\rho^{n-2}\log\rho \cdot \mf{W}_0^{(\star)}+\rho^{n-2}\cdot\mf{W}_0^{(n-2)}+\rho^{n-2}\cdot\ms{r}_0 \text{,} &\text{$n$ even,}
        \end{cases} \label{eq.almost_0_n}\\
        \wv^1&=
        \begin{cases}
            \sum_{k=1}^{\frac{n-3}{2}} \rho^{2k+1} \cdot \mf{W}_1^{(2k+1)}+\rho^{n-1}\cdot\mf{W}_1^{(n-1)}+\rho^{n-1}\cdot\ms{r}_1 \text{,} &\text{$n$ odd,} \\
            \sum_{k=1}^{\frac{n-4}{2}} \rho^{2k+1} \cdot \mf{W}_1^{(2k+1)}+\rho^{n-1}\log\rho \cdot \mf{W}_1^{(\star)}+\rho^{n-1}\cdot\mf{W}_1^{(n-1)}+\rho^{n-1}\cdot\ms{r}_1 \text{,} &\text{$n$ even,}
        \end{cases} \label{eq.almost_1_n}\\
        \wv^2&=
        \begin{cases}
            \sum_{k=1}^{\frac{n-3}{2}} \rho^{2k} \cdot \mf{W}_2^{(2k)}+\rho^{n-2}\cdot\mf{W}_2^{(n-2)}+\rho^{n-2}\cdot\ms{r}_2 \text{,} &\text{$n$ odd,} \\
            \sum_{k=1}^{\frac{n-4}{2}} \rho^{2k} \cdot \mf{W}_2^{(2k)}+\rho^{n-2}\log\rho \cdot \mf{W}_2^{(\star)}+\rho^{n-2}\cdot\mf{W}_2^{(n-2)}+\rho^{n-2}\cdot\ms{r}_2 \text{,} &\text{$n$ even.}
        \end{cases} \label{eq.almost_2_n}
    \end{align}
    Given \eqref{eq.almost_2_n}, the left-hand side of the vertical Bianchi equation \eqref{eq.ex_bi_2} reads
    \begin{align}
        \begin{cases}
            \sum_{k=1}^{\frac{n-3}{2}} (2k+2-n) \rho^{2k-1} \cdot \mf{W}_2^{(2k)}+ o(\rho^{n-2}) \text{,} &\text{$n$ odd,} \\
            \sum_{k=1}^{\frac{n-4}{2}} (2k+2-n) \rho^{2k-1} \cdot \mf{W}_2^{(2k)} + \rho^{n-3} \cdot \mf{W}_2^{(\star)} + o(\rho^{n-2}) \text{,} &\text{$n$ even.}
        \end{cases}
    \end{align}
    Given \eqref{eq.almost_0_n} and \eqref{eq.almost_1_n}, the right-hand side thus only contains $\mi{O}(\rho^3)$ terms. Collecting strictly order $\rho$ terms in \eqref{eq.ex_bi_2}, one finds
    \begin{align}
        \mf{W}_{2}^{(2)} = 0 \text{.}
    \end{align}
    In other words, for $n=5$ we have \eqref{eq.improved_w2_exp} as required and, for $n>5$,
    \begin{align}
        \wv^2=
        \begin{cases}
            \rho^4 \log \rho \cdot \mf{W}_2^{(\star)} + \rho^4 \cdot \mf{W}_2^{(4)} + \rho^4 \cdot \ms{r}_2 \text{,} &\text{$n=6$,}\\
            \sum_{k=2}^{\frac{n-3}{2}} \rho^{2k} \cdot \mf{W}_2^{(2k)}+\rho^{n-2}\cdot\mf{W}_2^{(n-2)}+\rho^{n-2}\cdot\ms{r}_2 \text{,} &\text{$n>6$ odd,} \\
            \sum_{k=2}^{\frac{n-4}{2}} \rho^{2k} \cdot \mf{W}_2^{(2k)}+\rho^{n-2}\log\rho \cdot \mf{W}_2^{(\star)}+\rho^{n-2}\cdot\mf{W}_2^{(n-2)}+\rho^{n-2}\cdot\ms{r}_2 \text{,} &\text{$n>6$ even.}
        \end{cases} \label{eq.almost_2_n_a}
    \end{align}
    Given \eqref{eq.almost_0_n}, the left-hand side of \eqref{eq.renorm_bianchi_3} reads
    \begin{align}
        \begin{cases}
            \sum_{k=1}^{\frac{n-3}{2}} (2k) \rho^{2k} \cdot \mf{W}_0^{(2k)} + (n-2) \rho^{n-2} \cdot \mf{W}_0^{(n-2)} + o(\rho^{n-2}) \text{,} \,\, &\text{$n$ odd,} \\
            \sum_{k=1}^{\frac{n-4}{2}} (2k) \rho^{2k} \cdot \mf{W}_0^{(2k)} + \rho^{n-2} \brak{ 1+(n-2)\log\rho } \cdot \mf{W}_0^{(\star)} \\
            \qquad + (n-2) \rho^{n-2} \cdot \mf{W}_0^{(n-2)} + o(\rho^{n-2}) \text{,} &\text{$n$ even.}
        \end{cases}
    \end{align}
    Given \eqref{eq.almost_0_n}, \eqref{eq.almost_1_n} and \eqref{eq.almost_2_n_a} (or \eqref{eq.improved_w2_exp} if $n=5$), the right-hand side thus only contains
    \begin{align}
        \begin{cases}
            \mi{O}(\rho^3) \,\, \text{terms if $n=5$,} \\
            \mi{O}(\rho^4 \log \rho) \,\, \text{terms if $n=6$,} \\
            \mi{O}(\rho^4) \,\, \text{terms if $n>6$.} \\
        \end{cases}
    \end{align}
    Collecting strictly order $\rho^2$ terms in \eqref{eq.renorm_bianchi_3}, one hence finds
    \begin{align}
        \mf{W}_{0}^{(2)} = 0 \text{.}
    \end{align}
    In other words, for $n=5$, we have \eqref{eq.improved_w0_exp} as required and, for $n>5$,
    \begin{align}
        \wv^0=
        \begin{cases}
            \rho^4 \log \rho \cdot \mf{W}_0^{(\star)} + \rho^4 \cdot \mf{W}_0^{(4)} + \rho^4 \cdot \ms{r}_0 \text{,} &\text{$n=6$,}\\
            \sum_{k=2}^{\frac{n-3}{2}} \rho^{2k} \cdot \mf{W}_0^{(2k)}+\rho^{n-2}\cdot\mf{W}_0^{(n-2)}+\rho^{n-2}\cdot\ms{r}_0 \text{,} &\text{$n>6$ odd,} \\
            \sum_{k=2}^{\frac{n-4}{2}} \rho^{2k} \cdot \mf{W}_0^{(2k)}+\rho^{n-2}\log\rho \cdot \mf{W}_0^{(\star)}+\rho^{n-2}\cdot\mf{W}_0^{(n-2)}+\rho^{n-2}\cdot\ms{r}_0 \text{,} &\text{$n>6$ even.}
        \end{cases} \label{eq.almost_0_n_a}
    \end{align}
    Given \eqref{eq.almost_1_n}, the left-hand side of \eqref{eq.renorm_bianchi_2} reads
    \begin{align}
        \begin{cases}
            \sum_{k=1}^{\frac{n-3}{2}} (2k) \rho^{2k} \cdot \mf{W}_1^{(2k+1)} + (n-2) \rho^{n-2} \cdot \mf{W}_1^{(n-1)} + o(\rho^{n-2}) \text{,} \,\, &\text{$n$ odd,} \\
            \sum_{k=1}^{\frac{n-4}{2}} (2k) \rho^{2k} \cdot \mf{W}_1^{(2k+1)} + \rho^{n-2} \brak{ 1+(n-2)\log\rho } \cdot \mf{W}_1^{(\star)} \\
            \qquad + (n-2) \rho^{n-2} \cdot \mf{W}_1^{(n-1)} + o(\rho^{n-2}) \text{,} &\text{$n$ even.}
        \end{cases}
    \end{align}
    Given \eqref{eq.almost_1_n} and \eqref{eq.almost_2_n_a} (or \eqref{eq.improved_w2_exp} if $n=5$), the right-hand side thus only contains
    \begin{align}
        \begin{cases}
            \mi{O}(\rho^3) \,\, \text{terms if $n=5$,} \\
            \mi{O}(\rho^4 \log \rho) \,\, \text{terms if $n=6$,} \\
            \mi{O}(\rho^4) \,\, \text{terms if $n>6$.} \\
        \end{cases}
    \end{align}
    Collecting strictly order $\rho^2$ terms in \eqref{eq.renorm_bianchi_2}, one hence finds
    \begin{align}
        \mf{W}_{1}^{(3)} = 0 \text{.}
    \end{align}
    In other words, for $n=5$, we have \eqref{eq.improved_w1_exp} as required and, for $n>5$,
    \begin{align}
        \wv^1=
        \begin{cases}
            \rho^5 \log \rho \cdot \mf{W}_1^{(\star)} + \rho^5 \cdot \mf{W}_1^{(5)} + \rho^5 \cdot \ms{r}_1 \text{,} &\text{$n=6$,} \\
            \sum_{k=2}^{\frac{n-3}{2}} \rho^{2k+1} \cdot \mf{W}_1^{(2k+1)}+\rho^{n-1}\cdot\mf{W}_1^{(n-1)}+\rho^{n-1}\cdot\ms{r}_1 \text{,} &\text{$n>6$ odd,} \\
            \sum_{k=2}^{\frac{n-4}{2}} \rho^{2k+1} \cdot \mf{W}_1^{(2k+1)}+\rho^{n-1}\log\rho \cdot \mf{W}_1^{(\star)}+\rho^{n-1}\cdot\mf{W}_1^{(n-1)}+\rho^{n-1}\cdot\ms{r}_1 \text{,} &\text{$n>6$ even.}
        \end{cases} \label{eq.almost_1_n_a}
    \end{align}
    This completes the proof for the $n=5$ case. For $n>5$, one may now iterate this process to derive vanishing of successively higher-order coefficients; at the $k\textsuperscript{th}$ iteration, substitute the updated expansions of $\wv^{0}$, $\wv^{1}$ and $\wv^{2}$ into \eqref{eq.ex_bi_2}, then \eqref{eq.renorm_bianchi_3} and \eqref{eq.renorm_bianchi_2}, collecting order $\rho^{2k+1}$, $\rho^{2k+2}$ and $\rho^{2k+2}$ terms respectively (or order $\rho^{2k+1}$, $\rho^{2k+2}\log\rho$ and $\rho^{2k+1}\log\rho$ if dealing with $\log$ coefficients).
    
    Note that this process cannot be continued to derive vanishing of $\mf{W}_{2}^{(n-2)}$ (and hence $\mf{W}_{0}^{(n-2)}$, $\mf{W}_{1}^{(n-1)}$ too), since this coefficient is always eliminated upon substitution of \eqref{eq.improved_w2_exp} into \eqref{eq.ex_bi_2}.
\end{proof}

\begin{corollary} \label{th.conf_flat_met_exp}
    Fix $n>2$ and $M_0 \geq n+2$. Suppose $\paren{\mi{M},g}$ is an $(n+1)$--dimensional, $M_0$--regular vacuum FG-aAdS segment. If $\gm$ is conformally flat then
    \begin{align} \label{eq.conf_flat_metric_exp}
        \gv=
            \begin{cases}
                \gm - \rho^2 \, \mf{p} + \rho^n \, \gm^{(n)} + \rho^n \, \ms{r} \text{,}\qquad &n \leq 4 \text{,}\\
                \gm - \rho^2 \, \mf{p} + \frac{1}{4} \rho^4 \, \gm^{-1} \, \mf{p} \, \mf{p} + \rho^n \, \gm^{(n)} + \rho^n \, \ms{r} \text{,}\qquad &n>4 \text{,}
            \end{cases}
    \end{align}
    where the remainder $\ms{r}$ satisfies
    \begin{align}
        \ms{r} \to^{M_0-n} \,\, 0 \text{.}
    \end{align}
    Furthermore, there exist tensor fields $\gm_{(k)}$ on $\mi{I}$ such that
    \begin{align}
        \li^k\gv^{-1} \rightarrow^{M_0-k} \, k! \, \gm_{(k)} \text{,}\qquad 0 \leq k < n \text{,}
    \end{align}
    where $\gm_{(k)}$ vanishes for $k$ odd and, for $0 < 2l < n$,
    \begin{align} 
        \mf{g}_{(2l)}^{ab} &=\frac{l+1}{2^l} \cdot \paren{ \gm^{a c_1} \gm^{c_2 c_3} \dots \gm^{c_{2l} b} } \cdot \paren{ \mf{p}_{c_1 c_2} \mf{p}_{c_3 c_4} \dots \mf{p}_{c_{2l-1} c_{2l}} } \label{eq.inv_metric_comps} \text{,} \\
        &= \frac{l+1}{2l} \, \gm^{ac} \, \mf{p}_{cd} \, \gm^{db}_{(2l-2)} \text{.} \label{eq.inv_met_rec_rel}
    \end{align}
\end{corollary}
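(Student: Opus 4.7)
The strategy is to combine the improved vanishing of $\wv^2$ from Lemma \ref{th.conf_flat_implies_vanishing_star} with the relation \eqref{eq.w2.rel} to pin down the metric coefficients $\gm^{(2l)}$, and then to solve a short recursion (obtained from $\gv\gv^{-1}=\mr{id}$) in closed form for the inverse metric expansion.

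For the first claim, I would substitute the Fefferman--Graham expansion of Corollary \ref{th.metric_expansion_for_vac_fgaads} into \eqref{eq.w2.rel} and collect terms order by order in $\rho$. A direct computation shows that the coefficient of $\rho^{2l-2}$ in $\wv^2$ has the schematic form $-2l(l-1)\gm^{(2l)}_{ab}+\mi{Q}_l$, where $\mi{Q}_l$ collects the quadratic-in-$\gm^{(2k)}$ contributions arising from the $\frac{1}{4}\gv^{-1}\li\gv\li\gv$ term (summed over triples $(j,k,m)$ with $j+k+m=l$, $k,m\geq 1$); an analogous computation governs the coefficient of $\rho^{n-2}\log\rho$ for even $n$. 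By Lemma \ref{th.conf_flat_implies_vanishing_star}, every such coefficient below order $\rho^{n-2}$ must vanish, together with the logarithmic coefficient. The case $l=2$ reproduces \eqref{eq.w2_3} and yields $\gm^{(4)}_{ab}=\tfrac{1}{4}\gm^{cd}\mf{p}_{ac}\mf{p}_{bd}$. Inducting on $l$ with the hypothesis $\gm^{(2l')}=0$ for $3\leq l'<l$, only terms with $k,m\in\{1,2\}$ survive in $\mi{Q}_l$, and the symmetry of $\mf{p}$ reduces every such contraction to a scalar multiple of the $(0,2)$--tensor $(\mf{p}\gm^{-1})^{l-1}\mf{p}$; verifying that the resulting scalar coefficients cancel (as happens for $l=3$ via \eqref{eq.w2_4}) forces $\gm^{(2l)}=0$. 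The even--$n$ analysis of the $\rho^{n-2}\log\rho$ coefficient yields $\gm^{(\star)}=0$ similarly.

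For the second claim, existence of the $\gm_{(k)}$ and their vanishing for odd $k$ follow by repeatedly differentiating $\gv^{ab}\gv_{bc}=\de^a_c$ in $\rho$, taking boundary limits and invoking Theorem \ref{th.metric_limits}. The same identity produces the recursion
\begin{equation*}
    \sum_{j=0}^{l}\gm_{(2j)}^{ab}\,\gm^{(2l-2j)}_{bc}=0\text{,}\qquad l\geq 1\text{.}
\end{equation*}
Given the form of the $\gm^{(2m)}$ established above, for $l\geq 2$ the only nonvanishing contributions with $j<l$ correspond to $j\in\{l-2,l-1\}$, so the recursion collapses to
\begin{equation*}
    \gm_{(2l)}^{ab}\gm_{bc}=\gm_{(2l-2)}^{ab}\mf{p}_{bc}-\tfrac{1}{4}\gm_{(2l-4)}^{ab}\gm^{de}\mf{p}_{bd}\mf{p}_{ec}\text{.}
\end{equation*}
An induction on $l$ with the hypothesis $\gm_{(2k)}=\tfrac{k+1}{2^k}(\gm^{-1}\mf{p})^k\gm^{-1}$ then reduces the two right-hand contributions to scalar multiples $\tfrac{l}{2^{l-1}}$ and $\tfrac{l-1}{2^l}$ of $(\gm^{-1}\mf{p})^l$ respectively, summing to $\tfrac{l+1}{2^l}$; this directly yields \eqref{eq.inv_met_rec_rel}, and the explicit product form \eqref{eq.inv_metric_comps} follows by iteration (with the boundary case $l=1$ requiring a separate, trivial check).

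The main obstacle is the algebraic verification that $\mi{Q}_l$ vanishes for each $l\geq 3$ once $\gm^{(2)}=-\mf{p}$, $\gm^{(4)}=\tfrac{1}{4}\gm^{-1}\mf{p}\mf{p}$ and the inductive form of $\gm_{(2j)}$ are substituted into the cross-terms. Although every such contribution shares the same tensorial structure $(\mf{p}\gm^{-1})^{l-1}\mf{p}$, cleanly collecting the finite but nontrivial number of scalar coefficients coming from the admissible triples $(j,k,m)$ with $k,m\in\{1,2\}$ requires careful index bookkeeping; this is where the combinatorics of the argument is concentrated. Once this identity is established, the remaining steps are a straightforward induction.
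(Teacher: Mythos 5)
Your proposal is correct and follows essentially the same route as the paper: it uses the improved vanishing of $\wv^2$ from Lemma \ref{th.conf_flat_implies_vanishing_star} in the relation \eqref{eq.w2.rel}, collected order by order in $\rho$ (including the $\rho^{n-2}\log\rho$ coefficient for $\gm^{(\star)}=0$), intertwined with the inverse-metric recursion obtained from $\gv^{ab}\gv_{bc}=\delta^a_c$ --- your linear coefficient $-2l(l-1)\gm^{(2l)}$, your collapsed recursion, and the cancellation of $\mi{Q}_l$ that you flag as the main obstacle are exactly the paper's \eqref{eq.w2_3}, \eqref{eq.w2_4}, \eqref{eq.inverse_metric_expansion_relations} and the cancellation it dispatches via the recurrence \eqref{eq.inv_met_rec_rel}. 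The only caveats are presentational: the two inductions must be run simultaneously (the $\gm_{(2j)}$ entering $\mi{Q}_l$ are only available once the metric coefficients up to that order are determined, as you implicitly acknowledge), and ``summing to $\tfrac{l+1}{2^l}$'' is correct only with the minus sign of your displayed recursion carried into the second scalar, i.e. $\tfrac{2l}{2^l}-\tfrac{l-1}{2^l}=\tfrac{l+1}{2^l}$.
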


\begin{proof}
    Theorem \ref{th.metric_expansion_for_vac_fgaads} gave the near--boundary expansion of $\gv$ and established, furthermore, that $\gm^{(2)}=-\mf{p}$. To prove \eqref{eq.conf_flat_metric_exp}, we must therefore show that
    \begin{enumerate}
        \item When $n>4$, $\gm^{(4)}=\frac{1}{4} \, \gm^{-1} \, \mf{p} \, \mf{p}$.
        \item When $n>6$, $\gm^{(k)}=0$ for $6 \leq k < n$.
        \item When $n \geq 4$, $\gm^{(\star)}=0$.
    \end{enumerate}
    We address each of these statements in turn:
    
    (1) This follows from \eqref{eq.w2_3} and \eqref{eq.improved_w2_exp} (i.e. $\mf{W}_2^{(2)}=0$).
    
    (2) We proceed via induction. \eqref{eq.w2_4} combined with \eqref{eq.improved_w2_exp} (i.e. $\mf{W}_2^{(4)}=0$) provides the base case. Fix $6 \leq 2K < n-2$ and assume that $\gm^{(j)}=0$ for all $6 \leq j \leq 2K$, i.e.
    \begin{align} \label{eq.ind_g_exp}
        \gv &= \gm - \rho^2 \cdot \mf{p} + \frac{1}{4} \rho^4 \cdot \gm^{-1} \mf{p} \mf{p} \\ &\qquad +
        \begin{cases}
            \sum_{k=K+1}^{\frac{n-1}{2}} \rho^{2k} \, \gm^{(2k)} + \rho^n \, \gm^{(n)} + \rho^n \, \ms{r} \text{,} \qquad &\text{$n$ odd}\text{,} \\
            \sum_{k=K+1}^{\frac{n-2}{2}} \rho^{2k} \, \gm^{(2k)} + \rho^n\log\rho \, \gm^{(\star)} + \rho^n \, \gm^{(n)} + \rho^n \, \ms{r} \text{,} \qquad &\text{$n$ even}\text{.}
        \end{cases} \notag
    \end{align}
    Substituting this into the relation
    \begin{align} \label{eq.inv_met_rel}
        \gv_{ab} \cdot \gv^{bc}=\delta_a^c \text{,}
    \end{align}
    and applying the inductive assumption to match coefficients order-by-order, one finds that
    \begin{align}\label{eq.inverse_metric_expansion_relations}
        \begin{cases}
            \mf{g}_{(0)}^{ab}=\mf{g}^{ab} \\
            \mf{g}_{(2)}^{ab}=\mf{g}^{ac} \mf{g}^{bd} \mf{p}_{cd} \\
            \mf{g}_{ab}\cdot \mf{g}_{(2l)}^{bc} - \mf{p}_{ab} \cdot \mf{g}_{(2l-2)}^{bc} + \frac{1}{4}\gm^{de} \cdot \mf{p}_{ad} \cdot \mf{p}_{be} \cdot \mf{g}_{(2l-4)}^{bc}=0 \text{,}\qquad 4 \leq 2l \leq 2K \text{,}
        \end{cases}
    \end{align}
    where $\mf{g}^{ab}:=\paren{\mf{g}^{-1}}^{ab}$. One may straightforwardly verify that this is satisfied by the ansatz \eqref{eq.inv_metric_comps}, which in turn satisfies the recurrence relation \eqref{eq.inv_met_rec_rel}.
    
    Substituting \eqref{eq.improved_w2_exp} into the left and \eqref{eq.ind_g_exp} into the right-hand side of \eqref{eq.w2.rel} and collecting strictly order $\rho^{2K}$ terms, one finds
    \begin{align}
        0 &= -2K(K+1) \gm^{(2K+2)}_{ab} + \gm_{(2K-2)}^{cd} \mf{p}_{ac} \mf{p}_{bd} \\
        &\qquad - \frac{1}{2} \gm_{(2K-4)}^{cd} \paren{ \mf{p}_{ac} \gm^{ef} \mf{p}_{be} \mf{p}_{df} + \gm^{ef} \mf{p}_{ae} \mf{p}_{cf} \mf{p}_{bd} } \notag\\
        &\qquad + \frac{1}{4} \gm_{(2K-6)}^{cd} \gm^{ef} \mf{p}_{ae} \mf{p}_{cf} \gm^{gh} \mf{p}_{bg} \mf{p}_{dh} \text{.}\notag
    \end{align}
    Applying the relations \eqref{eq.inv_met_rec_rel}, one finds that all but the first of the above terms cancel. Hence
    \begin{align}
        \gm^{(2K+2)} = 0 \text{,}
    \end{align}
    which closes the inductive argument.
    
    (3) Substituting \eqref{eq.improved_w2_exp} and the properties derived above in parts (1) and (2) into the left-hand side of \eqref{eq.w2.rel}, \eqref{eq.metric_FG_expansion} into the right-hand side of \eqref{eq.w2.rel} and collecting strictly order $\rho^{n-2} \log \rho$ terms, one finds
    \begin{align}
        \gm^{(\star)} =0 \text{,}
    \end{align}
    as required.
\end{proof}

\begin{proposition} \label{th.van_iff_bc}
    Fix $n>2$ and $M_0 \geq n+2$. Suppose $(\mi{M},g)$ is an $(n+1)$--dimensional, $M_0$--regular vacuum FG-aAdS segment. Then there exist vertical tensor fields $\ms{r}_0$, $\ms{r}_1$ and $\ms{r}_2$ such that
    \begin{align}
        \wv^0 &=\rho^{n-2} \cdot \ms{r}_0 \text{,}\qquad \ms{r}_0 \to^{M_0-n} 0 \text{,} \label{eq.w0_final_rate}\\
        \wv^1 &=\rho^{n-1} \cdot \ms{r}_1 \text{,}\qquad \ms{r}_1 \to^{M_0-n-1} 0 \text{,} \label{eq.w1_final_rate}\\
        \wv^2 &=\rho^{n-2} \cdot \ms{r}_2 \text{,}\qquad \ms{r}_2 \to^{M_0-n} 0 \text{.} \label{eq.w2_final_rate}
    \end{align}
    if and only if $\paren{\mi{M},g}$ has boundary data $\paren{\gm^{(0)} , \gm^{(n)}}$ for which
    \begin{align}
        &\text{$\gm^{(0)}$ is conformally flat, and} \label{eq.g0_condition}\\
        &\gm^{(n)}=
        \begin{cases}
            \frac{1}{4} \, \gm^{-1} \cdot \mf{p} \cdot \mf{p} \text{,}\qquad &n=4\text{,}\\
            0 \text{,}\qquad &n \neq 4\text{,}
        \end{cases} \label{eq.gn_condition}
    \end{align}
    where $\mf{p}$ is the $\gm^{(0)}=\gm$--Schouten tensor.
\end{proposition}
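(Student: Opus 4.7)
The plan is to treat the two implications separately, in both cases extracting information from the Fefferman--Graham expansions of Corollary \ref{th.fg_exp_w}, the leading--order identifications of Proposition \ref{th.weyl_leading_order}, the improved expansions provided by Lemma \ref{th.conf_flat_implies_vanishing_star} and Corollary \ref{th.conf_flat_met_exp}, and the vertical Bianchi equations of Proposition \ref{eq.vert_Bianchi_eqns}.

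For the forward direction, assume the rates \eqref{eq.w0_final_rate}--\eqref{eq.w2_final_rate}. Matching them against the expansions \eqref{eq.w0_exp}--\eqref{eq.w2_exp} forces every coefficient through the leading order of each $\wv^i$ to vanish, in particular $\mf{W}_0^{(0)} = 0$, $\mf{W}_1^{(1)} = 0$ and $\mf{W}_2^{(n-2)} = 0$ (together with the log coefficients $\mf{W}_i^{(\star)} = 0$ in the even--$n$ cases, since a nonzero $\mf{W}_2^{(\star)}$ would dominate the prescribed $\rho^{n-2}\cdot o(1)$ decay). Invoking Proposition \ref{th.weyl_leading_order} identifies $\mf{W}_0^{(0)}$ with the $\gm$--Weyl tensor and $\mf{W}_1^{(1)}$ with a multiple of the $\gm$--Cotton tensor, so $\mf{W} = \mf{C} = 0$; this yields conformal flatness of $\gm$ for every $n > 2$ (the Weyl route for $n \geq 4$, the Cotton route for $n = 3$). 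Corollary \ref{th.conf_flat_met_exp} then pins down every Fefferman--Graham coefficient of $\gv$ below order $\rho^n$. Substituting this conformally flat expansion of $\gv$ into \eqref{eq.w2.rel} and extracting the coefficient of $\rho^{n-2}$, the pure AdS piece $\gm - \rho^2\mf{p} + \tfrac{1}{4}\rho^4\gm^{-1}\mf{p}\mf{p}$ contributes zero (pure AdS has vanishing Weyl curvature), while the perturbation $\rho^n\gm^{(n)}$ contributes $-\tfrac{n(n-2)}{2}\gm^{(n)}$ from the linear combination $-\tfrac{1}{2}\li^2\gv + \tfrac{1}{2}\rho^{-1}\li\gv$, with the quadratic term of \eqref{eq.w2.rel} contributing only at order $\rho^n$ or higher by a parity--in--$\rho$ count (using that $\gv$ has only even powers of $\rho$ below order $\rho^n$ under conformal flatness). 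Combined with $\mf{W}_2^{(n-2)} = 0$, this forces $\gm^{(n)} = 0$ for $n > 4$; the cases $n = 3, 4$ are handled directly by \eqref{eq.w2_1}--\eqref{eq.w2_2} in Proposition \ref{th.weyl_leading_order}, noting that $\gm^{(\star)} = 0$ already follows from conformal flatness via Corollary \ref{th.conf_flat_met_exp}.

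For the backward direction, assume the conditions \eqref{eq.g0_condition}--\eqref{eq.gn_condition}. Lemma \ref{th.conf_flat_implies_vanishing_star} immediately produces expansions $\wv^0 = \rho^{n-2}\mf{W}_0^{(n-2)} + \rho^{n-2}\ms{r}_0$ and analogously for $\wv^1, \wv^2$, so the remaining task is to show $\mf{W}_0^{(n-2)} = \mf{W}_1^{(n-1)} = \mf{W}_2^{(n-2)} = 0$. The coefficient computation described above, read forwards, gives $\mf{W}_2^{(n-2)} = 0$. To transfer this vanishing to $\wv^0$ and $\wv^1$, match coefficients of $\rho^{n-2}$ on either side of the vertical Bianchi equations \eqref{eq.renorm_bianchi_2} and \eqref{eq.renorm_bianchi_3}: the left--hand sides produce $(n-2)\mf{W}_1^{(n-1)}$ and $(n-2)\mf{W}_0^{(n-2)}$ respectively, while the right--hand sides now decay strictly faster than $\rho^{n-2}$ (thanks to the improved rate $\wv^2 = \rho^{n-2}\ms{r}_2$ with $\ms{r}_2 \to 0$ and the corresponding decay of $\Dv\wv^2$), forcing both leading coefficients to zero. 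The main obstacle is the dimension--dependent computation of $\mf{W}_2^{(n-2)}$ in terms of $\gm^{(n)}$ when $n > 4$, for which Proposition \ref{th.weyl_leading_order} records no explicit formula; the argument is manageable because the conformally flat Fefferman--Graham expansion of $\gv$ agrees with pure AdS (where $\wv^2 \equiv 0$) through order $\rho^{n-1}$, so only the linear effect of the perturbation $\rho^n\gm^{(n)}$ on \eqref{eq.w2.rel} needs to be tracked, and this is readily isolated by a power count in $\rho$.
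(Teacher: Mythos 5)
Your overall architecture coincides with the paper's: conformal flatness is read off from the vanishing of the leading coefficients $\mf{W}_0^{(0)}=\mf{W}$ (for $n>3$) and $\mf{W}_1^{(1)}=\tfrac{1}{n-2}\mf{C}$ (for $n=3$); the condition on $\gm^{(n)}$ is tied to $\mf{W}_2^{(n-2)}$ through \eqref{eq.w2.rel}; and in the converse direction the vanishing of $\mf{W}_2^{(n-2)}$ is propagated to $\mf{W}_0^{(n-2)}$ and $\mf{W}_1^{(n-1)}$ by collecting the order-$\rho^{n-2}$ terms in the vertical Bianchi equations \eqref{eq.renorm_bianchi_3} and \eqref{eq.renorm_bianchi_2}, exactly as in the paper. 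That part of your proposal is sound and matches the paper's proof step for step.

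The one place where you genuinely diverge is the derivation of the key identity ${\mf{W}_2}^{(n-2)}=-\tfrac{1}{2}n(n-2)\,\gm^{(n)}$ (with the quadratic correction when $n=4$), i.e.\ the paper's \eqref{eq.w.(n-2)_g.n}, for $n>4$. The paper obtains it by substituting the conformally flat expansion of Corollary \ref{th.conf_flat_met_exp} into \eqref{eq.w2.rel} and using the inverse-metric recurrence relations \eqref{eq.inv_met_rec_rel} to show that all background terms at order $\rho^{n-2}$ cancel. Your shortcut replaces this with ``the pure AdS piece contributes zero'', and as a fallback a ``parity-in-$\rho$ count''. The parity count does not close the argument when $n$ is even, $n\geq 6$: evaluated on $\gv_{\mathrm{AdS}}=\gm-\rho^2\mf{p}+\tfrac14\rho^4\gm^{-1}\mf{p}\mf{p}$, each of the three terms in \eqref{eq.w2.rel} individually has nonvanishing coefficients at \emph{every} even order (the expansion of $\gv_{\mathrm{AdS}}^{-1}$ contains all even powers), so the background does contribute at order $\rho^{n-2}$ term by term; only the \emph{sum} cancels, and establishing that cancellation is precisely what \eqref{eq.inv_met_rec_rel} is for. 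Your primary justification therefore has to be the exact-solution fact: that for an arbitrary conformally flat $\gm$ the truncated metric $\rho^{-2}(d\rho^2+\gv_{\mathrm{AdS}})$ is an exact vacuum solution, locally isometric to AdS, so that \eqref{eq.w2.rel} applies to it with identically vanishing left-hand side. That fact is true and essentially contained in \cite{Skenderis_2000}, but it is not proven in this paper and is stronger than the phrase ``pure AdS has vanishing Weyl curvature'' suggests; if you wish to use it you must cite or prove it, otherwise you should fall back on the paper's explicit computation. Granting that fact, your linearisation correctly isolates the contribution $-\tfrac12 n(n-1)\gm^{(n)}+\tfrac12 n\,\gm^{(n)}=-\tfrac12 n(n-2)\gm^{(n)}$ at order $\rho^{n-2}$, with the perturbation's cross-terms in the quadratic piece entering only at order $\rho^n$, and the remainder of your argument goes through.
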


\begin{proof}
    We begin by proving the reverse direction of this statement. Given \eqref{eq.g0_condition}, we obtain a simplified expansion of $\gv$ via Corollary \ref{th.conf_flat_met_exp}. Substituting this along with \eqref{eq.improved_w2_exp} into \eqref{eq.w2.rel} and collecting order $\rho^{n-2}$ terms, we find
    \newlength{\temporary}
    \settowidth{\temporary}{$\qquad\qquad\qquad\qquad\qquad+ \frac{1}{4}\gm^{ef} \gm^{gh} \mf{p}_{ae} \mf{p}_{cf} \mf{p}_{bg} \mf{p}_{dh} \gm_{(n-8)}^{cd} \text{,}\,\,\,\,$}
    \begin{align}
        {\mf{W}_2}^{(n-2)}_{ab}&=-\frac{1}{2}n(n-2) \gm^{(n)}_{ab} \nonumber\\
        &\qquad +
        \begin{cases}
            \makebox[\temporary][l]{$0 \text{,}$} \text{$n$ odd,}\\
            \makebox[\temporary][l]{$\mf{p}_{ac} \mf{p}_{bd} \gm^{cd} \text{,}$} \text{$n=4$,}\\
            \makebox[\temporary][l]{$\mf{p}_{ac} \mf{p}_{bd} \gm_{(2)}^{cd}-\gm^{ef}\mf{p}_{ac}\mf{p}_{be}\mf{p}_{df} \gm^{cd} \text{,}$} \text{$n=6$,}\\
            \mf{p}_{ac} \mf{p}_{bd} \gm_{(n-4)}^{cd}-\frac{1}{2}\gm^{ef} \paren{ \mf{p}_{ac}\mf{p}_{be}\mf{p}_{df} + \mf{p}_{ae} \mf{p}_{cf} \mf{p}_{bd} } \gm_{(n-6)}^{cd}\\ \qquad\qquad\qquad\qquad\qquad+ \frac{1}{4}\gm^{ef} \gm^{gh} \mf{p}_{ae} \mf{p}_{cf} \mf{p}_{bg} \mf{p}_{dh} \gm_{(n-8)}^{cd} \text{,}\,\,\,\, n \geq 8 \text{.}
        \end{cases}
    \end{align}
    The relations \eqref{eq.inv_met_rec_rel} result in each of the $n \geq 6$ terms in the brace vanishing identically. The above hence reduces to
    \begin{align} \label{eq.w.(n-2)_g.n}
        {\mf{W}_2}^{(n-2)}_{ab}=
        \begin{cases}
            -\frac{1}{2}n(n-2) \gm^{(n)}_{ab} \text{,}\qquad &n \neq 4 \text{,}\\
            -4 \gm^{(4)}_{ab}+  \gm^{cd} \, \mf{p}_{ac} \, \mf{p}_{bd} \text{,}\qquad &n = 4 \text{,}
        \end{cases}
    \end{align}
    from which we see that \eqref{eq.gn_condition} implies
    \begin{align}
        \mf{W}_{2}^{(n-2)} = 0 \text{.}
    \end{align}
    We have thus used this condition to overcome the barrier to the iterative process identified in the proof of Lemma \ref{th.conf_flat_implies_vanishing_star}; substituting the updated expansions \eqref{eq.w0_exp}, \eqref{eq.w1_exp} and \eqref{eq.w2_exp} into the vertical Bianchi equation \eqref{eq.renorm_bianchi_3} and collecting strictly order $\rho^{n-2}$ terms, one finds
    \begin{align}
        \mf{W}_{0}^{(n-2)} = 0 \text{.}
    \end{align}
    Finally, substituting the updated expansions \eqref{eq.w0_exp}, \eqref{eq.w1_exp} and \eqref{eq.w2_exp} into the vertical Bianchi equation \eqref{eq.renorm_bianchi_2} and collecting strictly order $\rho^{n-2}$ terms, one finds
    \begin{align}
        \mf{W}_{1}^{(n-1)} = 0 \text{.}
    \end{align}
    This yields \eqref{eq.w0_final_rate}, \eqref{eq.w1_final_rate} and \eqref{eq.w2_final_rate} as required.

    For the forward direction, recall the leading-order expressions for $\wv^0$ and $\wv^1$ found Proposition \ref{th.weyl_leading_order}. If $n=3$ then the vanishing of ${\mf{W}_1} ^{(1)} = \mf{C}$ guaranteed by \eqref{eq.w1_final_rate} implies that $\gm=\gm^{(0)}$ is conformally flat. If $n > 3$ then the vanishing of ${\mf{W}_0} ^{(0)} = \mf{W}$ guaranteed by \eqref{eq.w0_final_rate} implies that $\gm=\gm^{(0)}$ is conformally flat.
    
    We hence have \eqref{eq.g0_condition} and, as above, can apply Corollary \ref{th.conf_flat_met_exp} to substitute the simplified expansion of $\gv$ into \eqref{eq.w2.rel} to obtain \eqref{eq.w.(n-2)_g.n}. The vanishing of $\mf{W}_2^{(n-2)}$ guaranteed by \eqref{eq.w2_final_rate} thus implies the condition \eqref{eq.gn_condition} as required.
\end{proof}


\subsection{The Main Result} \label{sec.main_result}

\begin{theorem} \label{th.primary}
    Fix $n>2$. \footnote{If $n=2$, the Weyl curvature identically vanishes. Hence, in this case, every vacuum FG-aAdS segment is locally isometric to pure AdS and the result in Theorem \ref{th.primary} is trivial.} Suppose $(\mi{M},g)$ is an $(n+1)$--dimensional vacuum FG-aAdS segment with a global time $t$ for which:
    \begin{itemize}
        \item $\mi{I}$ has compact cross-sections.
        \item The null convexity criterion holds on $\mi{I}$ with associated constants $0 \leq B <C$.
    \end{itemize}
    Fix constants $B<b<c<C$ and $t_0 \in \mathbb{R}$ such that \eqref{eq.5.5analogue} holds. There exist constants $M_0 \geq n+2$ and $f_\ast>0$ such that if $\paren{\mi{M},g}$ is $M_0$--regular then it is locally isometric to pure AdS on $\Omega(f_\ast)$ if and only if $\paren{\mi{M},g}$ has boundary data $\paren{\gm^{(0)} , \gm^{(n)}}$ for which the following hold on $\mi{I} \cap \Omega$:
    \begin{align}
        &\text{$\gm^{(0)}$ is conformally flat, and} \label{eq.g0_condition_final}\\
        &\gm^{(n)}=
        \begin{cases}
            \frac{1}{4} \, \gm^{-1} \cdot \mf{p} \cdot \mf{p} \text{,}\qquad &n=4\text{,}\\
            0 \text{,}\qquad &n \neq 4\text{,}
        \end{cases} \label{eq.gn_condition_final}
    \end{align}
    where $\mf{p}$ is the $\gm$--Schouten tensor.
\end{theorem}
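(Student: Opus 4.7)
The statement is an ``if and only if''. The forward direction ($\Rightarrow$) is essentially immediate: if $(\mi{M},g)$ is locally isometric to pure AdS on $\Omega(f_\ast)$, its spacetime Weyl curvature vanishes there, so each of $\wv^0$, $\wv^1$, $\wv^2$ identically vanishes in a boundary neighbourhood within $\Omega$, in particular satisfying the vanishing rates featuring in Proposition \ref{th.van_iff_bc}. The boundary conditions \eqref{eq.g0_condition_final}--\eqref{eq.gn_condition_final} then follow directly from that proposition.

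For the converse ($\Leftarrow$), my plan is three-fold. First, translate the boundary conditions into the base vanishing rates $\wv^0 = o(\rho^{n-2})$, $\wv^1 = o(\rho^{n-1})$, $\wv^2 = o(\rho^{n-2})$ via Proposition \ref{th.van_iff_bc}. Second, bootstrap these rates up to the hypotheses of Lemma \ref{th.UC_van_rates}, then apply it to obtain $W \equiv 0$ on $\Omega \cap \{f < \tfrac{1}{2} f_\ast\}$. Third, invoke the standard fact (recalled in the introduction) that a vacuum solution of \eqref{eq.eins} with vanishing Weyl curvature has constant sectional curvature and hence is locally isometric to pure AdS. The $f_\ast$ featuring in Theorem \ref{th.primary} is inherited from the $f_\ast$ produced by Theorem \ref{th.cest} via Lemma \ref{th.UC_van_rates}.

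The key technical step is the bootstrap in step two. The base rates from Proposition \ref{th.van_iff_bc} fall well short of the hypotheses of Lemma \ref{th.UC_van_rates}, which demand $\wv^i = O(\rho^{\kappa_i + 2})$ with $\kappa_0 \geq n-1$, $\kappa_1 \geq n-2$, $\kappa_2 \geq n-3$ (Remark \ref{rem.kappa}), together with corresponding bounds on $\Dvm_\rho \wv^i$. I would close the gap by iterating the vertical Bianchi system \eqref{eq.renorm_bianchi_3}, \eqref{eq.renorm_bianchi_2}, \eqref{eq.ex_bi_2} in exactly the same coefficient-matching style used in the proofs of Lemma \ref{th.conf_flat_implies_vanishing_star} and Proposition \ref{th.van_iff_bc}: at each step, expand $\wv^0$, $\wv^1$, $\wv^2$ to one higher order in $\rho$, substitute into the Bianchi equations, and read off vanishing of the next coefficient. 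The derivative conditions on $\Dvm_\rho \wv^i$ should be obtainable as a by-product from the resulting Fefferman--Graham-type expansions.

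The three components are coupled through the Bianchi system, so the main obstacle is performing each round of the iteration in the correct cyclic order ($\wv^2$, then $\wv^0$, then $\wv^1$, via \eqref{eq.ex_bi_2}, \eqref{eq.renorm_bianchi_3}, \eqref{eq.renorm_bianchi_2} respectively) so that residual obstruction terms already vanish when needed --- exactly as was required to propagate the initial conformally flat input through the proof of Lemma \ref{th.conf_flat_implies_vanishing_star}. Each iteration also consumes one degree of vertical regularity via the $M_0 - k$ exponents governing the boundary limits, so $M_0$ must be taken sufficiently large (depending on $n$ and the Carleman constant $C_0$ of Theorem \ref{th.cest}) to accommodate all iterations needed to reach order $\kappa_i + 2$. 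This dependence determines the $M_0$ appearing in the statement of Theorem \ref{th.primary}.
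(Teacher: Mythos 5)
Your overall architecture matches the paper's: forward direction via Proposition \ref{th.van_iff_bc}, reverse direction via the base rates of Proposition \ref{th.van_iff_bc}, a bootstrap up to the hypotheses of Lemma \ref{th.UC_van_rates}, unique continuation, and then the constant-curvature rigidity fact; your identification of the cyclic order $\wv^2 \to \wv^0 \to \wv^1$ and of the regularity cost determining $M_0$ is also correct. However, the mechanism you propose for the key bootstrap step --- ``expand $\wv^0,\wv^1,\wv^2$ to one higher order in $\rho$, substitute into the Bianchi equations, and read off vanishing of the next coefficient'' --- is not the paper's argument and, as stated, does not run. The coefficient-matching game is only available up to the orders appearing in Corollary \ref{th.fg_exp_w}: at finite regularity the expansions of $\wv^0,\wv^2$ (resp.\ $\wv^1$) terminate at order $n-2$ (resp.\ $n-1$) with remainders that are merely $o(1)$ after rescaling, because the metric expansion \eqref{eq.metric_FG_expansion} itself stops at order $n$. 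There are no higher-order coefficients defined to match, and producing them would require proving higher-order analogues of Theorem \ref{th.metric_limits} under the special boundary data --- machinery the paper neither has nor needs. Moreover, even at the last available order the matching is structurally obstructed: as noted at the end of the proof of Lemma \ref{th.conf_flat_implies_vanishing_star}, the coefficient $\mf{W}_2^{(n-2)}$ drops out of \eqref{eq.ex_bi_2} (the factor $(2k+2-n)$ vanishes at $2k=n-2$), which is precisely why the data condition \eqref{eq.gn_condition_final} is needed in Proposition \ref{th.van_iff_bc} and why the expansion-based argument cannot be pushed further.

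What the paper does instead is \emph{integrate} the vertical Bianchi equations from the boundary rather than match coefficients. Given the base rates, \eqref{eq.ex_bi_2} yields $\li\paren{\rho^{-(n-2)}\wv^2} = o_{M_0-n-2}(\rho)$; since $\rho^{-(n-2)}\wv^2$ has vanishing boundary limit, integrating in $\rho$ gives $\wv^2 = o_{M_0-n-2}(\rho^{n})$, and analogous integrations of \eqref{eq.renorm_bianchi_3} and \eqref{eq.renorm_bianchi_2} upgrade $\wv^0$ and $\wv^1$. Each round costs a fixed number of degrees of vertical regularity and gains two orders of vanishing, with no need for any expansion beyond order $n$; iterating $i = \left\lceil \tfrac{1}{2}\paren{\kappa_{\mathrm{max}}-n+4} \right\rceil$ times with $M_0 = \left\lceil \kappa_{\mathrm{max}} \right\rceil + 6$ reaches the rates \eqref{eq.tilde_van_cond}--\eqref{eq.tilde_van_cond_2}, including the $\Dvm_\rho$ bounds (which come from the Bianchi relations themselves, not as a ``by-product'' of an expansion). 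To repair your proposal you should replace the coefficient-matching bootstrap with this integration argument; otherwise the reverse direction is not established.
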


\begin{proof}
    Solutions of \eqref{eq.eins} are locally isometric to AdS if and only if \footnote{See, for example, \cite{Carroll_2004}.}
    \begin{align}
        R_{\al\be\ga\de} = - g_{\al\ga}g_{\be\de} + g_{\al\de}g_{\be\ga} \text{,}
    \end{align}
    or, equivalently,
    \begin{align}
        W_{\al\be\ga\de} = 0 \text{.}
    \end{align}
    Vanishing of the spacetime Weyl curvature implies vanishing of $\wv^0$, $\wv^1$ and $\wv^2$ to arbitrarily high order at $\mi{I}$. By Proposition \ref{th.van_iff_bc}, the conditions \eqref{eq.g0_condition_final} and \eqref{eq.gn_condition_final} follow.
    
    For the other direction, let us write $o_{K}(\rho^L)$ to denote a vertical tensor field $\ms{t}$ for which
    \begin{align}
        \rho^{-L} \cdot \ms{t} \rightarrow^K \, 0 \text{.}
    \end{align}
    In terms of this notation, \eqref{eq.g0_condition_final} and \eqref{eq.gn_condition_final} imply
    \begin{align} \label{eq.int_rates}
        \wv^0=o_{M_0-n}(\rho^{n-2}) \text{,}\qquad \wv^1=o_{M_0-n-1}(\rho^{n-1}) \text{,}\qquad \wv^2=o_{M_0-n}(\rho^{n-2}) \text{,} 
    \end{align}
    by Proposition \ref{th.van_iff_bc}.
    
    Substituting \eqref{eq.int_rates} into \eqref{eq.ex_bi_2}, one has
    \begin{align} \label{eq.w2_int}
        \li \paren{ \rho^{-(n-2)} \wv^2 } &= \rho^{-(n-2)} \cdot o_{M_0-n-2}(\rho^{n-1}) + \rho^{-(n-2)} \cdot o_{M_0-n}(\rho^{n-1}) \\
        &= o_{M_0-n-2}(\rho) \text{,} \notag
    \end{align}
    where a degree of vertical regularity has been lost due to the vertical derivative present in the right-hand side of \eqref{eq.ex_bi_2}. Given \eqref{eq.int_rates}, $\rho^{-(n-2)} \wv^2$ has vanishing boundary limit so we may integrate \eqref{eq.w2_int} from the boundary to deduce
    \begin{align} \label{eq.imp_w2_int}
        \rho^{-(n-2)} \cdot \wv^2 = o_{M_0-n-2} ( \rho^2 ) \implies \wv^2 = o_{M_0-n-2} ( \rho^n ) \text{.}
    \end{align}
    Substituting \eqref{eq.int_rates} and \eqref{eq.imp_w2_int} into \eqref{eq.renorm_bianchi_3}, one has
    \begin{align}
        \li \wv^0 &= o_{M_0-n-2} (\rho^{n-1}) + \rho^{-1} \cdot o_{M_0-n-2} (\rho^n) + \rho^{-1} \cdot o_{M_0-n} (\rho^n) \\
        &= o_{M_0-n-2} (\rho^{n-1}) \text{.} \notag
    \end{align}
    Integrating this from the boundary, we deduce \footnote{Note that if we hadn't first derived \eqref{eq.imp_w2_int}, then one would only have $\li \wv^0 = o_{M_0-n-2} (\rho^{n-3})$ and we would not have been able to obtain this improvement via integration.}
    \begin{align} \label{eq.imp_w0_int}
        \wv^0 = o_{M_0-n-2}(\rho^n) \text{.}
    \end{align}
    Substituting \eqref{eq.int_rates} and \eqref{eq.imp_w2_int} into \eqref{eq.renorm_bianchi_2}, one has
    \begin{align}
        \li \paren{ \rho^{-1} \wv^1 } &= \rho^{-1} \cdot o_{M_0-n-3} (\rho^{n}) + \rho^{-1} \cdot o_{M_0-n-1} (\rho^n)\\
        &= o_{M_0-n-3} (\rho^{n-1}) \text{.} \notag
    \end{align}
    Again, a degree of vertical regularity has been lost due to the vertical derivative present in the right-hand side of \eqref{eq.renorm_bianchi_2}. Integrating from the boundary, we deduce \footnote{Again, it was crucial to derive \eqref{eq.imp_w2_int} first.}
    \begin{align} \label{eq.imp_w1_int}
        \rho^{-1} \cdot \wv^1 = o_{M_0-n-3}(\rho^n) \implies \wv^1 = o_{M_0-n-3}(\rho^{n+1}) \text{.}
    \end{align}
    Iterating the above process (that is, integrating \eqref{eq.ex_bi_2} followed by \eqref{eq.renorm_bianchi_3} and \eqref{eq.renorm_bianchi_2} using the improved rates \eqref{eq.imp_w2_int}, \eqref{eq.imp_w0_int} and \eqref{eq.imp_w1_int}) $i$ times, we deduce
    \begin{align}
        \wv^0 = o_{M_0-n-2i} ( \rho^{n-2+2i} ) \text{,} \qquad
        \wv^1 = o_{M_0-n-1-2i} ( \rho^{n-1+2i} ) \text{,} \qquad
        \wv^2 = o_{M_0-n-2i} ( \rho^{n-2+2i} ) \text{.}
    \end{align}
    At each iteration, additional orders of vanishing are exchanged for degrees of vertical regularity. By Lemma \ref{th.UC_van_rates}, for unique continuation to hold it will suffice for $n-2+2i \geq \kappa_{\text{max}}+2$, where
    \begin{align}
        \kappa_{\text{max}} :=
        \begin{cases}
            \frac{1}{2}\paren{n-2+\sqrt{n^2+4C_0}} \text{,}\qquad &C_0 \leq n+1 \text{,}\\
            \frac{1}{2}\paren{n-1+C_0} \text{,}\qquad &C_0 > n+1 \text{.}
        \end{cases}
    \end{align}
    In other words, we must iterate $i = \left \lceil{ \frac{1}{2} \paren{ \kappa_{\text{max}}-n+4 } }\right \rceil$ times. If we choose $M_0 = \left \lceil{ \kappa_{\text{max}} }\right \rceil + 6$ then this yields
    \begin{align}
        \wv^0 = o_{2} ( \rho^{\kappa_{\text{max}}+2} ) \text{,}\qquad
        \wv^1 = o_{1} ( \rho^{\kappa_{\text{max}}+3} ) \text{,}\qquad
        \wv^2 = o_{2} ( \rho^{\kappa_{\text{max}}+2} ) \text{,}
    \end{align}
    which in turn implies \eqref{eq.tilde_van_cond}, \eqref{eq.tilde_van_cond_1} and \eqref{eq.tilde_van_cond_2} as required.
\end{proof}

The following Proposition demonstrates that, for the result of Theorem \ref{th.primary} to hold, it is only necessary for the relevant boundary data conditions to hold in \textit{one} Fefferman-Graham coordinate system.

\begin{proposition} \label{th.gauge_invariance}
    Fix $n>2$ and $M_0 \geq n+2$. Suppose $(\mi{M},g)$ is an $(n+1)$--dimensional, $M_0$--regular vacuum FG-aAdS segment. The conditions \eqref{eq.g0_condition_final} and \eqref{eq.gn_condition_final} are invariant under coordinate transformations $\paren{\rho , x} \to \paren{\tilde{\rho},\tilde{x}}$ of the form
    \begin{align}
        \rho &= \tilde{\rho} \cdot e^{-\sigma(\tilde{x})} + \tilde{\rho}^2 \cdot \mf{a}^{(2)}(\tilde{x}) + \tilde{\rho}^3 \cdot \mf{a}^{(3)}(\tilde{x}) + \tilde{\rho}^3 \cdot \ms{s} \text{,}\\
        x^b &= \tilde{x}^b + \tilde{\rho} \cdot \mf{a}_{(1)}^b(\tilde{x}) + \tilde{\rho}^2 \cdot \mf{a}_{(2)}^b(\tilde{x}) + \tilde{\rho}^2 \cdot \ms{t}^b \text{,}
    \end{align}
    where $\sigma(\tilde{x})$ is $C^{M_0+1}$ on $\mi{I}$ and $\ms{s}$ and $\ms{t}$ are vertical (with respect to $\tilde{\rho}$) tensor fields satisfying
    \begin{align}
        \ms{s} \to^{M_0-1} \, 0 \text{,}\qquad \ms{t} \to^{M_0-1} \, 0 \text{,}
    \end{align}
    for which the Fefferman-Graham gauge \eqref{eq.FG_gauge} is preserved, i.e. for which there exists some vertical (with respect to $\tilde{\rho}$) tensor field $\tilde{\gv}$ such that
\begin{align}
    g = \tilde{\rho}^{-2} \brak{d \tilde{\rho}^2 + \tilde{\gv}_{ab} \, d\tilde{x}^a d \tilde{x}^b} \text{.}
\end{align}
\end{proposition}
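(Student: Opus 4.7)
The plan is to reduce this gauge-invariance statement to one about the vanishing rates of the vertical Weyl fields and then to transport those rates between coordinate systems using the tensorial character of the spacetime Weyl curvature $W$. Since the stated class of transformations is closed under inversion (the inverse transformation has the same structural form with $\sigma$ replaced by $-\sigma$ and similar adjustments to the subleading coefficients), it suffices to prove the forward implication. By Proposition \ref{th.van_iff_bc}, the conditions \eqref{eq.g0_condition_final}, \eqref{eq.gn_condition_final} on $(\gm^{(0)}, \gm^{(n)})$ are equivalent to the vanishing rates \eqref{eq.w0_final_rate}--\eqref{eq.w2_final_rate} on $\wv^0, \wv^1, \wv^2$; unpacking the definitions $\wv^0_{abcd} = \rho^2 W_{abcd}$, $\wv^1_{abc} = \rho^2 W_{\rho abc}$, $\wv^2_{ab} = \rho^2 W_{\rho a \rho b}$, these rates translate into $\varphi_\rho$-coordinate decay estimates
\begin{equation*}
    W_{abcd} = o(\rho^{n-4}), \qquad W_{\rho abc} = o(\rho^{n-3}), \qquad W_{\rho a \rho b} = o(\rho^{n-4}),
\end{equation*}
in an appropriate $C^K$ sense controlled by $M_0 - n$.

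The next step is to extract a crucial constraint on the transformation coefficients from the requirement that Fefferman--Graham gauge be preserved. Computing $\tilde{g}_{\tilde{\rho}\tilde{a}} = g_{\mu\nu}(\partial_{\tilde{\rho}}x^\mu)(\partial_{\tilde{a}}x^\nu)$ at leading order in $\tilde{\rho}$ and setting it to zero forces the leading coefficient $\mf{a}_{(1)}^b$ in the expansion of $x^b$ to vanish. The Jacobian $J^\mu_{\tilde{\nu}} := \partial_{\tilde{\nu}}x^\mu$ therefore satisfies
\begin{equation*}
    J^\rho_{\tilde{\rho}} = e^{-\sigma} + O(\tilde{\rho}), \qquad J^a_{\tilde{a}} = \delta^a_{\tilde{a}} + O(\tilde{\rho}), \qquad J^\rho_{\tilde{a}} = O(\tilde{\rho}), \qquad J^a_{\tilde{\rho}} = O(\tilde{\rho}),
\end{equation*}
with analogous bounds on derivatives of $J$ to the order allowed by the regularity assumptions on $\sigma$, $\ms{s}$, $\ms{t}$. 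In particular, $\rho$ and $\tilde{\rho}$ are comparable on approach to the boundary, so decay rates in the two parameters coincide.

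The main step is then to unpack the tensorial transformation $\tilde{W}_{\tilde{\al}\tilde{\be}\tilde{\ga}\tilde{\de}} = W_{\mu\nu\al\be} J^\mu_{\tilde{\al}} J^\nu_{\tilde{\be}} J^\al_{\tilde{\ga}} J^\be_{\tilde{\de}}$ and to check, for each of $\tilde{W}_{\tilde{a}\tilde{b}\tilde{c}\tilde{d}}$, $\tilde{W}_{\tilde{\rho}\tilde{a}\tilde{b}\tilde{c}}$ and $\tilde{W}_{\tilde{\rho}\tilde{a}\tilde{\rho}\tilde{b}}$, that every contribution obtained by running $\mu,\nu,\al,\be$ over all combinations of $\rho$- and $a$-components decays at least at the rate required in the new coordinates. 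The Jacobian bounds above, combined with the Weyl symmetries (which kill contributions containing $W_{\rho\rho\cdots}$), are what make this work. The sole potentially dangerous contribution is the one of the form $W_{a'abc} J^{a'}_{\tilde{\rho}}$ to $\tilde{W}_{\tilde{\rho}\tilde{a}\tilde{b}\tilde{c}}$, and this is saved precisely by the constraint $\mf{a}_{(1)}^{a'} = 0$ upgrading $J^{a'}_{\tilde{\rho}}$ from $O(1)$ to $O(\tilde{\rho})$. Applying Proposition \ref{th.van_iff_bc} in the new coordinate system then yields \eqref{eq.g0_condition_final}, \eqref{eq.gn_condition_final} for $(\tilde{\gm}^{(0)}, \tilde{\gm}^{(n)})$. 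I expect the main obstacle to be bookkeeping rather than anything conceptual: one needs to verify that the fixed $M_0$ is large enough, and that every term in the case analysis preserves full $C^K$ regularity (as required by $\to^{K}$) rather than merely pointwise decay, when the Jacobian derivatives distribute via Leibniz across the tensorial transformation.
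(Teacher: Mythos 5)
Your proposal is correct and follows essentially the same route as the paper: use Proposition \ref{th.van_iff_bc} to convert the boundary-data conditions into vanishing rates for the vertical Weyl fields, observe that preservation of the Fefferman--Graham gauge forces $\mf{a}_{(1)}^b = 0$ (and hence the key Jacobian bounds), transport the rates through the tensorial transformation of $W$, and apply Proposition \ref{th.van_iff_bc} again in the new coordinates. The paper simply records this as explicit transformation laws $\tilde{\wv}^0 = e^{2\sigma}\wv^0 + \dots$, $\tilde{\wv}^1 = e^{\sigma}\wv^1 + \dots$, $\tilde{\wv}^2 = \wv^2 + \dots$ with $\mi{O}_{M_0-2}$ error terms, which is the same component-wise bookkeeping you describe.
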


\begin{proof}
As in \cite{Skenderis_2001}, one may uniquely determine the coefficients $\mf{a}^{(k)}$ and $\mf{a}_{(k)}^b$ order-by-order. At leading-order one finds
\begin{align}
    \mf{a}^{(2)} &= 0 \text{,}\qquad \mf{a}^{(3)} = - \frac{1}{4} e^{-3\sigma} \gm^{bc} \Dm_b \sigma \, \Dm_c \sigma \text{,} \\
    \mf{a}_{(1)}^b &= 0 \text{,}\qquad \mf{a}_{(2)}^b = \frac{1}{2} e^{-2\sigma} \gm^{bc}\Dm_{c} \sigma \text{,}
\end{align}
One may use this to compute
\begin{align}
    \frac{\partial}{\partial\tilde{\rho}} = \frac{\partial \rho}{\partial \tilde{\rho}} \frac{\partial}{\partial \rho} + \frac{\partial x^b}{\partial \tilde{\rho}} \frac{\partial}{\partial x^b} \text{,}\qquad  \frac{\partial}{\partial\tilde{x}^a} = \frac{\partial \rho}{\partial \tilde{x}^a} \frac{\partial}{\partial \rho} + \frac{\partial x^b}{\partial \tilde{x}^a} \frac{\partial}{\partial x^b} \text{,}
\end{align}
and thus derive (to leading-order) the transformations of the vertical components of the Weyl curvature under the above change of coordinates. For example, one finds
\begin{align}
    \frac{\partial \rho}{\partial \tilde{x}^a} = \mi{O}_{M_0-2}(\rho)_a \text{,}\qquad \frac{\partial x^b}{\partial \tilde{x}^a} = \delta^b_a + \mi{O}_{M_0-2}(\rho^2)^b_a \text{,}
\end{align}
which implies that
\begin{align}
    \tilde{\wv}^0_{\tilde{a}\tilde{b}\tilde{c}\tilde{d}} &= \paren{ \rho e^\sigma + \mi{O}_{M_0-2}(\rho^2) }^2 \cdot W \paren{ \mi{O}_{M_0-2}(\rho)_a \frac{\partial}{\partial \rho} + \paren{ 1 + \mi{O}_{M_0-2}(\rho^2) } \frac{\partial}{\partial x^a}, \dots } \\
    &= e^{2\sigma} \wv^0_{abcd} + \mc{O}_{M_0-2}(\rho^2;\wv^0)_{abcd} + \mc{O}_{M_0-2}(\rho;\wv^1)_{abcd} + \mc{O}_{M_0-2}(\rho^2;\wv^2)_{abcd} \text{.} \notag
\end{align}
Similarly, one finds
\begin{align}\label{eq.w2_tilde}
    \tilde{\wv}^1_{\tilde{a}\tilde{b}\tilde{c}} &= e^\sigma \wv^1_{abc} + \mc{O}_{M_0-2}(\rho;\wv^0)_{abc} + \mc{O}_{M_0-2}(\rho^2;\wv^1)_{abc} + \mc{O}_{M_0-2}(\rho;\wv^2)_{abc} \text{,}\\
    \tilde{\wv}^2_{\tilde{a}\tilde{b}} &= \wv^2_{ab} +\mc{O}_{M_0-2}(\rho^2;\wv^0)_{ab}+\mc{O}_{M_0-2}(\rho;\wv^1)_{ab}+\mc{O}_{M_0-2}(\rho^2;\wv^2)_{ab} \text{.}
\end{align}
Applying Proposition \ref{th.van_iff_bc}, the conditions \eqref{eq.g0_condition_final} and \eqref{eq.gn_condition_final} imply \eqref{eq.w0_final_rate}, \eqref{eq.w1_final_rate} and \eqref{eq.w2_final_rate}. By the above, we may hence deduce the existence of vertical tensor fields $\tilde{\ms{r}}_0$, $\tilde{\ms{r}}_1$ and $\tilde{\ms{r}}_2$ for which
\begin{align}
    \tilde{\wv}^0 &= \tilde{\rho}^{n-2} \cdot \tilde{\ms{r}}_0 \text{,}\qquad \tilde{\ms{r}}_0 \to^{M_0-n} 0 \text{,} \label{eq.w0_tilde_final_rate}\\
    \tilde{\wv}^1 &= \tilde{\rho}^{n-1} \cdot \tilde{\ms{r}}_1 \text{,}\qquad \tilde{\ms{r}}_1 \to^{M_0-n-1} 0 \text{,} \label{eq.w1_tilde_final_rate}\\
    \tilde{\wv}^2 &= \tilde{\rho}^{n-2} \cdot \tilde{\ms{r}}_2 \text{,}\qquad \tilde{\ms{r}}_2 \to^{M_0-n} 0 \text{.} \label{eq.w2_tilde_final_rate}
\end{align}
Since Proposition \ref{th.van_iff_bc} also applies with respect to the new coordinate system, we may apply it to conclude that
\begin{align}
    &\text{$\tilde{\gm}^{(0)}$ is conformally flat, and}\\
    &\tilde{\gm}^{(n)}=
    \begin{cases}
        \frac{1}{4} \, \tilde{\gm}^{-1} \cdot \tilde{\mf{p}} \cdot \tilde{\mf{p}} \text{,}\qquad &n=4\text{,}\\
        0 \text{,}\qquad &n \neq 4\text{,}
    \end{cases}
\end{align}
as required.
\end{proof}


\appendix

\section{Proofs of Conversion Formulae} \label{sec.appendixA}

Given a mixed tensor field $\mathbf{A}$ of rank $( \kappa, \lambda; k, l )$, Propositions \ref{def.aads_vertical_connection} and \ref{thm.aads_mixed_connection} give the following formula for the mixed derivative in terms of $\varphi$- and $\varphi$-coordinates:
\begin{align}
    \label{eq.extra_deriv} \nablam_\gamma \mathbf{A}^{ \ix{\alpha} }{}_{ \ix{\beta} }{}^{ \ix{a} }{}_{ \ix{b} } &= \partial_\gamma ( \mathbf{A}^{ \ix{\alpha} }{}_{ \ix{\beta} }{}^{ \ix{a} }{}_{ \ix{b} } ) + \sum_{ i = 1 }^\kappa \Gamma^{ \alpha_i }_{ \gamma \delta } \, \mathbf{A}^{ \ixr{\alpha}{i}{\delta} }{}_{ \ix{\beta} }{}^{ \ix{a} }{}_{ \ix{b} } - \sum_{ j = 1 }^\lambda \Gamma^\delta_{ \gamma \beta_j } \, \mathbf{A}^{ \ix{\alpha} }{}_{ \ixr{\beta}{j}{\delta} }{}^{ \ix{a} }{}_{ \ix{b} } \\
    \notag &\qquad + \sum_{ i = 1 }^k \Gammav^{ a_i }_{ \gamma d } \, \mathbf{A}^{ \ix{\alpha} }{}_{ \ix{\beta} }{}^{ \ixr{a}{i}{d} }{}_{ \ix{b} } - \sum_{ j = 1 }^l \Gammav^d_{ \gamma b_j } \, \mathbf{A}^{ \ix{\alpha} }{}_{ \ix{\beta} }{}^{ \ix{a} }{}_{ \ixr{b}{j}{d} } \text{,}
\end{align}
where the Christoffel symbols
\begin{equation}
    \label{eq.extra_Gamma_def} \nabla_\alpha \partial_\beta := \Gamma^\gamma_{ \alpha \beta } \partial_\gamma \text{,} \qquad \Dvm_\alpha \partial_b := \bar{\Gamma}^c_{ \alpha b } \partial_c \text{,}
\end{equation}
are given by
\begin{align}
    \label{eq.extra_Gamma} \Gamma_{ \rho \rho }^\alpha = - \rho^{-1} \delta^\alpha{}_\rho \text{,} &\qquad \Gamma_{ a \rho }^\rho = 0 \text{,} \\
    \notag \Gamma_{ a \rho }^c = - \rho^{-1} \delta^c{}_a + \frac{1}{2} \gv^{ c d } \li \gv_{ a d } \text{,} &\qquad \Gamma_{ \rho a }^c - \Gammav_{ \rho a }^c = - \rho^{-1} \delta^c{}_a \text{,} \\
    \notag \Gamma_{ a b }^\rho = \rho^{-1} \gv_{ a b } - \frac{1}{2} \li \gv_{ a b } \text{,} &\qquad \Gamma_{ a b }^c - \Gammav_{ a b }^c = 0 \text{.}
\end{align}
 

\subsection{Proposition \ref{thm.prelim_comm}} \label{sec.st_vert_conversion_proof1}

Suppose that $\ms{A}$ has rank $( k, l )$. \eqref{eq.rho_deriv} gives, with respect to any coordinates $( U, \varphi )$ on $\mi{I}$,
\begin{align} \label{eq.prelim_comm_a}
    \bar{\Dv}_\rho \Dv_c \ms{A}^{ \ix{a} }{}_{ \ix{b} } &= \li \Dv_c \ms{A}^{ \ix{a} }{}_{ \ix{b} } - \frac{1}{2} \gv^{ d e } \li \gv_{ c d } \, \Dv_e \ms{A}^{ \ix{a} }{}_{ \ix{b} } + \frac{1}{2} \sum_{ i = 1 }^k \gv^{ a_i d } \li \gv_{ d e } \, \Dv_c \ms{A}^{ \ixr{a}{i}{e} }{}_{ \ix{b} } \\
    &\qquad - \frac{1}{2} \sum_{ j = 1 }^l \gv^{ d e } \li \gv_{ b_j d } \, \Dv_c \ms{A}^{ \ix{a} }{}_{ \ixr{b}{j}{e} } \text{,} \notag\\
    \Dv_c ( \bar{\Dv}_\rho \ms{A} )^{ \ix{a} }{}_{ \ix{b} } &= \Dv_c ( \li \ms{A} )^{ \ix{a} }{}_{ \ix{b} } + \frac{1}{2} \sum_{ i = 1 }^k \Dv_c ( \gv^{ a_i d } \li \gv_{ d e } \, \ms{A}^{ \ixr{a}{i}{e} }{}_{ \ix{b} } ) - \frac{1}{2} \sum_{ j = 1 }^l \Dv_c ( \gv^{ d e } \li \gv_{ b_j d } \, \ms{A}^{ \ix{a} }{}_{ \ixr{b}{j}{e} } ) \text{.}\notag
\end{align}
The difference of these equations reads
\begin{align}
    \label{eql.prelim_comm_0} \bar{\Dv}_\rho \Dv_c \ms{A}^{ \ix{a} }{}_{ \ix{b} } &= \Dv_c ( \bar{\Dv}_\rho \ms{A} )^{ \ix{a} }{}_{ \ix{b} } + \li \Dv_c \ms{A}^{ \ix{a} }{}_{ \ix{b} } - \Dv_c ( \li \ms{A} )^{ \ix{a} }{}_{ \ix{b} } - \frac{1}{2} \gv^{ d e } \li \gv_{ c d } \, \Dv_e \ms{A}^{ \ix{a} }{}_{ \ix{b} } \\
    \notag &\qquad - \frac{1}{2} \sum_{ i = 1 }^k \gv^{ a_i d } \Dv_c \li \gv_{ d e } \, \ms{A}^{ \ixr{a}{i}{e} }{}_{ \ix{b} } + \frac{1}{2} \sum_{ j = 1 }^l \gv^{ d e } \Dv_c \li \gv_{ b_j d } \, \ms{A}^{ \ix{a} }{}_{ \ixr{b}{j}{e} } \text{.}
\end{align}
We apply the commutation formula contained in \cite[Proposition 2.27]{Shao_2020} to obtain
\begin{align} \label{eqn.2_derivs}
    \bar{\Dv}_\rho \Dv_c \ms{A}^{ \ix{a} }{}_{ \ix{b} } &= \Dv_c ( \bar{\Dv}_\rho \ms{A} )^{ \ix{a} }{}_{ \ix{b} } - \frac{1}{2} \gv^{ d e } \li \gv_{ c d } \, \Dv_e \ms{A}^{ \ix{a} }{}_{ \ix{b} } + \frac{1}{2} \sum_{ i = 1 }^k \gv^{ a_i d } ( \Dv_e \li \gv_{ c d } - \Dv_d \li \gv_{ c e } ) \ms{A}^{ \ixr{a}{i}{e} }{}_{ \ix{b} } \\
    &\qquad - \frac{1}{2} \sum_{ j = 1 }^l \gv^{ d e } ( \Dv_{ b_j } \li \gv_{ c d } - \Dv_d \li \gv_{ c b_j } ) \ms{A}^{ \ix{a} }{}_{ \ixr{b}{j}{e} } \text{.}\notag
\end{align}
Combining \eqref{eqn.2_derivs} with \eqref{eq.prelim_regular} yields the identity \eqref{eq.prelim_comm}, as required.

Next, recalling Definition \ref{def.aads_mixed_wave} of $\Boxm$, we expand (partially in $\varphi_\rho$-coordinates)
\begin{align}
    \label{eql.prelim_comm_rho_0} \Boxm ( \rho^p \ms{A} ) &= \gv^{ \alpha \beta } \nablam_\alpha ( \rho^p \nablam_\beta \ms{A} + p \rho^{ p - 1 } \nablam_\beta \rho \cdot \ms{A} ) \\
    \notag &= \rho^p \Boxm \ms{A} + 2 p \rho^{ p - 1 } g^{ \alpha \beta } \nabla_\alpha \rho \, \Dvm_\beta \ms{A} + p ( p - 1 ) \rho^{ p - 2 } g^{ \alpha \beta } \nabla_\alpha \rho \nabla_\beta \rho \, \ms{A} + p \rho^{ p - 1 } \Box \rho \, \ms{A} \\
    \notag &= \rho^p \Boxm \ms{A} + 2 p \rho^{ p + 1 } \, \Dvm_\rho \ms{A} + p ( p - 1 ) \rho^p \, \ms{A} + p \rho^{ p - 1 } \Box \rho \, \ms{A} \text{.}
\end{align}
The formulae in \eqref{eq.extra_Gamma} imply that
\begin{align}\label{eq.box_rho}
    \Box \rho &= - \rho^2 \Gamma^\rho_{ \rho \rho } - \rho^2 \gv^{ a b } \Gamma^\rho_{ a b } \\
    &= - ( n - 1 ) \rho + \frac{1}{2} \rho^2 \gv^{ a b } \li \gv_{ a b } \text{.}\notag
\end{align}
Substituting \eqref{eq.box_rho} into \eqref{eql.prelim_comm_rho_0} yields
\begin{equation}\label{eq.box_rho_A}
    \Boxm ( \rho^p \ms{A} ) = \rho^p \Boxm \ms{A} + 2 p \rho^{ p + 1 } \, \Dvm_\rho \ms{A} - p ( n - p ) \rho^p \, \ms{A} + \frac{1}{2} p \rho^{ p + 1 } \gv^{ a b } \li \gv_{ a b } \, \ms{A} \text{.}
\end{equation}
Combining \eqref{eq.box_rho_A} with \eqref{eq.prelim_regular} immediately implies \eqref{eq.prelim_comm_rho}, as required.

\qed


\subsection{Proposition \ref{thm.prelim_decomp}} \label{sec.st_vert_conversion_proof2}

Assume all indices are with respect to $\varphi$- and $\varphi_\rho$-coordinates and let $\Gamma$ and $\Gammav$ be the corresponding Christoffel symbols defined \eqref{eq.extra_Gamma}.
For future convenience, we set $l := r_1 + r_2$ and define (via local coordinates) the vertical tensor fields
\begin{equation}
    \label{eql.prelim_decomp_0} \kv_{ a c } := \rho^{-1} \gv_{ a c } - \frac{1}{2} \li \gv_{ a c } \text{,} \qquad \kv^b{}_a := \rho^{-1} \delta^b{}_a - \frac{1}{2} \gv^{ b c } \li \gv_{ a c } \text{.}
\end{equation}
By the definitions of $\nabla$ and $\Dvm$, we have
\begin{align*}
    \nabla_\rho A_{ \ix{\rho} \ix{a} } &= \partial_\rho ( A_{ \ix{\rho} \ix{a} } ) - \sum_{ i = 1 }^{ r_1 } \Gamma^\rho_{ \rho \rho } \, A_{ \ix{\rho} \ix{a} } - \sum_{ j = 1 }^{ r_2 } \Gamma^b_{ \rho a_j } \, A_{ \ix{\rho} \ixr{a}{j}{b} } \text{,} \\
    \Dvm_\rho \ms{A}_{ \ix{a} } &= \partial_\rho ( \ms{A}_{ \ix{a} } ) - \sum_{ j = 1 }^{ r_2 } \Gammav^b_{ \rho a_j } \ms{A}_{ \ixr{a}{j}{b} } \text{.}
\end{align*}
Subtracting these equations and applying \eqref{eq.prelim_decomp_phi} and \eqref{eq.extra_Gamma} yields
\begin{align*}
    \nabla_\rho A_{ \ix{\rho} \ix{a} } &= \Dvm_\rho \ms{A}_{ \ix{a} } - \sum_{ i = 1 }^{ r_1 } \Gamma^\rho_{ \rho \rho } \, \ms{A}_{ \ix{a} } - \sum_{ j = 1 }^{ r_2 } ( \Gamma^b_{ \rho a_j } - \Gammav^b_{ \rho a_j } ) \ms{A}_{ \ixr{a}{j}{b} } \\
    &= \Dvm_\rho \ms{A}_{ \ix{a} } + r_1 \rho^{-1} \, \ms{A}_{ \ix{a} } + r_2 \rho^{-1} \, \ms{A}_{ \ix{a} } \\
    &= \Dvm_\rho \ms{A}_{ \ix{a} } + ( r_1 + r_2 ) \rho^{-1} \, \ms{A}_{ \ix{a} } \text{,}
\end{align*}
from which \eqref{eq.prelim_decomp_rho} follows.

Similarly, the definitions of $\nabla$ and $\Dvm$ imply
\begin{align*}
    \nabla_c A_{ \ix{\rho} \ix{a} } &= \partial_c ( A_{ \ix{\rho} \ix{a} } ) - \sum_{ i = 1 }^{ r_1 } \Gamma_{ c \rho }^b \, A_{ \ixr{\rho}{i}{b} \ix{a} } - \sum_{ j = 1 }^{ r_2 } \Gamma_{ c a_j }^\beta \, A_{ \ix{\rho} \ixr{a}{j}{\beta} } \text{,} \\
    \Dvm_c \ms{A}_{ \ix{a} } &= \partial_c ( \ms{A}_{ \ix{a} } ) - \sum_{ j = 1 }^{ r_2 } \Gammav_{ c a_j }^b \, \ms{A}_{ \ixr{a}{j}{b} } \text{.}
\end{align*}
Subtracting the above equations and recalling \eqref{eq.prelim_decomp_phi_rv}, \eqref{eq.prelim_decomp_phi_vr}, and \eqref{eq.extra_Gamma}, we obtain
\begin{align}
    \nabla_c A_{ \ix{\rho} \ix{a} } &= \Dvm_c \ms{A}_{ \ix{a} } - \sum_{ i = 1 }^{ r_1 } \Gamma_{ c \rho }^b \, A_{ \ixr{\rho}{i}{b} \ix{a} } - \sum_{ j = 1 }^{ r_2 } \Gamma_{ c a_j }^\rho \, A_{ \ix{\rho} \ixr{a}{j}{\rho} } \\
    &= \Dvm_c \ms{A}_{ \ix{a} } + \sum_{ i = 1 }^{ r_1 } \kv^b{}_c \, ( \ms{A}^\rho_i )_{ b \ix{a} } - \sum_{ j = 1 }^{ r_2 } \kv_{ c a_j } \, ( \ms{A}^v_j )_{ \ixd{a}{j} } \text{.} \label{eq.exact_vert_conv}
\end{align}
Combining the above with \eqref{eq.prelim_regular} and \eqref{eql.prelim_decomp_0} yields \eqref{eq.prelim_decomp_a}.

For \eqref{eq.box}, we start by computing $\rho$-derivatives.
By \eqref{eq.extra_deriv},
\begin{align}
    \label{eql.prelim_decomp_1} \nabla_{ \rho \rho } A_{ \ix{\rho} \ix{a} } &= \partial_\rho ( \nabla_\rho A_{ \ix{\rho} \ix{a} } ) - \Gamma^\rho_{ \rho \rho } \, \nabla_\rho A_{ \ix{\rho} \ix{a} } - \sum_{ i = 1 }^{ r_1 } \Gamma^\rho_{ \rho \rho } \, \nabla_\rho A_{ \ix{\rho} \ix{a} } - \sum_{ j = 1 }^{ r_2 } \Gamma^b_{ \rho a_j } \, \nabla_\rho A_{ \ix{\rho} \ixr{a}{j}{b} } \text{,} \\
    \notag &= \partial_\rho ( \nabla_\rho A_{ \ix{\rho} \ix{a} } ) + ( r_1 + 1 ) \rho^{-1} \, \nabla_\rho A_{ \ix{\rho} \ix{a} } - \sum_{ j = 1 }^{ r_2 } \Gamma^b_{ \rho a_j } \, \nabla_\rho A_{ \ix{\rho} \ixr{a}{j}{b} } \text{,}
\end{align}
Similarly, for the corresponding mixed derivatives, we apply \eqref{eq.extra_deriv} and compute
\begin{align}
    \label{eql.prelim_decomp_2} \rho^{ -l } \nablam_{ \rho \rho } ( \rho^l \ms{A} )_{ \ix{a} } &= \rho^{ -l } \partial_\rho [ \Dvm_\rho ( \rho^l \ms{A} )_{ \ix{a} } ] - \Gamma^\rho_{ \rho \rho } \, \rho^{ -l } \Dvm_\rho  ( \rho^l \ms{A} )_{ \ix{a} } - \sum_{ j = 1 }^{ r_2 } \Gammav^b_{ \rho a_j } \, \rho^{ -l } \Dvm_\rho ( \rho^l \ms{A} )_{ \ixr{a}{j}{b} } \\
    \notag &= \partial_\rho [ \rho^{ -l } \Dvm_\rho ( \rho^l \ms{A} )_{ \ix{a} } ] + ( l + 1 ) \rho^{-1} \, \rho^{ -l } \Dvm_\rho  ( \rho^l \ms{A} )_{ \ix{a} } - \sum_{ j = 1 }^{ r_2 } \Gammav^b_{ \rho a_j } \, \rho^{ -l } \Dvm_\rho ( \rho^l \ms{A} )_{ \ixr{a}{j}{b} } \text{,}
\end{align}
where we also applied \eqref{eq.extra_Gamma} and the properties contained in Proposition \ref{thm.aads_mixed_connection}.
Subtracting \eqref{eql.prelim_decomp_2} from \eqref{eql.prelim_decomp_1}, while applying both \eqref{eq.extra_Gamma} and \eqref{eq.prelim_decomp_rho}, we obtain that
\begin{align}\label{eql.prelim_decomp_10} 
    \nabla_{ \rho \rho } A_{ \ix{\rho} \ix{a} } &= \rho^{ -l } \nablam_{ \rho \rho } ( \rho^l \ms{A} )_{ \ix{a} } + ( r_1 - l ) \rho^{-1} \, \rho^{ -l } \Dvm_\rho ( \rho^l \ms{A} )_{ \ix{a} } - \sum_{ j = 1 }^{ r_2 } ( \Gamma^b_{ \rho a_j } - \Gammav^b_{ \rho a_j } ) \, \rho^{ -l } \Dvm_\rho ( \rho^l \ms{A} )_{ \ixr{a}{j}{b} } \\
    \notag &= \rho^{ -l } \nablam_{ \rho \rho } ( \rho^l \ms{A} )_{ \ix{a} } + ( r_1 - l ) \rho^{-1} \, \rho^{ -l } \Dvm_\rho ( \rho^l \ms{A} )_{ \ix{a} } - r_2 \rho^{-1} \, \rho^{ -l } \Dvm_\rho ( \rho^l \ms{A} )_{ \ix{a} } \\
    \notag &= \rho^{ -l } \nablam_{ \rho \rho } ( \rho^l \ms{A} )_{ \ix{a} } \text{.}
\end{align}
Next, we apply \eqref{eq.extra_deriv} to compute
\begin{align*}
    \nabla_{ b c } A_{ \ix{\rho} \ix{a} } &= \partial_b ( \nabla_c A_{ \ix{\rho} \ix{a} } ) - \Gamma^\alpha_{ b c } \, \nabla_\alpha A_{ \ix{\rho} \ix{a} } - \sum_{ i = 1 }^{ r_1 } \Gamma^d_{ b \rho } \, \nabla_c A_{ \ixr{\rho}{i}{d} \ix{a} } - \sum_{ j = 1 }^{ r_2 } \Gamma^\delta_{ b a_j } \, \nabla_c A_{ \ix{\rho} \ixr{a}{j}{\delta} } \text{,} \\
    \rho^{-l} \nablam_{ b c } ( \rho^l \ms{A} )_{ \ix{a} } &= \partial_b ( \Dvm_c \ms{A}_{ \ix{a} } ) - \Gamma^\alpha_{ b c } \, \rho^{-l} \Dvm_\alpha ( \rho^l \ms{A} )_{ \ix{a} } - \sum_{ j = 1 }^{ r_2 } \Gammav^d_{ b a_j } \, \Dvm_c \ms{A}_{ \ixr{a}{j}{d} } \text{.}
\end{align*}
Subtracting the two equations and recalling \eqref{eq.extra_Gamma} yields
\begin{align}\label{eql.prelim_decomp_11} 
    \nabla_{ b c } A_{ \ix{\rho} \ix{a} } &= \rho^{-l} \nablam_{ b c } ( \rho^l \ms{A} )_{ \ix{a} } + \partial_b ( \nabla_c A_{ \ix{\rho} \ix{a} } - \Dvm_c \ms{A}_{ \ix{a} } ) - \Gamma^\alpha_{ b c } [ \nabla_\alpha A_{ \ix{\rho} \ix{a} } - \rho^{-l} \Dvm_\alpha ( \rho^l \ms{A} )_{ \ix{a} } ] \\
    \notag &\qquad - \sum_{ i = 1 }^{ r_1 } \Gamma^d_{ b \rho } \, \nabla_c A_{ \ixr{\rho}{i}{d} \ix{a} } - \sum_{ j = 1 }^{ r_2 } \Gamma^\rho_{ b a_j } \, \nabla_c A_{ \ix{\rho} \ixr{a}{j}{\rho} } - \sum_{ j = 1 }^{ r_2 } \Gammav^d_{ b a_j } \, ( \nabla_c A_{ \ix{\rho} \ix{a} } - \Dvm_c \ms{A}_{ \ix{a} } ) \\
    \notag &:= \rho^{-l} \nablam_{ b c } ( \rho^l \ms{A} )_{ \ix{a} } + I_1 + I_2 + I_3 + I_4 + I_5 \text{.}
\end{align}
To simplify the upcoming computations, we define, for all $1 \leq i \leq r_1$ and $1 \leq j \leq r_2$, the vertical tensor fields $\ms{z}$, $\ms{z}^\rho_i$, $\ms{z}^v_j$---of ranks $( 0, r_2 + 1 )$, $( 0, r_2 + 2 )$, $( 0, r_2 )$, respectively---via the index formulae
\begin{align}\label{eql.prelim_decomp_20} 
    \ms{z}_{ c \ix{a} } &:= \nabla_c A_{ \ix{\rho} \ix{a} } - \Dvm_c \ms{A}_{ \ix{a} } \text{,} \\
    \notag ( \ms{z}^\rho_i )_{ c b \ix{a} } &:= \nabla_c A_{ \ixr{\rho}{i}{b} \ix{a} } - \Dvm_c ( \ms{A}^\rho_i )_{ b \ix{a} } \text{,} \\
    \notag ( \ms{z}^v_j )_{ c \ixd{a}{j} } &:= \nabla_c A_{ \ix{\rho} \ixr{a}{j}{\rho} } - \Dvm_c ( \ms{A}^v_j )_{ \ixd{a}{j} } \text{.}
\end{align}

Applying \eqref{eq.prelim_decomp_rho}, \eqref{eq.extra_Gamma}, and \eqref{eql.prelim_decomp_20} to the term $I_2$ from \eqref{eql.prelim_decomp_11}, we obtain
\begin{align*}
    I_2 &= - \kv_{ b c } [ \nabla_\rho A_{ \ix{\rho} \ix{a} } - \rho^{ -l } \Dvm_\rho ( \rho^l \ms{A} )_{ \ix{a} } ] - \Gamma^d_{ b c } ( \nabla_d A_{ \ix{\rho} \ix{a} } - \Dvm_d \ms{A}_{ \ix{a} } ) \\
    &= - \Gamma^d_{ b c } ( \nabla_d A_{ \ix{\rho} \ix{a} } - \Dvm_d \ms{A}_{ \ix{a} } ) \text{.}
\end{align*}
From \eqref{eql.prelim_decomp_11}, the first part of \eqref{eql.prelim_decomp_20}, and the above, we see that
\begin{equation}
    \label{eql.prelim_decomp_21} I_1 + I_2 + I_5 = \Dvm_b \ms{z}_{ c \ix{a} } \text{.}
\end{equation}
Similarly, for $I_3$ and $I_4$, we again apply \eqref{eq.extra_Gamma} and \eqref{eql.prelim_decomp_20}:
\begin{align}\label{eql.prelim_decomp_22} 
    I_3 &= \sum_{ i = 1 }^{ r_1 } \kv^d{}_b \, \Dvm_c ( \ms{A}^\rho_i )_{ d \ix{a} } + \sum_{ i = 1 }^{ r_1 } \kv^d{}_b \, ( \ms{z}^\rho_i )_{ c d \ix{a} } \text{,} \\
    \notag I_4 &= - \sum_{ j = 1 }^{ r_2 } \kv_{ a_j b } \, \Dvm_c ( \ms{A}^v_j )_{ \ixd{a}{j} } - \sum_{ j = 1 }^{ r_2 } \kv_{ a_j b } \, ( \ms{z}^v_j )_{ c \ixd{a}{j} } \text{.}
\end{align}

Now, recalling \eqref{eq.prelim_decomp_a}, along with \eqref{eql.prelim_decomp_20}, we deduce
\begin{align}\label{eql.prelim_decomp_30} 
    \ms{z}_{ c \ix{a} } &= \sum_{ i = 1 }^{ r_1 } \kv^e{}_c \, ( \ms{A}^\rho_i )_{ e \ix{a} } - \sum_{ j = 1 }^{ r_2 } \kv_{ a_j c } \, ( \ms{A}^v_j )_{ \ixd{a}{j} } \text{,} \\
    \notag ( \ms{z}^\rho_i )_{ c d \ix{a} } &= \sum_{ \substack{ 1 \leq j \leq r_1 \\ j \neq i } } \kv^e{}_c \, ( \ms{A}^{ \rho, \rho }_{ i, j } )_{ e d \ix{a} } - \sum_{ j = 1 }^{ r_2 } \kv_{ a_j c } \, ( \ms{A}^{ \rho, v }_{ i, j } )_{ d \ixd{a}{j} } - \kv_{ b c } \, \ms{A}_{ \ix{a} } \text{,} \\
    \notag ( \ms{z}^v_j )_{ c \ixd{a}{j} } &= \sum_{ i = 1 }^{ r_1 } \kv^e{}_c \, ( \ms{A}^{ \rho, v }_{ i, j } )_{ e \ixd{a}{j} } + \kv^e{}_c \, \ms{A}_{ \ixr{a}{j}{e} } - \sum_{ \substack{ 1 \leq i \leq r_2 \\ i \neq j } } \kv_{ a_i c } \, ( \ms{A}^{ v, v }_{ i, j } )_{ \ixd{a}{i,j} } \text{.}
\end{align}
Combining \eqref{eq.prelim_regular}, \eqref{eql.prelim_decomp_0}, \eqref{eql.prelim_decomp_21}, and the above, we conclude that
\begin{align}\label{eql.prelim_decomp_31} 
    I_1 + I_2 + I_5 &= \sum_{ i = 1 }^{ r_1 } [ \kv^e{}_c \, \Dvm_b ( \ms{A}^\rho_i )_{ e \ix{a} } + \Dvm_b \kv^d{}_c \, ( \ms{A}^\rho_i )_{ e \ix{a} } ] - \sum_{ j = 1 }^{ r_2 } [ \kv_{ a_j c } \, \Dvm_b ( \ms{A}^v_j )_{ \ixd{a}{j} } + \Dvm_b \kv_{ a_j c } \, ( \ms{A}^v_j )_{ \ixd{a}{j} } ] \\
    \notag &= \rho^{-1} \sum_{ i = 1 }^{ r_1 } \Dvm_b ( \ms{A}^\rho_i )_{ c \ix{a} } - \rho^{-1} \sum_{ j = 1 }^{ r_2 } \gv_{ a_j c } \, \Dvm_b ( \ms{A}^v_j )_{ \ixd{a}{j} } + \sum_{ i = 1 }^{ r_1 } \mi{O}_{ M - 2 } ( \rho; \Dvm \ms{A}^\rho_i )_{ c b \ix{a} } \\
    \notag &\qquad + \sum_{ j = 1 }^{ r_2 } \mi{O}_{ M - 2 } ( \rho; \Dvm \ms{A}^v_j )_{ c b \ix{a} } + \sum_{ i = 1 }^{ r_2 } \mi{O}_{ M - 2 } ( 1; \ms{A}^\rho_i )_{ c b \ix{a} } + \sum_{ j = 1 }^{ r_2 } \mi{O}_{ M - 2 } ( 1; \ms{A}^v_j )_{ c b \ix{a} } \text{.}
\end{align}
Similar computations using \eqref{eql.prelim_decomp_22} yield
\begin{align*}
    I_3 &= \sum_{ i = 1 }^{ r_1 } \kv^d{}_b \, \Dvm_c ( \ms{A}^\rho_i )_{ d \ix{a} } + 2 \sum_{ 1 \leq i < j \leq r_1 } \kv^d{}_b \kv^e{}_c \, ( \ms{A}^{ \rho, \rho }_{ i, j } )_{ e d \ix{a} } - \sum_{ i = 1 }^{ r_1 } \sum_{ j = 1 }^{ r_2 } \kv^d{}_b \kv_{ a_j c } \, ( \ms{A}^{ \rho, v }_{ i, j } )_{ d \ixd{a}{j} } - r_1 \kv^d{}_b \kv_{ d c } \, \ms{A}_{ \ix{a} } \\
    &= \rho^{-1} \sum_{ i = 1 }^{ r_1 } \Dvm_c ( \ms{A}^\rho_i )_{ b \ix{a} } - r_1 \rho^{-2} \gv_{ b c } \, \ms{A}_{ \ix{a} } + 2 \rho^{-2} \sum_{ 1 \leq i < j \leq r_1 } ( \ms{A}^{ \rho, \rho }_{ i, j } )_{ c b \ix{a} } - \rho^{-2} \sum_{ i = 1 }^{ r_1 } \sum_{ j = 1 }^{ r_2 } \gv_{ a_j c } \, ( \ms{A}^{ \rho, v }_{ i, j } )_{ b \ixd{a}{j} } \text{,} \\
    &\qquad + \sum_{ i = 1 }^{ r_1 } \mi{O}_{ M - 2 } ( \rho; \Dvm \ms{A}^\rho_i )_{ c b \ix{a} } + \mi{O}_{ M - 2 } ( 1; \ms{A} )_{ c b \ix{a} } + \sum_{ 1 \leq i < j \leq r_1 } \mi{O}_{ M - 2 } ( 1; \ms{A}^{ \rho, \rho }_{ i, j } )_{ c b \ix{a} } \\
    &\qquad + 2 \rho^{-2} \sum_{ i = 1 }^{ r_1 } \sum_{ j = 1 }^{ r_2 } \mi{O}_{ M - 2 } ( 1; \ms{A}^{ \rho, v }_{ i, j } )_{ c b \ixd{a}{j} } \text{,}
\end{align*}
and
\begin{align*}
    I_4 &= - \sum_{ j = 1 }^{ r_2 } \kv_{ a_j b } \, \Dvm_c ( \ms{A}^v_j )_{ \ixd{a}{j} } - \sum_{ i = 1 }^{ r_1 } \sum_{ j = 1 }^{ r_2 } \kv_{ a_j b } \kv^e{}_c \, ( \ms{A}^{ \rho, v }_{ i, j } )_{ e \ixd{a}{j} } - \sum_{ j = 1 }^{ r_2 } \kv_{ a_j b } \kv^e{}_c \, \ms{A}_{ \ixr{a}{j}{e} } \\
    &\qquad + 2 \sum_{ 1 \leq i < j \leq r_2 } \kv_{ a_i c } \kv_{ a_j b } \, ( \ms{A}^{ v, v }_{ i, j } )_{ \ixd{a}{i,j} } \\
    &= - \rho^{-1} \sum_{ j = 1 }^{ r_2 } \gv_{ a_j b } \, \Dvm_c ( \ms{A}^v_j )_{ \ixd{a}{j} } - \rho^{-2} \sum_{ i = 1 }^{ r_1 } \sum_{ j = 1 }^{ r_2 } \gv_{ a_j b } \, ( \ms{A}^{ \rho, v }_{ i, j } )_{ c \ixd{a}{j} } - \rho^{-2} \sum_{ j = 1 }^{ r_2 } \gv_{ a_j b } \, \ms{A}_{ \ixr{a}{j}{c} } \\
    &\qquad + 2 \rho^{-2} \sum_{ 1 \leq i < j \leq r_2 } \gv_{ a_i c } \gv_{ a_j b } \, ( \ms{A}^{ v, v }_{ i, j } )_{ \ixd{a}{i,j} } + \sum_{ j = 1 }^{ r_2 } \mi{O}_{ M - 2 } ( \rho; \Dv \ms{A}^v_j )_{ c b \ix{a} } \\
    &\qquad + \sum_{ i = 1 }^{ r_1 } \sum_{ j = 1 }^{ r_2 } \mi{O}_{ M - 2 } ( 1; \ms{A}^{ \rho, v }_{ i, j } )_{ c b \ix{a} } + \mi{O}_{ M - 2 } ( 1; \ms{A} )_{ c b \ix{a} } + \sum_{ 1 \leq i < j \leq r_2 } \mi{O}_{ M - 2 } ( 1; \ms{A}^{ v, v }_{ i, j } )_{ c b \ix{a} } \text{.}
\end{align*}

Finally, combining \eqref{eql.prelim_decomp_11}, \eqref{eql.prelim_decomp_31}, and the above, we obtain
\begin{align*}
    \gv^{ b c } \, \nabla_{ b c } A_{ \ix{\rho} \ix{a} } &= \rho^{-l} \gv^{ b c } \nablam_{ b c } ( \rho^l \ms{A} )_{ \ix{a} } + 2 \rho^{-1} \left[ \sum_{ i = 1 }^{ r_1 } \gv^{ b c } \Dvm_b ( \ms{A}^\rho_i )_{ c \ix{a} } - \sum_{ j = 1 }^{ r_2 } \Dvm_{ a_j } ( \ms{A}^v_j )_{ \ixd{a}{j} } - ( n r_1 + r_2 ) \rho^{-2} \, \ms{A}_{ \ix{a} } \right] \\
    &\qquad + 2 \rho^{-2} \left[ \sum_{ 1 \leq i < j \leq r_1 } \gv^{ b c } \, ( \ms{A}^{ \rho, \rho }_{ i, j } )_{ c b \ix{a} } - \sum_{ i = 1 }^{ r_1 } \sum_{ j = 1 }^{ r_2 } ( \ms{A}^{ \rho, v }_{ i, j } )_{ a_j \ixd{a}{j} } + \sum_{ 1 \leq i < j \leq r_2 } \gv_{ a_i a_j } \, ( \ms{A}^{ v, v }_{ i, j } )_{ \ixd{a}{i,j} } \right] \\
    &\qquad + \sum_{ i = 1 }^{ r_1 } \mi{O}_{ M - 2 } ( \rho; \Dvm \ms{A}^\rho_i )_{ \ix{a} } + \sum_{ j = 1 }^{ r_2 } \mi{O}_{ M - 2 } ( \rho; \Dvm \ms{A}^v_j )_{ \ix{a} } + \sum_{ i = 1 }^{ r_2 } \mi{O}_{ M - 3 } ( \rho; \ms{A}^\rho_i )_{ \ix{a} } \\
    &\qquad + \sum_{ j = 1 }^{ r_2 } \mi{O}_{ M - 3 } ( \rho; \ms{A}^v_j )_{ \ix{a} } + \mi{O}_{ M - 2 } ( 1; \ms{A} )_{ \ix{a} } + \sum_{ 1 \leq i < j \leq r_1 } \mi{O}_{ M - 2 } ( 1; \ms{A}^{ \rho, \rho }_{ i, j } )_{ \ix{a} } \\
    &\qquad + \sum_{ i = 1 }^{ r_1 } \sum_{ j = 1 }^{ r_2 } \mi{O}_{ M - 2 } ( 1; \ms{A}^{ \rho, v }_{ i, j } )_{ \ix{a} } + \sum_{ 1 \leq i < j \leq r_2 } \mi{O}_{ M - 2 } ( 1; \ms{A}^{ v, v }_{ i, j } )_{ \ix{a} } \text{.}
\end{align*}
\eqref{eq.box} now follows from \eqref{eql.prelim_decomp_10}, the above, and the fact that
\begin{equation*}
    \Box A = \rho^2 ( \nabla_{ \rho \rho } A + \gv^{ b c } \nabla_{ b c } A ) \text{,} \qquad \rho^{-1} \Boxm ( \rho^l \ms{A} ) = \rho^2 [ \rho^{-l} \nablam_{ \rho \rho } ( \rho^l \ms{A} ) + \gv^{ b c } \rho^{-l} \nablam_{ b c } ( \rho^l \ms{A} ) ] \text{.}
\end{equation*}


\section{$\ms{Q}^0$, $\ms{Q}^1$ and $\ms{Q}^2$} \label{sec.quadratic_terms}

In this section we derive the precise forms of the vertical tensor fields $\ms{Q}^0$, $\ms{Q}^1$, $\ms{Q}^2$ appearing in Proposition \ref{th.vert_wave_eqns}. Let $Q_{\al\be\ga\de}$ denote the right-hand side of \eqref{eq.stwave}:
\begin{align}
    Q_{\al\be\ga\de} :&= 4\W{^\la_\al^\mu_{[\de|}} \W{_\la_\be_\mu_{|\ga]}}-\W{^\la^\mu_\ga_\de}\W{_\al_\be_\la_\mu} \\
    &= g^{\la\nu}g^{\mu\sigma} \paren{4W_{\nu\al\sigma[\de|}W_{\la\be\mu|\ga]} - W_{\nu\sigma\ga\de} W_{\al\be\la\mu}} \nonumber\\
    &= \rho^4 \paren{ 4 W_{\rho \al \rho [\de|} W_{\rho \be \rho |\ga]} } \nonumber\\
    &\qquad + \rho^4\gv^{ef} \paren{ 4 W_{\rho \al f [\de|} W_{\rho \be e |\ga]} - W_{\rho f \ga \de} W_{\al \be \rho e} } \nonumber\\
    &\qquad + \rho^4 \gv^{ef} \paren{ 4 W_{f \al \rho [\de|} W_{e \be \rho |\ga]} - W_{f \rho \ga \de} W_{\al \be e \rho} } \nonumber\\
    &\qquad + \rho^4 \gv^{ef} \gv^{gh} \paren{ 4 W_{f \al h [\de|} W_{e \be g |\ga]} - W_{f h \ga \de} W_{\al \be e g} } \text{,}
\end{align}
where we have applied the Fefferman-Graham gauge condition \eqref{eq.FG_gauge}. $\ms{Q}^0$, $\ms{Q}^1$ and $\ms{Q}^2$ are given by
\begin{align}
    \ms{Q}^0_{abcd} = Q_{abcd} \text{,}\qquad \ms{Q}^1_{bcd} = Q_{\rho bcd} \text{,}\qquad \ms{Q}^2_{bd} = Q_{\rho b \rho d} \text{.}
\end{align}
In particular, one finds for $\ms{Q}^0$:
\begin{align}
    \ms{Q}^0_{abcd} &= 4 \wv^2_{a[d}\wv^2_{c]b} + 2 \gv^{ef} \brak{ 2 \wv^1_{[d|fa} \wv^1_{|c]eb} + 2 \wv^1_{af[d|} \wv^1_{be|c]} - \wv^1_{fcd} \wv^1_{eab} } \\
    &\qquad + \gv^{ef} \gv^{gh} \bigg[ 2 \paren{ \hat{\wv}^0_{fahd} - \frac{2}{n-2} \paren{ \gv_{f[h} \wv^2_{d]a} + \gv_{a[d} \wv^2_{h]f} } } \paren{ \hat{\wv}^0_{ebgc} + \frac{2}{n-2} \paren{ \gv_{e[g} \wv^2_{c]b} + \gv_{b[c} \wv^2_{g]e} } } \nonumber\\
    &\qquad\qquad - 2 \paren{ \hat{\wv}^0_{fahc} - \frac{2}{n-2} \paren{ \gv_{f[h} \wv^2_{c]a} + \gv_{a[c} \wv^2_{h]f} } } \paren{ \hat{\wv}^0_{ebgd} + \frac{2}{n-2} \paren{ \gv_{e[g} \wv^2_{d]b} + \gv_{b[d} \wv^2_{g]e} } } \nonumber\\
    &\qquad\qquad - \paren{ \hat{\wv}^0_{fhcd} - \frac{2}{n-2} \paren{ \gv_{f[c} \wv^2_{d]h} + \gv_{h[d} \wv^2_{c]f} } } \paren{ \hat{\wv}^0_{abeg} + \frac{2}{n-2} \paren{ \gv_{a[e} \wv^2_{g]b} + \gv_{b[g} \wv^2_{e]a} } } \bigg] \nonumber\\
    &= \mc{O}_{M_0-2} \paren{ 1; \hat{\wv}^0 }_{abcd} + \mc{O}_{M_0-3} \paren{ \rho; \wv^1 }_{abcd} + \mc{O}_{M_0-2} \paren{ 1; \wv^2 }_{abcd} \text{,} \nonumber
\end{align}
for $\ms{Q}^1$:
\begin{align}
    \ms{Q}^1_{bcd} &= -2 \gv^{ef} \brak{ \wv^1_{fcd} \wv^2_{be} + 2 \wv^2_{f[d} \wv^1_{c]eb} } \\
    &\qquad - \gv^{ef} \gv^{gh} \bigg[ 2 \wv^1_{fhd} \paren{ \hat{\wv}^0_{ebgc} - \frac{2}{n-2} \paren{ \gv_{e[g} \wv^2_{c]b} + \gv_{b[c} \wv^2_{g]e} } } \nonumber\\
    &\qquad\qquad - 2 \wv^1_{fhc} \paren{ \hat{\wv}^0_{ebgd} -\frac{2}{n-2} \paren{ \gv_{e[g} \wv^2_{d]b} + \gv_{b[d} \wv^2_{g]e} } } \nonumber\\
    &\qquad\qquad + \paren{ \hat{\wv}^0_{fhcd} -\frac{2}{n-2} \paren{ \gv_{f[c} \wv^2_{d]h} + \gv_{h[d} \wv^2_{c]f} } } \wv^1_{beg} \bigg] \nonumber\\
    &= \mc{O}_{M_0-3} \paren{ \rho; \hat{\wv}^0 }_{bcd} + \mc{O}_{M_0-2} \paren{ 1; \wv^1 }_{bcd} + \mc{O}_{M_0-3} \paren{ \rho; \wv^2 }_{bcd} \text{,} \nonumber
\end{align}
and, finally, for $\ms{Q}^2$:
\begin{align}
    \ms{Q}^2_{bd} &= -2 \gv^{ef} \wv^2_{fd} \wv^2_{be} + \gv^{ef} \gv^{gh} \bigg[ 2 \wv^1_{fhd} \wv^1_{geb} - \wv^1_{dfh} \wv^1_{beg} \\
    &\qquad\qquad - \wv^2_{fh} \paren{ \hat{\wv}^0_{ebgd} - \frac{2}{n-2} \paren{ \gv_{e[g} \wv^2_{d]b} + \gv_{b[d} \wv^2_{g]e} } } \bigg] \nonumber \\
    &= \mc{O}_{M_0-2} \paren{ 1; \hat{\wv}^0 }_{bd} + \mc{O}_{M_0-3} \paren{ \rho; \wv^1 }_{bd} + \mc{O}_{M_0-2} \paren{ 1; \wv^2 }_{bd} \text{,} \nonumber
\end{align}
in which we have applied \eqref{eq.w0_exp}, \eqref{eq.w1_exp} and \eqref{eq.w2_exp}.

\medskip

\printbibliography

\end{document}